\newif\iffullversion
\newif\ifdraft
\newif\ifanonymous
\let\hl\undefined 
\iffullversion\usepackage[ascii]{inputenc}\fi
\newcommand\fullshort[2]{#1}
\newcommand\fullshort[2]{#2}
\newcommand\fullonly[1]{\fullshort{#1}{}}
\newcommand\shortonly[1]{\fullshort{}{#1}}
\newcommand\sectionToToc{%
  \let\OLDsection\section
  \def\section*##1{%
    \OLDsection*{##1}%
    \addcontentsline{toc}{section}{##1}%
  }%
}
\newcommand\rhlstacktemplate[3]{\arraycolsep=0pt\,\, \begin{array}{l}
                                                       #2\\\hfill \overset{\scriptscriptstyle\mathsf{#1}}\sim #3
                                                     \end{array}\,}
\newcommand\rhltotstack{\rhlstacktemplate{tot}}
\newcommand\showafter[4]{%
  \ifnum\year>#1\relax
    #4
  \else
    \ifnum\year=#1\relax
      \ifnum\month>#2\relax
        #4%
      \else
        \ifnum\month=#2\relax
          \ifnum\day>#3\relax
            #4%
          \else
            \ch@removeduplicatespace
          \fi
        \else
          \ch@removeduplicatespace
        \fi
      \fi
    \else
      \ch@removeduplicatespace
    \fi
  \fi}
\newcommand\pagelabel[1]{\phantomsection\label{#1}}
\newcommand\rulestrut{}
\newcommand\rulespace{}
\newcommand\RULElabel[3][]{%
  \refstepcounter{rule}%
  \label{rule:#3-#2}%
  \index{#3@\textsc{#3} (rule)}%
  \addcontentsline{rule}{rule}{\textsc{#3}%
    \ifx\@nil#1\@nil\else\space -- #1\fi}}
\newcommand\implicitRULE[2]{\textsc{#2}\RULElabel{#1}{#2}}
\newcommand\RULE[5][]{%
  \inferrule[{\rulestrut{#3\ifx\norulelemmalink\undefined
    \fi}%
  \RULElabel[{#1}]{#2}{#3}%
  }]{#4}{#5}\rulespace%
}
\newcommand\RULEX[4][]{%
  \inferrule[\rulestrut{#2}%
      \ifx\@nil#1\@nil\else\space -- #1\fi]{#3}{#4}\rulespace%
}
\newcommand\refrule[2]{\hyperref[rule:#2-#1]{\hbox{\textsc{#2}}} rule}
\newcommand\ruleref[2]{\hyperref[rule:#2-#1]{rule \hbox{\textsc{#2}}}}
\newcommand\rulerefx[2]{\hyperref[rule:#2-#1]{\hbox{\textsc{#2}}}}
\newcommand\Ruleref[2]{\hyperref[rule:#2-#1]{Rule \hbox{\textsc{#2}}}}
  \newenvironment{proof}[1][]{%
    \begin{mdframed}[style=proofstyle]%
      \begin{trivlist}%
      \item \textit{Proof\ifx\proof#1\proof\else\ #1\fi. }%
      }{
        \qed\global\let\qed\qedmacro
      \end{trivlist}
    \end{mdframed}}%
\newcommand\qedboxclaim{\ensuremath\diamond}
\newcommand\qedbox{\m@th$\Box$}
\global\let\qed\qedmacro
\newcommand\skipqed{\global\let\qed\empty}
\newcommand\mathQED{\tag*{\qedbox}\skipqed}
\newcommand\txtrel[1]{\stackrel{\hskip-1in\text{\tiny#1}\hskip-1in}}
\newcommand\eqrefrel[1]{\txtrel{\eqref{#1}}}
\newcommand\starrel{\txtrel{$(*)$}}
\newcommand\starstarrel{\txtrel{$(**)$}}
\newcommand\tristarrel{\txtrel{$\scriptscriptstyle (\!*\!*\!*\!)$}}
\def\parenthesis@resizecommand@open{}
\def\parenthesis@resizecommand@close{}
\def\parenthesis@#1#2#3#4#5{%
  \global\let\parenthesis@resizecommand@open\empty
  \global\let\parenthesis@resizecommand@close\empty
  #1#3%
  {#5}%
  #2#4%
}
\DeclareRobustCommand\parenthesis[3]{%
  \expandafter\expandafter\expandafter
  \parenthesis@\expandafter\expandafter\expandafter{\expandafter\parenthesis@resizecommand@open\expandafter}\expandafter{\parenthesis@resizecommand@close}{#1}{#2}{#3}}
\def\double@parenthesis@#1#2#3#4#5#6#7#8#9{%
  \global\let\parenthesis@resizecommand@open\empty
  \global\let\parenthesis@resizecommand@close\empty
  #1#3{#4}#2#5%
  {#6}%
  #1#7{#8}#2#9%
}
\DeclareRobustCommand\doubleParenthesis[7]{%
  \expandafter\expandafter\expandafter
  \double@parenthesis@\expandafter\expandafter\expandafter{\expandafter\parenthesis@resizecommand@open\expandafter}\expandafter{\parenthesis@resizecommand@close}{#1}{#2}{#3}{#4}{#5}{#6}{#7}}
\DeclareRobustCommand\resize@next@paren[2]{%
  \gdef\parenthesis@resizecommand@open{#1}%
  \gdef\parenthesis@resizecommand@close{#2}%
}
\newcommand\paren{\parenthesis()}
\newcommand\braces{\parenthesis\{\}}
\newcommand\bracket{\parenthesis[]}
\newcommand\pb{\resize@next@paren\bigl\bigr}
\newcommand\pbb{\resize@next@paren\biggl\biggr}
\newcommand\plr{\resize@next@paren\left\right}
\newcommand\pB{\resize@next@paren\Bigl\Bigr}
\newcommand\pBb{\resize@next@paren\biggl\biggr}
\newcommand\bit{\{0,1\}}
\newcommand\fsq{\tfrac1{\sqrt2}}
\newcommand\yy{{\mathbf y}}
\newcommand\zz{{\mathbf z}}
\newcommand\YY{\mathbf Y}
\newcommand\ZZ{\mathbf Z}
\newcommand\oponp[2]{\paren{\opon{#1}{#2}}}
\newcommand\PB{\mathsf B}
\newcommand\PC{\mathsf C}
\newcommand\PD{\mathsf D}
\newcommand\bd{\pmb{\mathfrak d}}
\newcommand\applykw{\underline{\mathbf{apply}}}
\newcommand\ifskip[2]{\underline{\mathbf{if}}\ {#1}[#2]}
\newcommand\whilekw{\underline{\mathbf{while}}}
\newcommand\adjp[1]{\adj{\paren{#1}}}
\newcommand\restrictnbot{\restrict{\not\bot}}
\newcommand\simtemplate[1]{\overset{\scriptscriptstyle\mathsf{#1}}\sim}
\newcommand\rhltemplate[5]{\doubleParenthesis\{{#2}\}{\,#3\,\simtemplate{#1}\,#4\,}\{{#5}\}}
\newcommand\rhlany{\rhltemplate{any}}
\newcommand\rhlinformal{\rhltemplate{}}
\newcommand\couprel{=}
\newcommand\Mtruefalse[2]{{#1}_{#2}}
\newcommand\liftbbp[1]{\paren{\liftbb{#1}}}
\newtheorem{lemma}{Lemma}
\newtheorem{definition}{Definition}
\newcommand\delaytextsv[2]{#2}
\newcommand{\mathcolorbox}[2]{\mathchoice%
  {\colorbox{#1}{$\displaystyle#2$}}%
  {\colorbox{#1}{$\textstyle#2$}}%
  {\colorbox{#1}{$\scriptstyle#2$}}%
  {\colorbox{#1}{$\scriptscriptstyle#2$}}}%
\def\symbolindexmarkhighlight#1{%
  \setlength{\fboxsep}{2pt}%
  {\TextOrMath{\colorbox{gray!20}{#1}}{\mathcolorbox{gray!20}{#1}}}}
\title{Quantum Relational Hoare Logic with Expectations%
  \thanks{Presented at \href{http://easyconferences.eu/icalp2021/}{ICALP 2021} \cite{expectations-qrhl-icalp}}}
  \author{
    Yangjia Li\\\small University of Tartu\\\small Institute of Software, CAS, Beijing
    \and
    Dominique Unruh\\\small University of Tartu
  }
  \author{Yangjia Li}{University of Tartu, Estonia\and SKLCS, Institute of Software, CAS, Beijing, China}{yangjia.li@ut.ee}{}{}
  \author{Dominique Unruh}{University of Tartu, Estonia}{unruh@ut.ee}{https://orcid.org/0000-0001-8965-1931}{}
  \authorrunning{Y.\,Li and D.\,Unruh}
  \keywords{Quantum cryptography, Hoare logics, formal verification}
\begin{document}

\maketitle


\begin{abstract}
  We present a variant of the quantum relational Hoare logic from (Unruh, POPL 2019)
    that allows us to use ``expectations'' in pre- and postconditions.
    That is, when reasoning about pairs of programs, our logic allows us to
    quantitatively reason about how much certain pre-/postconditions
    are satisfied that refer to the relationship between the programs inputs/outputs.
\end{abstract}

{\renewcommand*{\multicolumntoc}{2}
\setlength{\columnseprule}{.4pt}
\tableofcontents}

\section{Introduction}

Relational Hoare logics (RHL) are logics that allow us to reason about the relationship between two programs.
Roughly speaking, they can express facts like ``if the variable $\xx$ in program $\bc$ is equal to $\xx$ in program $\bd$,
then after executing $\bc$ and $\bd$, respectively, the content of variable $\yy$ in program $\bc$ is greater than that of $\yy$ in $\bd$.''
RHL was introduced in the deterministic case by \cite{benton04relational}, and generalized to probabilistic programs by \cite{certicrypt} (pRHL)
and to quantum programs by \cite{qrhl} (qRHL).
RHLs have proven especially useful in the context of verification of cryptographic schemes.
For example, the CertiCrypt tool \cite{certicrypt,certicrypt-web} and its successor EasyCrypt \cite{easycrypt,fosad-easycrypt}
use pRHL to create formally verified cryptographic proofs.
And \cite{github-source} implements a tool for verifying quantum cryptographic proofs based on qRHL.

On the other hand, ``normal'' (i.e., not relational)
quantum Hoare logics have been developed in the quantum setting,
starting with the predicate transformers from
\cite{dhondt06weakest},
see \cite{feng07proof,ying12floyd,chadha06reasoning,kakutani09logic}.
Out of these, \cite{dhondt06weakest,feng07proof,ying12floyd}
use ``expectations'' instead of ``predicates'' for the pre- and postconditions
of the Hoare judgments.
To understand the difference, consider the case of classical probabilistic programs.
Here, a predicate is (logically equivalent to) a set of program states
(and a program state is a function from variables to values).
In contrast, an expectation is a function from program states to real numbers,
basically assigning a value to each program state.
Probabilistic Hoare logic with expectations, implicit in \cite{kozen83probabilistic},
uses expectations as the pre- and postconditions of a Hoare judgment.
Then, roughly speaking, the preexpectation tells us what the expected value of the postexpectation is
after running the program.
This can be used to express much more fine-grained properties of probabilistic programs,
giving quantitative guarantees about their probabilistic behavior,
instead of just qualitative (a certain final state can or cannot occur).
As \cite{dhondt06weakest} showed, the same approach can be used for quantum programs.
Here, an expectation is modeled by a self-adjoint operator $\PA$ on the space of all program states.
The ``value'' of a given program state $\rho$ is then computed as the trace $\tr\PA\rho$.
While at the first glance not as obvious as the meaning of classical expectations,
this formalism has nice mathematical properties and is also equivalent to taking
the expectation value of the outcome of a real-valued measurement.
By using this approach, \cite{dhondt06weakest,feng07proof,ying12floyd}
can express more fine-grained judgments about quantum programs,
by not just expressing which final states are possible, but also with what probabilities.

Yet, qRHL \cite{qrhl} did not follow this approach (only mentioning it as possible future work).
As a consequence, qRHL does not enable as fine-grained reasoning
about probabilities as the non-relational quantum Hoare logics.
On the other hand, the non-relational quantum Hoare logics
do not allow us to reason about the relationship between programs.

In this work, we combine the best of two worlds.
We present a variant of qRHL, expectation-qRHL,
that reasons about pairs of programs, and at the same time
supports expectations as the pre- and postconditions,
thus being as expressive as the calculi from \cite{dhondt06weakest,feng07proof,ying12floyd}
when it comes to the probabilistic behavior of the programs.

\paragraph{Related work.}
  The relevant prior work has already been discussed above.
  Concurrently and independently, \cite{theirs}
  presented a different formalization of expectation-qRHL.
  (The first versions on arXiv appeared within two months of each other.)
  The biggest difference is the definition of couplings
  which in our setting are separable quantum states, and in
  their setting nonseparable quantum states. Therefore, the soundness proofs are totally different in \cite{theirs} and in the present paper, even for the same rules.
  As a consequence, we can avoid having to reason about judgments with side-conditions, making compositional reasoning about more complex programs much easier\fullonly{ (see \autoref{sec:compare.theirs})}.
  \fullonly{A unique contribution of the present paper is that relational properties between non-terminating programs can be proved in our logic, while the logic in \cite{theirs} is mainly designed for terminating programs.} \fullonly{We give a detailed technical comparison with \cite{theirs} in \autoref{sec:compare.theirs}.}

\paragraph{Organization.} In \autoref{sec:var.mem.pred} we
  introduce notation and preliminaries, including the concept of
  expectations. In \autoref{sec:qprogs} we give syntax and semantics of
  the imperative quantum programming language that we study.
  In \autoref{sec:qrhl} we give the definition of expectation-qRHL.
  In \autoref{sec:soundness}, we give the soundness proofs of all our rules.
  In \autoref{sec:rules}, we present sound rules
    for reasoning about
    expectation-qRHL judgments. And in \autoref{sec:example}, we analyze the quantum Zeno
  effect as an example of using our logic.

\shortonly{In the full version, 
we give a detailed comparison of our logic with \cite{theirs} and full proofs of our results.}

\fullonly{  In \autoref{sec:compare.theirs}, we give a detailed comparison of our logic with \cite{theirs}.
    }
  
\section{Preliminaries: Variables, Memories, and Predicates}
\label{sec:var.mem.pred}

In this section, we introduce some fundamental concepts and notations
needed for this paper, and recap some of the needed quantum background
as we go along.
When introducing some notation $X$, the place of definition is marked like this: \symbolindexmarkhighlight{$X$}. \fullonly{All symbols are listed in the symbol index.\shortonly{\interfootnotelinepenalty=0 \footnote{Symbol index:
    \fboxsep=0pt
    \renewcommand\symbolindexpage[2]{\hyperref[#2]{p.#1}}%
    \renewcommand\symbolindexentry[4]{%
      \ifx\@nil#4\@nil\else
      \mbox{\colorbox{gray!20}{$#2$}\,#4}%
      \space
      \fi
    }%
  \renewenvironment{thesymbolindex}{}{}%
  \printsymbolindex
}}}
For further mathematical background we recommend \cite{conway97functional,conway00operator},
and for an introduction to quantum mechanics \cite{nielsenchuang-10year}.

\paragraph{Variables.}
Before we introduce the syntax and semantics of programs, we first
need to introduce some basic concepts. A
\emph{variable}\index{variable} is described by a variable name \symbolindexmark\xx{$\xx,\yy,\zz$}
that identifies the variable, and a nonempty type $T$.  The \emph{type}%
\index{type!(of a variable)} of $\xx$
is simply the nonempty set of all (classical) values the variable can
take. E.g., a variable might have type $\bit$,
or $\setN$.{\footnote{We
  stress that we do not assume that the type is a finite or even a
  countable set. Consequently, the Hilbert spaces considered in this
  paper are not necessarily finite dimensional or even
  separable. However, all results can be informally understood by
  thinking of all sets as finite and hence of all Hilbert spaces as
  $\setC^N$
  for suitable $N\in\setN$.}}
Lists\pagelabel{page:variable.conventions} or sets of variables will be denoted
\symbolindexmark\XX{$\XX,\YY,\ZZ$}. 
Given a list
$\XX=\xx_1\dots\xx_n$
of variables, we say its \emph{type} \index{type!(of a list of
  variables)} is $T_{1}\times\dots\times T_{n}$
if $T_{i}$ is the type of $\xx_i$.
We write \symbolindexmark\varconcat{$\varconcat{\XX\YY}$} for the concatenation/disjoint union
of lists/sets of variables $\XX,\YY$.

\paragraph{Memories and quantum states.}
An \emph{assignment}\index{assignment} assigns to each variable a
classical value. Formally, for a set $\XX$,
the \emph{assignments over $\XX$}
are all functions \symbolindexmark\mm{$\mm$}
with domain $\XX$
such that: for all $\xx\in\XX$
with type $T_\xx$,
$\mm(\xx)\in T_\xx$.
That is, assignments can represent the content of classical memories.

To model quantum memories, we simply consider superpositions of
assignments: A \emph{(pure) quantum memory}%
\index{pure quantum memory}%
\index{quantum memory!(pure)}%
\index{memory!(pure) quantum} is a superposition of
assignments. Formally, \symbolindexmark\elltwov{$\elltwov{\XX}$},
the set of all quantum memories over ${\XX}$,
is the Hilbert space with basis{\footnote{When we say ``basis'', we
  always mean an orthonormal Hilbert-space basis.}} $\{\ket\mm\}_\mm$
where $\mm$
ranges over all assignments over~${\XX}$. Here \symbolindexmark\ket{$\ket\mm$} simply denotes the
basis vector labeled $\mm$. We often write $\ket\mm_{{\XX}}$ to stress
which space we are talking about.
We call a quantum memory $\psi$ \emph{normalized}\index{normalized} iff $\norm{\psi}=1$.
Intuitively, a normalized quantum memory over ${\XX}$
represents a state a quantum computer with variables ${\XX}$
could be in.  We also consider quantum states over
arbitrary sets $X$ (as opposed to sets of assignments).
Namely, \symbolindexmark\elltwo{$\elltwo X$}
denotes the Hilbert space with orthonormal basis $\{\ket x\}_{x\in X}$.
(In that notation, $\elltwov{\XX}$
is simply $\elltwo A$
where $A$
is the set of all assignments on ${\XX}$.)
Normalized elements of $\elltwov\XX$
represent quantum states.

We often treat elements of $\elltwo T$ and $\elltwov{\XX}$ interchangeably if $T$ is
the type of ${\XX}$ since there is a natural isomorphism between those spaces.

In many situations, we additionally need an additional symbol
  $\bot$
  that denotes that a memory is not available because the program did
  not terminate. A \emph{quantum $\bot$-memory
    over $\XX$}%
  \index{quantum $\bot$-memory}%
  \index{$\bot$-memory!quantum}
  is an element of $\symbolindexmark\elltwovb{\elltwovb\XX}:=\elltwo{A\cup\{\bot\}}$
  where $A$
  is the set of all assignments on $\XX$.
  That is, a quantum $\bot$-memory is a superposition between a quantum memory and $\ket\bot$.

The tensor product \symbolindexmark\tensor{$\tensor$} combines two quantum states
$\psi\in\elltwo X,\phi\in\elltwo Y$ into a joint system
$\psi\otimes\phi\in\elltwo{X\times Y}$. In the case of quantum
memories $\psi,\phi$ over ${\XX},{\YY}$, respectively,
$\psi\otimes\phi\in\elltwov{{\XX\YY}}$.
(And in this case, $\psi\otimes\phi=\phi\otimes\psi$ since we are composing
``named'' systems.)

We will need to consider entangled pairs of memories. Specifically,
  a \emph{quantum bimemory}%
  \index{quantum bimemory}%
  \index{bimemory!quantum} over $\XX_1,\XX_2$
  is an element of $\elltwov{\XX_1}\otimes\elltwov{\XX_2}=\elltwov{\XX_1\XX_2}$.
  Similarly, a \emph{quantum $\bot$-bimemory}%
  \index{quantum $\bot$-bimemory}%
  \index{$\bot$-bimemory!quantum}
  is an element of
  $\elltwovb{\XX_1}\otimes\elltwovb{\XX_2}$,
  i.e., a tensor product of two quantum $\bot$-memories.
  (Note: ``one-sided-$\bot$'' states such as $\ket\mm\otimes\ket{\bot}$ are included in this.)
  For clarity, we often write $\ket{\bot_1},\ket{\bot_2}$ instead of $\bot$
  to emphasize whether we are talking about elements of
  $\elltwovb{\XX_1}$ or $\elltwovb{\XX_2}$.

For a vector (or operator) $a$, we write \symbolindexmark\adj{$\adj
  a$} for its adjoint.  (In the finite dimensional case, the adjoint
is simply the conjugate transpose of a vector/matrix. The literature
also knows the notation $a^\dagger$.)
The adjoint of a vector $\ket x$ is also written as \symbolindexmark\bra{$\bra x$}.
We abbreviate \symbolindexmarkonly\proj$\symbolindexmarkhighlight{\proj{\psi}}:=\psi\adj\psi$.
This is the projector onto $\psi$ when $\norm\psi=1$.

\paragraph{Mixed quantum memories.}
In many situations, we need to model probabilistic quantum states
(e.g., a quantum state that is $\ket0$
with probability $\frac12$
and $\ket1$
with probability $\frac12$). This is modeled using \emph{mixed
  states}%
\index{mixed state}%
\index{state!mixed} (a.k.a.~\emph{density operators}%
\index{density operator}%
\index{operator!density}). Having normalized state $\psi_i$
with probability $p_i$
is represented by the operator $\rho:=\sum_i p_i\proj{\psi_i}$.\footnote{Mathematically,
  density operators are positive Hermitian trace-class operators on
  $\elltwo X$.
  The requirement ``trace-class'' ensures that the trace exists and
  can be ignored in the finite-dimensional case.}\fullonly{\,\footnote{Sums without
    index set are always assumed to have an arbitrary (not necessarily finite or even
    countable) index set.  In the case of sums of vectors in a Hilbert space,
    convergence is with respect to the Hilbert space norm, and in the case of sums
    of positive operators, the convergence is with respect to the Loewner order.}}
In particular, $\proj{\psi}$ is the density operator of a pure quantum state $\psi$. Then $\rho$
encodes all observable information about the distribution of the
quantum state (that is, two distributions of quantum states have the
same~$\rho$
iff they cannot be distinguished by any physical process).  And
$\tr\rho$
is the total probability $\sum_ip_i$.
Note that we do not formally impose the condition $\tr\rho=1$ or $\tr\rho\leq 1$
  unless explicitly specified.
  We call a mixed state \emph{normalized}\index{normalized} iff $\tr\rho=1$.
We will often need
to consider mixed states of quantum memories (i.e., mixed states with
underlying Hilbert space $\elltwov{\XX}$). We call them
\emph{mixed (quantum) memories} over ${\XX}$.%
\index{mixed (quantum) memory}%
\index{quantum memory!mixed}%
\index{memory!mixed (quantum)}
Analogously, we define
  \emph{mixed bimemories}%
  \index{mixed bimemory}%
  \index{bimemory!mixed}
  and 
  \emph{mixed $\bot$-bimemories}%
  \index{mixed $\bot$-bimemory}%
  \index{$\bot$-bimemory!mixed}
  as mixed states of quantum ($\bot$-)bimemories.

Let \symbolindexmarkonly\Pbot\pagelabel{page:def:Pbot}$\symbolindexmarkhighlight\Pbot:=\proj{\ket\bot}$ and \symbolindexmarkonly\Pnbot$\symbolindexmarkhighlight\Pnbot:=\id-\Pbot$. We can easily access the terminating and non-terminating part of a mixed $\bot$-memory:
\symbolindexmarkonly\restrictbot$\symbolindexmarkhighlight{\restrictbot(\rho)}:=\Pbot\rho\Pbot$
and
$\symbolindexmarkhighlight{\restrictnbot(\rho)}:=\Pnbot\rho\Pnbot$
are the memory $\rho$ after measuring that we have termination
or do not have termination, respectively.
  
For a mixed ($\bot$-)bimemory $\rho$
over ${\XX_1}{\XX_2}$ the \emph{partial trace}%
\index{trace!partial}%
\index{partial trace}
\symbolindexmark\partr{$\partr{i}\rho$}\label{page:partr} ($i=1,2$)
is the result of throwing away the left/right memory
(i.e., it is a mixed memory over ${\XX_i}$).
Formally, $\partr{i}$ is defined as the continuous linear function satisfying
$\partr{1}(\sigma\otimes\tau):=\tau\cdot\tr\sigma$,
$\partr{2}(\sigma\otimes\tau):=\sigma\cdot\tr\tau$.

A mixed memory $\rho$
is \emph{$({\XX},{\YY})$-separable}\index{separable}
(i.e., not entangled between ${\XX}$ and ${\YY}$) iff it can be written as
$\rho=\sum_i\rho_i\otimes\rho_i'$ for mixed memories
$\rho_i,\rho_i'$ over ${\XX},{\YY}$, respectively. (Potentially infinite sum.)
When $\XX,\YY$ are clear from the context, we simply say \emph{separable}.\footnote{%
  Note that this is not the only possible definition of separability.
  (And maybe, from a physics point of view, not even the most natural one.)
  For example, \cite{holevo05entanglement} show that the set of separable states as defined here is not closed w.r.t.~the trace-norm.
  Instead, they define the set of separable states as the closure of the set we consider.
  (This is, roughly speaking, equivalent to taking an integral instead of the sum in 
  $\sum_i\rho_i\otimes\rho_i'$.)
  However, our definition of separability works well in our setting.
  We do not know whether the definition from \cite{holevo05entanglement} would give us the same set of reasoning rules.}
In this paper, when we write infinite sums of operators, convergence is always
  with respect to the trace norm: $\Vert A\Vert_{\tr}:=\tr\sqrt{A\adj A}$. (In the finite-dimensional case, the choice of norm
  is irrelevant since all norms are equivalent then.)

\paragraph{Operations on quantum states.}
An operation in a closed quantum system is modeled by an isometry $U$
on $\elltwo X$.\footnote{That
  is, a norm-preserving linear operation. Often, one models quantum
  operations as unitaries instead because in the finite-dimensional
  case an isometry is automatically unitary. However, in the
  infinite-dimensional case, unitaries are unnecessarily
  restrictive. Consider, e.g., the isometry $\ket i\mapsto\ket{i+1}$
  with $i\in\setN$
  which is a perfectly valid quantum operation but not a unitary.} If
we apply such an operation on a mixed state $\rho$,
the result is $U\rho \adj U$. In particular, denote by \symbolindexmark\id{$\id$} the identity operation, i.e. $\id\psi=\psi$ for all pure states $\psi$ in this space.

An operator $A$ on $\elltwov\XX$ can be interpreted as an operator
  on  $\elltwovb\XX$ by setting $A\ket\bot:=0$.
  To avoid confusion, we often write \symbolindexmark\liftb{$\liftb A$} for the operator on $\elltwovb\XX$.
  Similarly, an operator $A$ on $\elltwov{\XX_1}\otimes\elltwov{\XX_2}$ can be seen
  as an operator on  $\elltwovb{\XX_1}\otimes\elltwovb{\XX_2}$,
  we write \symbolindexmark\liftbb{$\liftbb A$}\pagelabel{page:liftbb} for the operator on  $\elltwovb{\XX_1}\otimes\elltwovb{\XX_2}$.

Most often,
isometries will occur in the context of operations that are performed
on a single variable or list of variables, i.e., an isometry $U$
on $\elltwov{\XX}$.
Then $U$
can also be applied to $\elltwov{\YY}$
with ${\YY}\supseteq{\XX}$:
we identify $U$
with $U\otimes\id_{{\YY}\setminus{\XX}}$.
Furthermore, if ${\XX}$
has type $T$,
then an isometry $U$
on $\elltwo T$
can be seen as an isometry on $\elltwov{\XX}$
since we identify $\elltwo T$ and $\elltwov{\XX}$. If we want to make
$\XX$ explicit, we write \symbolindexmark\opon{$\opon
  U{\XX}$}\label{page:opon} for the isometry $U$ on $\elltwov{\YY}$.
For example, if $U$ is a $2\times 2$-matrix and $\xx$ has type bit,
then $\opon U\xx$ can be applied to quantum memories over $\xx\yy$,
acting on $\xx$ only.
This notation is not limited to isometries, of course, but applies
to other operators, too.
(By \emph{operator}\index{operator}\pagelabel{page:bounded} we always mean a bounded linear operator in this paper.)

  In slight overloading of notation, we also write $\opon U\XX$ for $U$
  acting on $\bot$-memories, where $\oponp U\XX\ket\bot=0$.
  (That is, $\opon U\XX$ is short for the more precise $\liftb{\oponp U\XX}$.)
  We also write $\opon U{\XX_1}$ for $U$ acting on $\bot$-bimemories.
  In this case, we simply have $\opon U{\XX_1}:=\oponp U{\XX_1}\otimes\id$.
  In particular, $\oponp U{\XX_1}\paren{\ket\mm\otimes\ket\bot}
  = \oponp U{\XX_1}\ket\mm\otimes\ket\bot$ but
  $\oponp U{\XX_1}\paren{\ket\bot\otimes\ket\mm}=0$.
  Analogously for $\opon U{\XX_2}$.

We will use only binary measurements in this paper. A \emph{binary measurement}%
  \index{measurement!binary}\index{binary measurement} $M$ on $\elltwov\XX$
  has outcomes $\true,\false$ and is described by two bounded operators
  \symbolindexmark\Mtrue{$\Mtrue M$},
  \symbolindexmark\Mfalse{$\Mfalse M$} on $\elltwov\XX$ that satisfy $\adj{\Mtrue M}{\Mtrue M}+\adj{\Mfalse M}{\Mfalse M}=\id$,
  its \emph{Kraus operators}.%
  \index{Kraus operator}\index{operator!Kraus}
  Given a mixed memory $\rho$, the probability of measurement outcome $t$ is
  $p_t:=\tr \Mtruefalse Mt\rho\adj{\Mtruefalse Mt}$, and the post-measurement state is
  $\Mtruefalse Mt\rho\adj{\Mtruefalse Mt}/p_t$.

\paragraph{Expectations.}   In this work,
we will use expectations as pre- and postconditions in Hoare
judgments.  The idea of using expectations originated in \cite{kozen83probabilistic}
for reasoning about (classical) probabilistic programs.
Intuitively, an expectation is a quantitative predicate, that is
for any memory (or bimemory, in our case), it does not tell us whether the memory satisfies the predicate
but \emph{how much} it satisfies the predicate.
Thus, classically, an expectation is simply a function from assignments to reals.
By analogy, in the quantum setting, one might want to define expectations, e.g., as functions $f$ from
quantum bimemories to reals (i.e., an expectation would be a function $\elltwov\XX\to\setRpos$).
However, such expectations might behave badly, for example, it is not clear
how to compute the expected value $f(\psi)$ for a random $\psi$ if
the distribution of $\psi$ is given in terms of a density operator.
A better approach was introduced by \cite{dhondt06weakest}. Following their approach, we
define an \emph{expectation}\index{expectation} as a positive operator $\PA$ on quantum bimemories.\footnote{%
  Recall from \autopageref{page:bounded} that operators are always bounded in our context.
  This means that $\PA$ is bounded, too. 
  This means that the values that an expectation $\PA$ can assign
  to states are between $0$ and $B$ for some finite $B$.}
(We use letters \symbolindexmark\PA{$\PA,\PB,\PC,\dots$} for expectations in this paper.)
This expectation then assigns the value $\adj\psi\PA\psi$ to the quantum memory $\psi$
(equivalently, $\tr\PA\,\proj\psi$).
To understand this, it is best to first look at the special case where $\PA$ is a projector.
Then $\adj\psi\PA\psi=1$ iff $\psi$ is in the image of $\PA$, and 
$\adj\psi\PA\psi=0$ iff $\psi$ is orthogonal to the image of $\PA$.
Such an $\PA$ is basically a predicate (by outputting $1$ for states that satisfy the predicate).
Of course, states that neither satisfy the predicate nor are orthogonal to it
will output a value between $0$ and $1$.
Any expectation $\PA$ can be written as $\sum_i p_i\PA_i$ with projectors $\PA_i$.
Thus, $\PA$ would give $p_i$ ``points'' for satisfying the predicate $\PA_i$.
In this respect, expectations in the quantum setting are similar to classical ones:
classical expectations give a certain amount of ``points'' for each possible classical
input.

The nice thing about this formalism is that, given a density operator $\rho=\sum p_i\proj{\psi_i}$,
we can easily compute the expected value of the expectation $\PA$.
More precisely, the expected value of $\adj\psi\PA\psi=\tr\PA\,\proj\psi$ with $\psi:=\psi_i$ with probability $p_i$.
That expected value is $\sum p_i\tr\PA\,\proj{\psi_i} = \tr\PA\paren{\sum p_i\proj{\psi_i}} = \tr\PA\rho$.
This shows that we can evaluate how much a density operator satisfies the expectation $\PA$ by just
computing $\tr\PA\rho$. This formula will be the basis for our definitions!

\fullonly{(A note for physicists: an expectation $\PA$ in our setting is nothing else but an observable,
and $\tr\PA\rho$ is the expected value of the outcome of measuring the observable $\PA$ when the system is
in state $\rho$.)}

Analogously, we define \emph{$\bot$-expectations}%
  \index{$\bot$-expectation}\pagelabel{page:bot-expectation}
  on quantum $\bot$-bimemories.
  However, we add one restriction: The value of a $\bot$-expectation
  should not change if we measure whether the $\bot$-bimemory
  is in $\ket\bot$
  or not. Formally, a $\bot$-expectation
  is a positive operator on quantum $\bot$-bimemories that is invariant under $\Ebot\otimes\Ebot$
  where
  \symbolindexmarkonly\Ebot\pagelabel{page:def:Ebot}$\symbolindexmarkhighlight{\Ebot}(\rho):=
  \Pbot \rho \Pbot +
  \Pnbot \rho \Pnbot$.
  ($\Ebot$ corresponds to measuring and forgetting whether a given mixed $\bot$-memory is $\ket\bot$ or not.)
Note that for an expectation $\PA$, the operator $\liftbb \PA$ is a $\bot$-expectation.
  We can thus see expectations as special cases of $\bot$-expectations.

A very simple example of an expectation would be the matrix
$\PA:=\scriptscriptstyle
\begin{pmatrix}
  1\\&\frac12
\end{pmatrix}
$ that assigns $1$ to $\ket0$, and $\frac12$ to $\ket1$.
And given the density operator $\rho=\frac12\id$ (representing a uniform qubit),
$\tr\PA\rho=\frac34$ are intuitively expected.

\fullonly{\paragraph{Free variables.}
Given an expectation $\PA$,
we will often wish to indicate which variables it talks about, i.e.,
what are its \emph{free variables}%
\index{free variables}%
\index{variables!free}. Since our definition of expectations is semantic
(i.e., we are not limited to expectations expressed using a particular syntax) we cannot simply speak about the variables occurring in the
expression describing $\PA$.
Instead, we say $\PA$
contains only variables from ${\YY}$
(written: \symbolindexmarkonly\fv$\symbolindexmarkhighlight{\fv(\PA)}\subseteq{\YY}$)
iff there exists an expectation $\PA'$ over $\YY$
such that $\PA=\PA'\otimes\id$.
Note that there is a certain abuse of notation here: We formally
defined ``$\fv(\PA)\subseteq{\YY}$'',
but we do not define $\fv(\PA)$;
$\fv(\PA)\subseteq{\YY}$
should formally just be seen as an abbreviation for
``there exists $\PA$ over $\YY$ such that $\PA=\PA'\otimes\id$''.\fullonly{\footnote{In
  fact, defining $\fv(\PA)$
  is possible only if there is a smallest set ${\YY}$
  such that $\exists \PA'.\ \PA=\PA'\otimes\id$.
  This is not necessarily the case. For example:
  Let $\bullet_\xx$ denote an arbitrary element of the type of $\xx$ for all variables~$\xx$.
  For a set $\XX$ of variables,
  let $\PA_{\XX}\ket \mm:=\ket\mm$
  for all assignments $\mm$ over $\XX$ where $\mm(\xx)\neq\bullet_{\xx}$    only for finitely
  many $\xx$. Let $\PA_{\XX}\ket \mm:=0$ otherwise.
  Then $\PA_{\XX}=\PA_{\YY}\otimes\id$ for all co-finite $\YY\subseteq\XX$.
  But for any non-co-finite $\YY\subseteq\XX$, $\PA_{\XX}\neq \PB\otimes\id$ for all $\PB$ over $\YY$.
  So $\fv(\PA_{\XX})$ would have to be the smallest co-finite subset of $\XX$.
  But if $\XX$ is infinite, there is no smallest co-finite subset of $\XX$.
}}}

\paragraph{Quantum equality.}%
\index{quantum equality}%
\index{equality!quantum}
In \cite{qrhl}, a specific predicate \symbolindexmark\QUANTEQ{$\XX_1\QUANTEQ\XX_2$}
was introduced to describe the
fact that two quantum variables (or list of quantum variables) have the same state.
Formally, $\XX_1\QUANTEQ\XX_2$ is the subspace consisting of all quantum memories in $\elltwov{\XX_1\XX_2}$ that
are invariant under \symbolindexmark\SWAPXX{$\SWAPXX{\XX_1}{\XX_2}$}, the unitary that swaps variables $\XX_1$ and $\XX_2$.%
\footnote{That is, $\SWAPXX{\XX_1}{\XX_2}(\psi\otimes\phi)=\phi\otimes\psi$ for $\psi\in\elltwov{\XX_1}$, $\phi\in\elltwov{\XX_2}$.}
Or equivalently, $\XX_1\QUANTEQ\XX_2$ denotes the subspace spanned by all quantum memories of
the form $\phi\otimes\phi$ with  $\phi\in\elltwov{\XX_1}=\elltwov{\XX_2}$. We write $\opon{\symbolindexmark\equal\equal}{\XX_1\XX_2}$ for the projector onto $\XX_1\QUANTEQ\XX_2$.

\section{Quantum programs}\label{sec:qprogs}

\paragraph{Syntax.}
We will now define a small imperative quantum language.\footnote{Very
    similar to \cite{ying12floyd}, except that we replace their case-statement by an if-statement
    and allow initialization with arbitrary states instead of just $\ket0$.
    }
The set of all
programs is described by the following syntax:\symbolindexmarkonly\bc
\[
  \symbolindexmarkhighlight{\bc,\bd} ::=
  \apply U\XX
  \ |\
  \init \XX\psi
  \ |\
  \ifte M{\XX}\bc\bd
  \ |\
  \while M{\XX}\bc
  \ |\
  \bc;\bd
  \ |\
  \SKIP
  \ |\ \abort
\]
Here $\XX$ is a list of  variables and $U$
an isometry on $\elltwov\XX$,
$\psi\in\elltwov\XX$ a normalized state, and $M$ is a binary measurement on $\elltwov\XX$.
(There are no fixed sets of allowed $U$ and $\psi$, any isometry/state that we can
describe can be used here).\footnote{We will assume throughout the
  paper that all programs satisfy those well-typedness constraints. In
  particular, rules may implicitly impose type constraints on the
  variables and constants occurring in them by this assumption.}

Intuitively, \symbolindexmark\apply{$\apply U\XX$}
means that the operation $U$
is applied to the quantum variables~$\XX$.
E.g., $\apply H\xx$
would apply the Hadamard gate to the variable $\xx$
(we assume that $H$
denote the Hadamard matrix). It is important that we can apply $U$
to several variables $\XX$ simultaneously,
otherwise no entanglement between variables can ever be produced.

The program \symbolindexmark\init{$\init\XX\psi$}
initializes the variables $\XX$
with the quantum state $\psi$.
The program \symbolindexmark\ifte{$\ifte M{\XX}\bc\bd$}
will measure the 
  variables $\XX$ with the measurement $M$,
and, if the outcome is $\true$, execute~$\bc$, otherwise execute~$\bd$.

The program \symbolindexmark\while{$\while M{\XX}\bc$}
measures $\XX$,
and if the outcome is $\true$,
it executes~$\bc$. This is repeated until the outcome is $\false$.

Finally, \symbolindexmark;{$\bc;\bd$}
executes $\bc$
and then $\bd$.
And \symbolindexmark\SKIP{$\SKIP$}
does nothing. We will always implicitly treat ``$;$''\pagelabel{page:seq.assoc.skip.neutral}
as associative and $\SKIP$ as its neutral element. \symbolindexmark\abort{$\abort$} never terminates.

\paragraph{Semantics.}
The \emph{denotational semantics}%
\index{denotational semantics}%
\index{semantics!denotational} of our programs $\bc$
are represented as completely positive trace-reducing maps
\symbolindexmark\denot{$\denot\bc$} on the mixed memories over
$\XXall$, defined
by recursion on the structure of the programs.  Here \symbolindexmark\XXall{$\XXall$}\pagelabel{page:XXall}
is a fixed set of program variables, and we will assume that
all variables under consideration are contained in this set.\fullonly{\footnote{We fix some set $\XXall$
  in order to avoid a more cumbersome notation $\denot\bc^{\XX}$
  where we explicitly indicate the set $\XX$ of program
  variables with respect to which the semantics is defined.}}
The obvious cases are $\denot\SKIP:=\id$
and $\denot{\bc;\bd}:=\denot\bd\circ\denot\bc$
and $\denot\abort(\rho):=0$.
And application of an isometry $U$
is also fairly straightforward given the syntactic sugar introduced
above:
$\denot{\apply U\XX}(\rho):=\paren{\opon U\XX}\rho\adj{\paren{\opon
    U\XX}}$. (The notation $\opon U\XX$ was introduced on \autopageref{page:opon}.)

Initialization of quantum variables is slightly more complicated:
$\init\XX\psi$
initializes the variables $\XX$
with $\psi$,
which is the same as removing $\XX$,
and then creating a new variable $\XX$
with content $\psi$.
Removing $\XX$
is done by the operation $\partr\XX$
(partial trace, see \autopageref{page:partr}). And creating new variables $\XX$ in state $\psi$
is done by the operation $\otimes\proj{\psi}$.
Thus we define $\denot{\init\XX\psi}(\rho):=\partr\XX\rho\otimes\proj{\psi}$.

The if-command first performs a measurement and then branches
depending on the outcome. We then have that the state after measurement (without
renormalization) is
$\paren{\opon{\Mtruefalse Mt}{\XX}}\rho\adj{\paren{\opon{\Mtruefalse Mt}{\XX}}}$ for outcome $t=\true,\false$. Then $\bc$
or $\bd$
is applied to that state and the resulting states are added together
to get the final mixed state. Altogether:\symbolindexmarkonly\restrict
\begin{multline*}
  \pb\denot{\ifte M{\XX}\bc\bd}(\rho)
  :=
  \denot\bc\pb\paren{\restrict{\true}(\rho)}
  +
  \denot\bd\pb\paren{\restrict{\false}(\rho)}
  \\
  \text{where}\qquad
  \symbolindexmarkhighlight{\restrict t(\rho)}:=
  \paren{\opon{\Mtruefalse Mt}{\XX}}
  \rho\adj
  {\paren{\opon{\Mtruefalse Mt}{\XX}}}
\end{multline*}
While-commands are modeled similarly: In an execution of a while
statement, we have $n\geq 0$
iterations of ``measure with outcome $\true$
and run $\bc$''
(which applies $\denot\bc\circ\restrict{\true}$
to the state), followed by ``measure with outcome $\false$''
(which applies $\restrict{\false}$
to the state). Adding all those branches up, we get the
definition:
\begin{equation*}
  \pb\denot{\while M{\XX}\bc}(\rho) :=
  \sum_{n=0}^\infty \restrict{\false} \pb\paren{ (\denot\bc\circ\restrict{\true})^n(\rho) }
\end{equation*}

We call a program $\bc$ \emph{terminating}\index{terminating} iff $\tr\denot\bc(\rho)=\tr\rho$ for all~$\rho$.   \paragraph{Semantics with explicit non-termination.}
  $\denot\bc$
  is not trace-preserving if $\bc$
  is not terminating. For technical reasons, we will need a variant of
  this function that is trace-preserving. This can be achieved by
  outputting a mixed state that has an explicit state
  $\proj{\ket\bot}$
  to denote non-termination. This semantic function $\denotbot\bc$ takes
  mixed $\bot$-memories to mixed $\bot$-memories and
  can be easily
  derived from $\denot\bc$:\symbolindexmarkonly\denotbot\pagelabel{page:denotbot}
  \[
    \symbolindexmarkhighlight{\denotbot\bc}(\rho):=\denot\bc(\restrictnbot(\rho))
    + \pb\paren{\tr\rho-\tr\denot\bc(\restrictnbot(\rho))}\proj{\ket\bot}.
    \ \footnote{It may not be immediately obvious that $\denotbot\bc$ is completely positive. Complete positivity is shown as follows:
    $f(\rho):=\tr\rho - \tr\denot\bc(\rho)$
    is a linear functional. For $\rho\geq0$, $f(\rho)\in[0,\tr\rho]$ since $\denot\bc$ is trace-reducing.
    Hence $f$ is positive.
    Every positive linear functional is completely positive \cite[Lemma~34.5]{conway00operator}.
    The function $g(c):=c\cdot \proj{\ket\bot}$ is completely positive.
    Thus $g\circ f$ is completely positive.
    Also $\denot\bc\circ\restrictnbot$ is completely positive as a composition of two completely positive functions.
    Thus $\denotbot\bc = \denot\bc\circ\restrictnbot + g\circ f$ is completely positive.
  }
  \]  %
  ($\Pbot,\Pnbot$
  are defined on \autopageref{page:def:Pbot}.)  Operationally,
  $\denotbot\bc$
  first measures if the state is $\bot$.
  If so, nothing happens.  Otherwise, $\bc$
  is applied. If $\bc$
  does not terminate, the final output memory is set to be $\bot$.
  $\denotbot\bc$ is easily seen to be trace-preserving.
    Moreover, we have the composition property
    $\denotbot{\bc;\bd}=\denotbot{\bd}\circ\denotbot{\bc}$, since
    \begin{align*}
      &\denotbot{\bd}(\denotbot{\bc}(\rho))=\denotbot{\bd}\left[\denot{\bc}(\rho_{\not\bot})+(\tr\rho-\tr\denot{\bc}(\rho_{\not\bot}))\proj{\ket\bot}\right]\\
      &=\denot{\bd}(\denot{\bc}(\rho_{\not\bot}))+\pb\bracket{\tr\denotbot\bc(\rho)-\tr\denot{\bd}(\denot{\bc}(\rho_{\not\bot}))}\proj{\ket\bot}\\
      &=\denot{\bc;\bd}(\rho_{\not\bot})+(\tr\rho-\tr\denot{\bc;\bd}(\rho_{\not\bot}))\proj{\ket\bot}=\denotbot{\bc;\bd}(\rho).
        \end{align*}
\section{qRHL with expectations}
\label{sec:qrhl}

\paragraph{Defining the logic.}
  We now present our definition of expectation-qRHL. We follow the
  approach from \cite{qrhl} to use separable couplings to describe the
  relationship between programs. A \emph{coupling}\index{coupling}
  between two mixed states $\rho_1$ and $\rho_2$
    is a mixed state $\rho$ that has $\rho_1$ and $\rho_2$ as marginals. (That is, $\partr{\XX_2}\rho=\rho_1$
  and $\partr{\XX_1}\rho=\rho_2$ if $\rho_1,\rho_2$ are states over $\XX_1,\XX_2$, respectively.)
  This is analogous to probabilistic couplings: a coupling of distributions $\mu_1,\mu_2$
  is a distribution $\mu$ with marginals $\mu_1,\mu_2$.  
  Note that couplings trivially always exist if $\rho_1$ and $\rho_2$ have the same trace
  (namely, $\rho:=\rho_1\tensor\rho_2/\tr\rho_1$).
  Couplings become interesting when
  we put additional constraints on the state $\rho$.
  For example, if we require the support of $\rho$ to be in the subspace $C:=\SPAN\{\ket{00},\ket{11}\}$,
  then $\rho_1=\proj{\ket0}$ and $\rho_2=\proj{\ket0}$ have a coupling
  (namely, $\rho=\proj{\ket{00}}$),
  as do $\rho_1=\proj{\ket1}$ and $\rho_2=\proj{\ket1}$ 
  (namely, $\rho=\proj{\ket{11}}$),
  but not 
  $\rho_1=\proj{\ket0}$ and $\rho_2=\proj{\ket1}$.
  Things become particularly interesting when $\rho_1,\rho_2$ are not pure states.
  E.g., 
  $\rho_1=\frac12\proj{\ket0}+\frac12\proj{\ket1}$ and
  $\rho_2=\frac12\proj{\ket0}+\frac12\proj{\ket1}$ have such a coupling as well
  (namely, $\rho=\frac12\proj{\ket{00}}+\frac12\proj{\ket{11}}$
  but $\rho:=\rho_1\tensor\rho_2$ is not a coupling with support in $C$).

  Thus, a subspace such as $C$ can be seen as a predicate describing the relationship of $\rho_1,\rho_2$.
  The states $\rho_1,\rho_2$ satisfy $C$ iff there is a coupling with support in $C$.
  This idea leads to the following tentative definition of qRHL:
  \begin{definition}[qRHL, tentative, without expectations]\label{def:qrhl.tent}
    For subspaces $A$, $B$ (i.e., spaces of quantum memories over $\XXall_1\XXall_2$),
    $\rhlinformal A\bc\bd B$ holds iff for any $\rho_1,\rho_2$ that have a coupling
    with support in $A$,
    the final states $\denot\bc(\rho_1),\denot{\bd}(\rho_2)$ have a coupling with support in $B$.
  \end{definition}
  However, it was noticed in \cite{qrhl} that the definition becomes easier to handle if
  we impose another condition on the couplings.
  Namely, the coupling should be separable, i.e., there should be no entanglement between
  the two systems corresponding to $\rho_1,\rho_2$.
  That is, the definition of qRHL used in \cite{qrhl}
  is \autoref{def:qrhl.tent} with ``coupling'' replaced by ``separable coupling''.
  We will also adopt the separability condition in our definition of expectation-qRHL.\footnote{\label{footnote:why.sep}%
    \cite{qrhl} was not able to prove the \textsc{Frame} rule without adding this
    separability condition. Our reasons for adopting the separability condition
    are slightly different: we do not have a \textsc{Frame} rule anyway,
    but for other elementary rules such as the rule \textsc{Equal} in \cite{qrhl} with quantum expectations and quantum variables, it is unclear how to prove them
    without the separability condition.}

  So far, we have basically recapped the definition from \cite{qrhl}.
  However, that definition only allows us to express Hoare judgments that do not involve expectations
  since $A$ and $B$ in \autoref{def:qrhl.tent} are subspaces (predicates), not expectations.
  To define expectation-qRHL, we follow the same idea,
  but instead of quantifying over only the initial states satisfying the precondition,
  we quantify over all initial states, and merely require that (the coupling of) the final states
  satisfies the postexpectation at least as much as (the coupling of the) initial states satisfy the
  preexpectation. That is:
  \begin{definition}[Expectation-qRHL (eqRHL), first attempt]\label{def:eqrhl.first}\index{eqRHL}
    For expectations $\PA,\PB$,
    $\rhlinformal A\bc\bd B$ holds iff for any $\rho_1,\rho_2$ with separable coupling $\rho$,
    the final states $\denot\bc(\rho_1),\denot{\bd}(\rho_2)$ have a separable coupling $\rho'$
    such that $\tr\PA\rho\leq\tr\PB\rho'$.
    (Recall that $\tr\PA\rho$ indicates how much $\rho$ satisfies $\PA$,
    and analogously $\tr\PB\rho'$, cf.~\autoref{sec:var.mem.pred}.)
  \end{definition}
  
For terminating programs $\bc,\bd$, this definition already works well.
    Unfortunately, if $\bc,\bd$ do not terminate with probability $1$,
    we have some undesired effects: For example, assume that $\bc=\SKIP$,
    and $\bd$ is a program that with probability $1-\varepsilon$ does nothing ($\SKIP$),
    and but with probability $\varepsilon$ does not terminate ($\abort$).
    Then $\rhlinformal\PA\bc\bd\PB$ does not hold \emph{for any $\PA,\PB$}. Why?
    The final states $\denot\bc(\rho_1),\denot{\bd}(\rho_2)$ have trace $1$ and $1-\varepsilon$,
    respectively.
    Therefore there exists no coupling $\rho'$ of $\denot\bc(\rho_1),\denot{\bd}(\rho_2)$.
    (It follows from the definition of couplings that the coupling must have the same trace
    as its marginals.)
    Hence $\rhlinformal\PA\bc\bd\PB$ does not hold.
    Similarly,  $\rhlinformal\PA\bc\bd\PB$ does not hold whenever there are two input states
    $\rho_1,\rho_2$ such that $\bc,\bd$ terminate with different probabilities. (Even if this
    nontermination only occurs for input states for which $\PA$ evaluates to $0$!)
    This makes it near impossible to reason about non-terminating programs.\footnote{%
      Even if we are interested in a Hoare logic with total correctness,
      this behavior is undesired.
      Instead, we want that a nontermination with small probability simply introduces
      some small penalty in the expectations.
      For example, in the case of non-relational Hoare with total
      correctness,
      $\hl{(1-\varepsilon)\id}\PA{\id}$ means that $\PA$ is nonterminating with probability
      $\leq\varepsilon$.}

  There are a number of approaches how one can circumvent this
  problem. E.g., one could allow $\rho'$ to be a ``subcoupling'' instead of a coupling (i.e.,
  its marginals do not have to equal  $\denot\bc(\rho_1),\denot{\bd}(\rho_2)$
  but only lower bound them);\footnote{This is explored further in \autoref{sec:reformulate} for special cases of our definition.}
  the subcoupling always exists, even if the traces are not equal.
  However, we find that adding such ``hacks'' to the definition makes it more difficult
  to understand what the definition exactly does in case of non-terminating programs.
  
  Instead, we choose an approach that makes non-termination explicit.
  That is, when a program does not terminate, we assign a specific state $\ket\bot$ to its output,
  and we allow expectation to explicitly refer to it (e.g., an expectation could assign value $1$ to
  nontermination, and value $0$ to termination).
  The denotation $\denotbot\bc$ defined on \autopageref{page:denotbot} does exactly that.
  And expectations that live on a space that has an explicit nontermination-state $\ket\bot$
  were introduced as $\bot$-expectations on \autopageref{page:bot-expectation}. This leads to the following definition:

\begin{definition}[Expectation-qRHL, informal]\label{def:eqrhl.informal}
  For $\bot$-expectations $\PA,\PB$,
  $\rhlinformal A\bc\bd B$ holds iff for any $\rho_1,\rho_2$ with separable coupling $\rho$,
  the final states $\denotbot\bc(\rho_1),\denotbot{\bd}(\rho_2)$ have a separable coupling $\rho'$
  such that $\tr\PA\rho\leq\tr\PB\rho'$.
\end{definition}
Note that a coupling of  $\denotbot\bc(\rho_1),\denotbot{\bd}(\rho_2)$ always exists
since $\denotbot\cdot$ is trace-preserving.
(Below, we will derive certain specific variants of eqRHL such as eqRHL with total correctness
as specific cases of this definition. Also, we will see that subcouling-based definitions
can be recovered as special cases in \autoref{lemma:qrhl.equiv}.) By plugging in the definition of couplings into \autoref{def:eqrhl.informal}, we get the following precise definition:

\begin{definition}[Expectation-qRHL, generic]\label{def:eqrhl.generic}
  Let $\PA$, $\PB$ be $\bot$-expectations and $\bc$, $\bd$  programs.
  Then \symbolindexmark\rhlgen{$\rhlgen\PA\bc\bd\PB$} holds iff for any separable mixed $\bot$-bimemory $\rho$ over $\XXall_1,\XXall_2$,
  there is a separable mixed $\bot$-bimemory $\rho'$ over $\XXall_1,\XXall_2$ such that
  \begin{compactitem}
  \item $\partr{2}\rho' \couprel \denotbot\bc(\partr{2}\rho)$.
  \item $\partr{1}\rho' \couprel \denotbot\bd(\partr{1}\rho)$.
  \item $\tr\PA\rho\leq\tr\PB\rho'$.
  \end{compactitem}
\end{definition}

In this definition, $\XXall_1,\XXall_2$ are isomorphic copies of the set $\XXall$ of variables.
  That is, while strictly speaking, $\denotbot\bc$ maps mixed $\bot$-memories over $\XXall$ to
  mixed $\bot$-memories over $\XXall$, we can also see it as mapping
  mixed $\bot$-memories over $\XXall_1$ to
  mixed $\bot$-memories over $\XXall_1$. Analogously for $\bd$ and $\XXall_2$.
  We make use of this in the preceding definition when we apply $\denotbot\bc,\denotbot\bd$ to
  $\rho_1,\rho_2$, respectively. In the rest of the paper, we simply call a mixed state \emph{$(\rho_1,\rho_2)$-coupling} if it is separable and has marginals $\rho_1$ and $\rho_2$.

  \medskip

Note that we defined $\bot$-expectations
    to be invariant under $\Ebot\otimes\Ebot$
    (\autopageref{page:bot-expectation}), i.e., that they implicitly
    measure first whether the quantum memories are $\ket\bot$.
    Otherwise, we would not even have $\rhlgen\PA\SKIP\SKIP\PA$
    (\ruleref{gen}{Skip} below), for example if
    $\PA:={\proj{\fsq\ket0+\fsq\ket\bot}}$.
    This is because even the program $\SKIP$
    measures whether the memory is $\ket\bot$
    (by definition of $\denotbot\cdot$), so it may change the state if the memory is in a superposition
    between $\ket\bot$ and something else.

  \paragraph{Partial/total correctness.}
The generic definition of eqRHL (\autoref{def:eqrhl.generic}) allows
    us to explicitly express in our pre-/postexpectations how nontermination should
    be treated. While this allows for maximal generality, in practice it might
    be cumbersome to always have to specify explicitly how the expectations behave on $\ket\bot$.
    Instead, we define below three special cases of eqRHL that hardcode the treatment of $\ket\bot$.

    The simplest case is eqRHL with total correctness: Here, nontermination is ``forbidden'', i.e.,
    we assign value $0$ to it.
    Recall from \autopageref{page:liftbb} that for an expectation
    $\PA$,
    $\liftbb\PA$
    is the corresponding $\bot$-expectation.
    It assigns $0$ to a state that has $\ket{\bot}$
    in the left or right memory.
    Hence, eqRHL with total correctness simply means that all pre/postconditions are of
    the form $\liftbb\PA$.
    The following definition specifies convenient syntax for this special case:
  
  \begin{definition}[Expectation-qRHL, total]\label{def:tot}
  Let $\PA$,
  $\PB$
  be expectations and $\bc$,
  $\bd$
  programs.  
  Then \symbolindexmark\rhltot{$\rhltot\PA\bc\bd\PB$}
  iff
  $\rhlgen{\liftbb\PA}\bc\bd{\liftbb\PB}$.
\end{definition}

A second common variant of Hoare logic is ``partial
  correctness''. Here we allow non-termination, i.e., if a program
  does not terminate, we assign the value $1$.
  That is, we use pre/postexpectations of the form $\liftbbp\PA+T$
  where $T$ assigns value $1$ when the left or right memory is in state $\ket\bot$:
  
\begin{definition}[Expectation-qRHL, partial]
  Let $\PA$,
  $\PB$
  be expectations and $\bc$,~$\bd$
  programs. 
  Then \symbolindexmark\rhlpar{$\rhlpar\PA\bc\bd\PB$}
  iff
  $\rhlgen{\liftbbp\PA + T}\bc\bd{\liftbbp\PB + T}$
  where $T:={\pb\paren{\proj{\ket{\bot_1}}\otimes\Pnbot}
    + \pb\paren{\Pnbot\otimes\proj{\ket{\bot_2}}}
    + \proj{\ket{\bot_1}\otimes\ket{\bot_2}}
  }$.
\end{definition}

Unfortunately, this definition does not necessarily behave as we would like. E.g.,
if both $\bc$ and $\bd$ terminate with probability $\leq\frac12$ on all inputs,
then $\rhlpar{\PA}\bc\bd\PB$ holds for any $\PA\leq\id$ and any $\PB$.
That is, any property holds with probability $1$ for those programs
which is not what we would expect!
Why does this happen? Since $\denotbot\bc(\rho_1),\denotbot\bd(\rho_2)$ are 50\% nontermination,
we can ``match up'' the nonterminating part of $\denotbot\bc(\rho_1)$ with the terminating part
of $\denotbot\bd(\rho_2)$ and vice versa in the coupling $\rho'$ of the output states.
Then $\tr T\rho'=1$ and thus $\rhlpar{\PA}\bc\bd\PB$ holds.
The problem here is that we treat nontermination as a ``wildcard''
that is allowed to match any behavior of the other program.
While there may be valid use cases for such a notation,
we believe that in most cases this is not what we want.

Instead, we define a notion we call ``semipartial''.
In this eqRHL-variant, we allow nontermination, but only when it occurs in
the two programs ``in sync''.
I.e., we consider pre/postexpectations that assign $1$ to $\ket\bot\otimes\ket\bot$,
but $0$ to a state where one program has nonterminated and the other has terminated. Formally:
\begin{definition}[Expectation-qRHL, semipartial]
  Let $\PA$,  $\PB$ be expectations and $\bc$, $\bd$ programs.  
  Then \symbolindexmark\rhlsemi{$\rhlsemi\PA\bc\bd\PB$} iff
  $$\pB\rhlgen{\liftbbp\PA+\proj{\ket{\bot_1}\otimes\ket{\bot_2}}}\bc\bd{\liftbbp\PB+\proj{\ket{\bot_1}\otimes\ket{\bot_2}}}.$$
\end{definition}

\begin{figure}
  \centering
    \lineskip=8pt
    $\RULE{any}{Skip}{}{\rhlany\PA\SKIP\SKIP\PA}$
    \qquad
    $\RULE{any}{Apply1}{}{\pb\rhlany{\adj{\paren{\opon U{\XX_1}}}\PA\paren{\opon U{\XX_1}}}{\apply U\XX}\SKIP\PA}$
    \qquad
    $\RULE{any}{ExFalso}{}{\rhlany 0\bc\bd\PB}$
    \qquad
    $\RULE{any}{Init1}{}{\pb\rhlany{\id_{\XX_1}\otimes\paren{\adj\psi\otimes\id_{\negXX\XX_1}}\PA\paren{\psi\otimes\id_{\negXX\XX_1}}}{\init\XX\psi}{\SKIP}{\PA}}$
    \qquad
    $\RULE{any}{Seq}{\rhlany{\PA}{\bc_1}{\bd_1}{\PB}\\ \rhlany{\PB}{\bc_2}{\bd_2}{\PC}}{\rhlany{\PA}{\bc_1;\bc_2}{\bd_1;\bd_2}{\PC}}$
    \qquad
    $\RULE{any}{Conseq}{\PA'\le \PA\\ \rhlany\PA\bc\bd\PB\\ \PB\le \PB'}{\rhlany{\PA'}\bc\bd{\PB'}}$
    \qquad
    $\RULE{any}{Sym}{\rhlany \PA\bc\bd\PB}{\rhlany{\SWAPlr^\ast\cdot \PA\cdot \SWAPlr}\bd\bc{\SWAPlr^\ast\cdot \PB\cdot \SWAPlr}}$
    \qquad
    $\RULE{any}{Scale}{\rhlany \PA\bc\bd\PB\\ \lambda\in [0,1]}{\rhlany{\lambda\PA}\bc\bd{\lambda\PB}}$
    \qquad
    $\RULE{any}{If1}{\rhlany{\PA_T}{\bc_T}{\bd}{\PB}\\ \rhlany{\PA_F}{\bc_F}{\bd}{\PB}}{\rhlany{\restrictast{\true}(\PA_T)+\restrictast{\false}(\PA_F)}{\ifte{M}{\XX}{\bc_T}{\bc_F}}{\bd}{\PB}}$
    \qquad
    $\RULE{any}{JointIf9}{\rhlany{\PA_{t,u}}{\bc_t}{\bd_u}\PB\text{ for $t,u\in\{\true,\false\}$}}{\pb\rhlany{\textstyle\sum_{t,u\in\{\true,\false\}}\restrictast{t,u}(\PA_{t,u})}{\ifte M\XX{\bc_{\true}}{\bc_{\false}}}{\ifte N\YY{\bd_{\true}}{\bd_{\false}}}{\PB}}$
    \qquad
{\small$ \RULE{any}{JointIf}
    {
      \rhlany{\PA_{\true}}
      {\bc_{\true}}{\bd_{\true}}
      \PB
      \\
      \rhlany{\PA_{\false}}
      {\bc_{\false}}{\bd_{\false}}
      \PB
    }
    {
      \pb\rhlany{
        \restrictast{\true,\true}(\PA_{\true})
        +
        \restrictast{\false,\false}(\PA_{\false})
      }
      {\ifte M\XX{\bc_{\true}}{\bc_{\false}}}
      {\ifte N\YY{\bd_{\true}}{\bd_{\false}}}
      {\PB}
    }$}
    \qquad
    $\RULE{any}{While1}{\pb\rhlany{\PA}{\bc}{\SKIP}{\restrictast{\true}(\PA)+\restrictast{\false}(\PB)}\\ \while M\XX\bc\text{ is terminating}}{\pb\rhlany{\restrictast{\true}(\PA)+\restrictast{\false}(\PB)}{\while M\XX\bc}{\SKIP}{\PB}}$
    \qquad
$\RULE{any}{JointWhile}{\rhlany{\PA}{\bc}{\bd}{\restrictast{\true,\true}(\PA)+\restrictast{\false,\false}(\PB)}\\ \while M\XX\bc\text{  or  }\while N\YY\bd\text{ is terminating}}{\rhlany{\restrictast{\true,\true}(\PA)+\restrictast{\false,\false}(\PB)}{\while M\XX\bc}{\while N\YY\bd}{\PB}}$

\caption[Rules for total/semipartial/partial eqRHL]{%
  \label{fig:rules}
    In these rules,
    ``$\mathsf{any}$'' can be any of ``$\mathsf{tot}$'', ''$\mathsf{semi}$'', ''$\mathsf{par}$''.
    For $\mathsf{any}=\mathsf{par}$, the termination condition
    in \rulerefx{any}{While1} can
    be replaced by $\PA\leq\id$, and for  $\mathsf{any}=\mathsf{par},\mathsf{semi}$
    the termination condition
    in \rulerefx{any}{JointWhile} can
    be replaced by $\PA\leq\id$.
    We refer to the symmetric rules of \rulerefx{any}{Apply1},
    \rulerefx{any}{Init1}, \rulerefx{any}{If1}, and \rulerefx{any}{While1} (obtained by
    applying \rulerefx{any}{Sym}) as \implicitRULE{any}{Apply2},
    \implicitRULE{any}{Init2}, \implicitRULE{any}{If2}, and
    \implicitRULE{any}{While2}.
      }
\end{figure}

\paragraph{Pure initial states.}
In many cases, it is much easier to work with the definition of eqRHL correctness
  if one can assume that the initial states of $\bc,\bd$ are pure states,
  and that the initial coupling is the tensor product of those states.
  (No nontrivial correlations.) The following lemma shows that we can do
  so without loss of generality:

\begin{lemma}\label{lemma:qrhl.pure.new}
  Let $\PA$, $\PB$ be $\bot$-expectations and $\bc$, $\bd$ programs.
  Then $\rhlgen\PA\bc\bd\PB$ holds iff
  \begin{compactitem}
  \item for all unit quantum memories $\psi_1$, $\psi_2$ over $\XX_1$, $\XX_2$, respectively,
    there is a separable $\bot$-mixed bimemory $\rho'$ over $\XX_1,\XX_2$ such that
    \begin{compactitem}
    \item $\partr{2}\rho' \couprel \denotbot\bc(\proj{\psi_1})$.
    \item $\partr{1}\rho' \couprel \denotbot\bd(\proj{\psi_2})$.
    \item $\tr \PA\,\proj{\psi_1\otimes\psi_2}\leq\tr\PB\rho'$.\footnote{Or equivalently,
        $\norm{\sqrt{\PA}\paren{\psi_1\otimes\psi_2}}\leq\tr\PB\rho'$. Or
        $\adjp{\psi_1\otimes\psi_2}\PA\paren{\psi_1\otimes\psi_2}\leq\tr\PB\rho'$.
      }
    \end{compactitem}
  \item for all unit quantum memories $\psi_1$
    over $\XX_1$, we have
    $\tr \PA\,\proj{\psi_1\otimes\ket{\bot_2}}\leq\tr\PB
    \pb\paren{\denotbot\bc(\proj{\psi_1})\otimes\proj{\ket{\bot_2}}}$.
  \item for all unit quantum memories $\psi_2$
    over $\XX_2$, we have
    $\tr \PA\,\proj{\ket{\bot_1}\otimes\psi_2}\leq\tr\PB
    \pb\paren{\proj{\ket{\bot_1}}\otimes\denotbot\bd(\proj{\psi_2})}$.
  \item 
    $\tr \PA\,\proj{\ket{\bot_1}\otimes\ket{\bot_2}}\leq\tr\PB\,
    \proj{\ket{\bot_1}\otimes\ket{\bot_2}}$.
  \end{compactitem}
\end{lemma}

\delaytextsv{proof lemma:qrhl.pure.new}{
\begin{proof}
  Let $P(\psi_1,\psi_2)$
  mean that there exists a separable $\bot$-mixed
  bimemory $\rho'$
  such that $\partr{2}\rho' \couprel \denotbot\bc(\proj{\psi_1})$,
  and $\partr{1}\rho' \couprel \denotbot\bd(\proj{\psi_2})$,
  and $\tr \PA\,\proj{\psi_1\otimes\psi_2}\leq\tr\PB\rho'$

  Notice that $P(\psi_1,\ket{\bot})$ holds iff there exists
  a separable $\bot$-mixed
  bimemory $\rho'$ such that
  \begin{inparaenum}[(i)]
  \item \label{enum:tr2rho'}
    $\partr{2}\rho' \couprel \denotbot\bc(\proj{\psi_1})$,
    and
  \item \label{enum:tr1rho'}  $\partr{1}\rho' \couprel \denotbot\bd(\proj{\ket{\bot}})=\proj{\ket{\bot}}$,
    and
  \item \label{enum:trleq} $\tr \PA\,\proj{\psi_1\otimes\ket{\bot_2}}\leq\tr\PB\rho'$.
  \end{inparaenum}
  The conditions \eqref{enum:tr2rho'},\eqref{enum:tr1rho'} are
  equivalent to
  $\rho'= \denotbot\bc(\proj{\psi_1})\otimes\proj{\ket{\bot_2}}$,
  and in this case \eqref{enum:trleq} becomes
  $\tr \PA\,\proj{\psi_1\otimes\ket{\bot_2}}\leq\tr\PB
  \pb\paren{\denotbot\bc(\proj{\psi_1})\otimes\proj{\ket{\bot_2}}}$.

  Thus $P(\psi_1,\ket{\bot})$ holds iff $\tr \PA\,\proj{\psi_1\otimes\ket{\bot_2}}\leq\tr\PB
  \pb\paren{\denotbot\bc(\proj{\psi_1})\otimes\proj{\ket{\bot_2}}}$.

  Analogously, $P(\ket{\bot},\psi_2)$ holds iff
  $\tr \PA\,\proj{\ket{\bot_1}\otimes\psi_2}\leq\tr\PB
  \pb\paren{\proj{\ket{\bot_1}}\otimes\denotbot\bd(\proj{\psi_2})}$.

  And $P(\ket\bot,\ket\bot)$ holds iff 
  $\tr \PA\,\proj{\ket{\bot_1}\otimes\ket{\bot_2}}\leq\tr\PB\,
  \proj{\ket{\bot_1}\otimes\ket{\bot_2}}$.

  Thus the statement of the lemma is equivalent to:
  \begin{equation}
    \rhlgen\PA\bc\bd\PB
    \iff    
    \forall\
    \psi_1\in\elltwov{\XX_1}\cup\{\ket{\bot_1}\},\
    \psi_2\in\elltwov{\XX_2}\cup\{\ket{\bot_2}\}.\ 
    P(\psi_1,\psi_2).
    \label{eq:qrhl.pure.restated}
  \end{equation}

  The $\Rightarrow$-direction is immediate from \autoref{def:eqrhl.generic}.
  
  We show the $\Leftarrow$-direction.
  Fix some separable $\bot$-mixed bimemory $\rho$ over $\XX_1\XX_2$.
  To prove that $\rhlgen\PA\bc\bd\PB$ holds, we need to construct a separable $\rho'$ such that:
  \begin{compactenum}[(i)]
  \item\label{item:marg1} $\partr{2}\rho' \couprel \denotbot\bc(\partr{2}\rho)$.
  \item\label{item:marg2} $\partr{1}\rho' \couprel \denotbot\bd(\partr{1}\rho)$.
  \item\label{item:exp} $\tr\PA\rho\leq\tr\PB\rho'$.
  \end{compactenum}
  Recall the definition of $\Ebot$ from \autopageref{page:def:Ebot}.
  Let $\rho^\bot:=(\Ebot\otimes\Ebot)(\rho)$.
  Then $\rho^\bot$ is also a separable mixed $\bot$-bimemory,
  thus we can write $\rho^\bot$ as  $\rho^\bot=\sum_jp_i\proj{\psi_{1j}\otimes\psi_{2j}}$
  for unit quantum memories $\psi_{1j},\psi_{2j}$ over $\XX_1,\XX_2$ and $p_j\geq0$.
  By definition of $\calE_\bot$, we can choose the 
  $\psi_{1j},\psi_{2j}$ as $\psi_{1j}\in\elltwov{\XX_1}\cup\{\ket{\bot_1}\}$
  and  $\psi_{2j}\in\elltwov{\XX_2}\cup\{\ket{\bot_2}\}$.
  (I.e., $\psi_{1j}$ is never a superposition between $\ket{\bot_1}$ and something else.
  Same for $\psi_{2j}$.)
  
  By assumption (rhs of \eqref{eq:qrhl.pure.restated}),
  for all $j$, we have $P(\psi_{1j},\psi_{2j})$.
  Thus,
  there exists a separable mixed $\bot$-bimemory $\rho'_j$ over $\XX_1\XX_2$ such that
  \begin{compactitem}
  \item $\partr{2}\rho'_j \couprel \denotbot\bc(\proj{\psi_{1j}})$.
  \item $\partr{1}\rho'_j \couprel \denotbot\bd(\proj{\psi_{2j}})$.
  \item $\tr \PA\,\proj{\psi_{1j}\otimes\psi_{2j}}\leq\tr\PB\rho'_j$.
  \end{compactitem}
  Then let $\rho':=\sum_jp_j\rho_j'$. Since all $\rho'_j$ have trace $1$,
  and $\sum_jp_j=\tr\rho\leq\infty$, $\rho'$ exists.
  
  We have \eqref{item:marg1} since
  \begin{align*}
    \partr{2}\rho'
    &=
    \sum\nolimits_j p_j \partr{2}\rho'_j
    =
    \sum\nolimits_j p_j \denotbot\bc\pb\paren{\proj{\psi_{1j}}}
    =
    \denotbot\bc\pb\paren{\textstyle\sum\nolimits_j p_j\proj{\psi_{1j}}}
    \\
    &=
      \denotbot\bc\pb\paren{\partr{2}\rho^\bot}
      =
      \denotbot\bc\pb\paren{\calE_\bot(\partr{2}\rho)}
      \starrel=
      \denotbot\bc\pb\paren{\partr{2}\rho}.
  \end{align*}
  Here $(*)$ holds since $\denotbot\bc\circ\calE_\bot=\denotbot\bc$ by construction of
  $\denotbot\bc$ and $\calE_\bot$.
  
  And \eqref{item:marg2} is shown analogously.

  And \eqref{item:exp} follows since
  \begin{align*}
    \tr\PA\rho
    &\starrel=
      \tr(\calE_\bot\otimes\calE_\bot)(\PA)\rho
      \starstarrel=
      \sum_{t,u\in\bot,\not\bot}
      \tr
      (P_{t}\otimes P_u)\PA(P_t\otimes P_u)\rho
    \\
    &\tristarrel=
      \sum_{t,u\in\bot,\not\bot}
      \tr
      \PA(P_t\otimes P_u)\rho(P_{t}\otimes P_u)
      =
      \tr
      \PA(\calE_\bot\otimes\calE_\bot)(\rho)
      =
      \tr
      \PA\rho^\bot
      \\
    &=
    \sum\nolimits_j p_j \tr\PA\proj{\psi_{1j}\otimes\psi_{2j}}
    \leq
    \sum\nolimits_j p_j \tr\PB\rho_j'
    =
    \tr\PB\rho'.
  \end{align*}
  Here $(*)$ holds since $\PA$ is a $\bot$-expectation and those are invariant under
  $\calE_\bot\otimes\calE_\bot$ by definition. And $(**)$ is by definition of $\calE_\bot$.
  And $(*\mathord**)$ is by the circularity of the trace.

  Thus we have shown \eqref{item:marg1}-\eqref{item:exp}, so 
  $\rhlgen\PA\bc\bd\PB$ holds.
\end{proof}
}

\paragraph{Equivalent reformulations.}
\label{sec:reformulate} As discussed after \autoref{def:eqrhl.first}, an alternative means of
trying to circumvent the problem that \autoref{def:eqrhl.first} does
handle nonterminating programs well is to use subcouplings instead of
couplings.

Here, we show that the notions of eqRHL with partial and
total correctness can be equivalently restated in terms of
subcouplings (instead of the extended semantics $\denotbot\cdot$
over $\bot$-memories).
However, we do not know such an equivalent reformulation for
semipartial correctness.

\begin{lemma}\label{lemma:qrhl.equiv}
  Let $\PA$, $\PB$ be expectations and $\bc$, $\bd$ programs.
  Then $\rhltot\PA\bc\bd\PB$ iff for any separable mixed bimemory $\rho$ over $\XXall_1,\XXall_2$,
  there is a separable mixed bimemory $\rho'$ over $\XXall_1,\XXall_2$ such that
  \begin{compactitem}
  \item $\partr{2}\rho' \le \denot\bc(\partr{2}\rho)$.
  \item $\partr{1}\rho' \le \denot\bd(\partr{1}\rho)$.
  \item $\tr\PA\rho\leq\tr\PB\rho'$.
  \end{compactitem}
\end{lemma}

\delaytextsv{proof lemma:qrhl.equiv}{
  \begin{proof}We first prove the ``only if'' part, namely, we need to construct $\rho'$ for any given $\rho$ such that the three conditions hold.  Let $\rho_1$ and $\rho_2$ be the marginals of $\rho$. Consider the $\bot$-mixed bimemory $\rho\oplus 0_{\bot\bot}$ as the input coupling, then $\rhltot{\PA}\bc\bd{\PB}$ implies that the output states $\denotbot{\bc}(\rho_1\oplus 0_{\bot_1})=\denot{\bc}(\rho)+(\tr\rho_1-\tr\denot{\bc}(\rho_1))\proj{\ket\bot}$ and $\denotbot{\bd}(\rho_2\oplus 0_{\bot_2})=\denot{\bd}(\rho_2)+(\tr\rho-\tr\denot{\bd}(\rho_2))\proj{\ket\bot}$ have a separable coupling $\hat{\rho}$ such that
$$\tr\PA\rho=\tr(\PA\oplus 0_{\bot\bot})(\rho\oplus 0_{\bot\bot})\le\tr(\PB\oplus 0_{\bot\bot})\hat{\rho}=\tr(\PB\oplus 0_{\bot\bot})(\Pnbot\otimes \Pnbot)\hat{\rho}(\Pnbot\otimes \Pnbot).$$
Now let $(\Pnbot\otimes \Pnbot)\hat{\rho}(\Pnbot\otimes \Pnbot)=\rho'\oplus 0_{\bot\bot}$, then $\rho'$ is separable and $\tr\PA\rho\le\tr\PB\rho'$. In order to show that $\rho'$ is exactly the bimemory we need to construct, it remains to prove that $\rho'$ is a subcoupling of the output state $\denot{\bc}(\rho_1)$ and $\denot{\bd}(\rho_2)$, which is achieved by \begin{equation}\begin{split}&\denot{\bc}(\rho_1)=\Pnbot(\tr_2\hat{\rho})\Pnbot=\tr_2\rho'+\tr_2(\Pnbot\otimes\Pbot)\hat{\rho}(\Pnbot\otimes\Pbot)\\
&\denot{\bd}(\rho_2)=\Pnbot(\tr_1\hat{\rho})\Pnbot=\tr_1\rho'+\tr_1(\Pbot\otimes\Pnbot)\hat{\rho}(\Pbot\otimes\Pnbot).\label{eq:subc.gap}\end{split}\end{equation}

Then we prove the ``if'' part. We derive $\rhlgen{\liftbb\PA}\bc\bd{\liftbb\PB}$ by \autoref{lemma:qrhl.pure.new}. It suffices to construct for any normalized quantum memories $\psi_1$, $\psi_2$ over $\XXall_1$, $\XXall_2$, a separable coupling $\rho$ of the output $\bot$-memories $\denotbot{\bc}(\proj{\psi_1})$ and $\denotbot{\bd}(\proj{\psi_2})$ such that $\tr\PA\proj{\psi_1\otimes\psi_2}\le\tr(\PB\oplus 0_{\bot\bot})\rho$, noting that the other three conditions for input bimemories $\psi_1\otimes\ket{\bot_2}$, $\ket{\bot_1}\otimes\psi_2$, and $\ket{\bot_1}\otimes\ket{\bot_2}$ have natrually been satisfied for $\bot$-expectations $\liftbb\PA$ and $\liftbb\PB$. The three premises here allow us to find memories $\sigma_1$ and $\sigma_2$ with separable coupling $\sigma$ such that $\sigma_1\le\denot{\bc}(\proj{\psi_1})$, $\sigma_2\le\denot{\bd}(\proj{\psi_2})$ and $\tr\PA\proj{\psi_1\otimes\psi_2}\le\tr\PB\sigma$. Without loss of generality, we assume that $\tr\denot{\bc}(\proj{\psi_1})\ge\tr\denot{\bd}(\proj{\psi_2})$. If $\tr\denot{\bc}(\proj{\psi_1})=\tr\sigma$, it is easy to verify that $\sigma$ is actually a coupling of the output memories and it suffices to choose $\rho=\sigma\oplus 0_{\bot\bot}$. For $\tr\denot{\bc}(\proj{\psi_1})>\tr\sigma=\tr\sigma_1$, let
\begin{equation*}\begin{split}&\rho=\sigma+\frac{\tr\denot{\bc}(\proj{\psi_1})-\tr\denot{\bd}(\proj{\psi_2})}{\tr\denot{\bc}(\proj{\psi_1})-\tr\sigma}(\denot{\bc}(\proj{\psi_1})-\sigma_1)\otimes\proj{\ket\bot}\\
&+\frac{(\denot{\bc}(\proj{\psi_1})-\sigma_1)\otimes(\denot{\bd}(\proj{\psi_2})-\sigma_2)}{\tr\denot{\bc}(\proj{\psi_1})-\tr\sigma}+(1-\tr\denot{\bc}(\proj{\psi_1}))\proj{\ket\bot\otimes\ket\bot}.\end{split}\end{equation*}
Then it is easy to verify that $\rho$ is a separable coupling of $\denotbot{\bc}(\proj{\psi_1})=\denot{\bc}(\proj{\psi_1})+(1-\tr\denot{\bc}(\proj{\psi_1}))\proj{\ket\bot}$ and $\denotbot{\bd}(\proj{\psi_2})=\denot{\bd}(\proj{\psi_2})+(1-\tr\denot{\bd}(\proj{\psi_2}))\proj{\ket\bot}$, and $\tr\PA\proj{\psi_1\otimes\psi_2}\le\tr\PB\sigma\le\tr(\PB\oplus 0_{\bot\bot})\rho$.\end{proof} 
}

\begin{lemma}\label{lemma:qrhl.equiv.par}
  Let $\PA$, $\PB$ be expectations and $\bc$, $\bd$ programs.
  Then $\rhlpar\PA\bc\bd\PB$ holds iff for any separable mixed bimemory $\rho$ over $\XXall_1,\XXall_2$,
  there is a separable mixed bimemory $\rho'$ over $\XXall_1,\XXall_2$ such that
  \begin{compactitem}
  \item $\partr{2}\rho' \le \denot\bc(\partr{2}\rho)$.
  \item $\partr{1}\rho' \le \denot\bd(\partr{1}\rho)$.
  \item $\tr\rho'\ge \tr\denot\bc(\partr{2}\rho)+\tr\denot\bd(\partr{1}\rho)-\tr\rho$.
  \item $\tr\PA\rho\leq\tr\PB\rho'+\tr\rho-\tr\rho'$.
  \end{compactitem}
\end{lemma}

\delaytextsv{proof lemma:qrhl.equiv.par}{
\begin{proof} We first prove the ``only if'' part. We construct $\rho'$ for input bimemories $\rho_1$ and $\rho_2$ with coupling $\rho$ in a similar way to the proof of \autoref{lemma:qrhl.equiv}. Consider the $\bot$-mixed quantum memory $\rho\oplus 0_{\bot\bot}$ as the input coupling, then $\rhlpar{\PA}\bc\bd{\PB}$ implies that the output states $\denotbot{\bc}(\rho_1\oplus 0_{\bot_1})=\denot{\bc}(\rho_1)+(\tr\rho_1-\tr\denot{\bc}(\rho_1))\proj{\ket\bot}$ and $\denotbot{\bd}(\rho_2\oplus 0_{\bot_2})=\denot{\bd}(\rho_2)+(\tr\rho_2-\tr\denot{\bd}(\rho_2))\proj{\ket\bot}$ have a separable coupling $\hat{\rho}$ such that
\begin{multline*}$$\tr\PA\rho=\tr(\liftbb\PA+T)(\rho\oplus 0_{\bot\bot})\le\tr(\liftbb\PB+T)\hat{\rho}\\
=\tr(\PB\oplus 0_{\bot\bot})(\Pnbot\otimes \Pnbot)\hat{\rho}(\Pnbot\otimes \Pnbot)+\tr\rho-\tr(\Pnbot\otimes\Pnbot)\hat{\rho},\end{multline*}
where $T:={\pb\paren{\proj{\ket{\bot_1}}\otimes\Pnbot}
    + \pb\paren{\Pnbot\otimes\proj{\ket{\bot_2}}}
    + \proj{\ket{\bot_1}\otimes\ket{\bot_2}}
  }$. Now let $(\Pnbot\otimes \Pnbot)\hat{\rho}(\Pnbot\otimes \Pnbot)=\rho'\oplus 0_{\bot\bot}$, then $\rho'$ is separable, $\tr\PA\rho\le\tr\PB\rho'+\tr\rho-\tr\rho'$. Moreover, it follows from \autoref{eq:subc.gap} that $\rho'$ is subcoupling of the output memories $\denot{\bc}(\rho_1)$ and $\denot{\bd}(\rho_2)$, and
\begin{multline*}\tr\rho'+\tr\rho-\tr\denot\bc(\partr{2}\rho)-\tr\denot\bd(\partr{1}\rho)=\tr\rho'+\tr\hat{\rho}-\tr\rho'-\tr(\Pnbot\otimes\Pbot)\hat{\rho}-\tr\rho'-\tr(\Pbot\otimes\Pnbot)\hat{\rho}\\
=\tr\hat{\rho}-\tr(\Pnbot\otimes\Pnbot)\hat{\rho}-\tr(\Pnbot\otimes\Pbot)\hat{\rho}-\tr(\Pbot\otimes\Pnbot)\hat{\rho}=\tr(\Pbot\otimes\Pbot)\hat{\rho}\ge 0.\end{multline*}

Then we prove the ``if'' part. We derive $\rhlgen{\liftbb\PA+T}\bc\bd{\liftbb\PB+T}$ by \autoref{lemma:qrhl.pure.new}. It suffices to construct for any normalized quantum memories $\psi_1$, $\psi_2$ over $\XXall_1$, $\XXall_2$, a separable coupling $\rho$ of the output $\bot$-memories $\denotbot{\bc}(\proj{\psi_1})$ and $\denotbot{\bd}(\proj{\psi_2})$ such that $\tr\PA\proj{\psi_1\otimes\psi_2}\le\tr(\PB\oplus 0_{\bot\bot}+T)\rho$, noting that the other three conditions for input bimemories $\psi_1\otimes\ket{\bot_2}$, $\ket{\bot_1}\otimes\psi_2$, and $\ket{\bot_1}\otimes\ket{\bot_2}$ have natrually been satisfied for $\bot$-expectations $\liftbb\PA+T$ and $\liftbb\PB+T$. The three premises here allow us to find memories $\sigma_1$ and $\sigma_2$ with separable coupling $\sigma$ such that $\sigma_1\le\denot{\bc}(\proj{\psi_1})$, $\sigma_2\le\denot{\bd}(\proj{\psi_2})$, $1+\tr\sigma-\tr\denot{\bc}(\proj{\psi_1})-\tr\denot{\bd}(\proj{\psi_2})\ge 0$, and $\tr\PA\proj{\psi_1\otimes\psi_2}\le\tr\PB\sigma+1-\tr\sigma$. Let
\begin{multline*}\rho=\sigma+(\denot{\bc}(\proj{\psi_1})-\sigma_1)\otimes\Pbot+\Pbot\otimes(\denot{\bd}(\proj{\psi_2})-\sigma_2)
\\ +(\tr\sigma+\tr\rho-\tr\denot{\bc}(\proj{\psi_1})-\tr\denot{\bd}(\proj{\psi_2}))\Pbot\otimes\Pbot.\end{multline*}
Then it is easy to verify that $\rho$ is the mixed $\bot$-bimemory we need.\end{proof}
}

In this definition, $\tr\rho-\tr\rho'\ge 0$ describes the nontermination probability. \autoref{lemma:qrhl.equiv} and \autoref{lemma:qrhl.equiv.par} mean that total and partial correctness can be alternatively defined using the concept of subcouplings, without considering the $\bot$-extension of the expectations and programs.  

\begin{lemma}\label{lemma:qrhl.pure}
  Let $\PA$, $\PB$ be expectations and $\bc$, $\bd$ programs.
  Then $\rhltot\PA\bc\bd\PB$ (resp. $\rhlpar\PA\bc\bd\PB$) holds iff for all unit quantum memories $\psi_1$, $\psi_2$ over $\XXall_1$, $\XXall_2$, respectively,
  there is a separable mixed bimemory $\rho$ over $\XX_1\XX_2$ such that
  \begin{compactitem}
    \item $\partr{2}\rho\le \denot\bc(\proj{\psi_1})$.
    \item $\partr{1}\rho\le \denot\bd(\proj{\psi_2})$.
    \item $\tr \PA\proj{\psi_1\otimes\psi_2}\leq\tr\PB\rho$.\footnote{Or equivalently,
        $\norm{\sqrt{\PA}\paren{\psi_1\otimes\psi_2}}\leq\tr\PB\rho$. Or
        $\adjp{\psi_1\otimes\psi_2}\PA\paren{\psi_1\otimes\psi_2}\leq\tr\PB\rho$.
      }\\(resp. $\tr \PA\proj{\psi_1\otimes\psi_2}\leq\tr\PB\rho+1-\tr\rho$ and $1+\tr\rho\ge\tr\denot\bc(\proj{\psi_1})+\tr\denot\bd(\proj{\psi_2})$.)
  \end{compactitem}
\end{lemma}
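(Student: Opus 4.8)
The plan is to prove the two implications separately. The forward direction (from \autoref{def:qrhl.simple} to the pure-state condition) is an immediate specialization, so almost all the work is in the converse, which rests on decomposing an arbitrary separable coupling into pure product states and then recombining the witnesses supplied by the hypothesis.

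For the direction assuming $\rhl\PA\bc\bd\PB$, I would simply instantiate \autoref{def:qrhl.simple} at the separable mixed memory $\rho:=\proj{\psi_1}\tensor\proj{\psi_2}=\proj{\psi_1\tensor\psi_2}$. This $\rho$ is manifestly separable (a single product term); since $\norm{\psi_1}=\norm{\psi_2}=1$ its marginals are $\partr{\XXall_2}\rho=\proj{\psi_1}$ and $\partr{\XXall_1}\rho=\proj{\psi_2}$, and $\tr\PA\rho=\tr\PA\proj{\psi_1\tensor\psi_2}$. The $\rho'$ that the definition produces then satisfies the three bullet points of the lemma verbatim.

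For the converse, let $\rho$ be an arbitrary separable mixed memory over $\XXall_1\XXall_2$. By separability I can write $\rho=\sum_i\sigma_i\tensor\tau_i$ with mixed memories $\sigma_i,\tau_i$ over $\XXall_1,\XXall_2$; applying the spectral decomposition to each factor and relabelling, I rewrite this as $\rho=\sum_m p_m\proj{\psi_1^{(m)}\tensor\psi_2^{(m)}}$ with $p_m\geq0$ and unit vectors $\psi_1^{(m)},\psi_2^{(m)}$. For each $m$ the pure-state hypothesis yields a separable $\rho'_m$ with $\partr{\XXall_2}\rho'_m=\denot\bc(\proj{\psi_1^{(m)}})$, $\partr{\XXall_1}\rho'_m=\denot\bd(\proj{\psi_2^{(m)}})$, and $\tr\PA\proj{\psi_1^{(m)}\tensor\psi_2^{(m)}}\leq\tr\PB\rho'_m$. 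Setting $\rho':=\sum_m p_m\rho'_m$ gives a separable state (a nonnegative combination of separable states). To check the three conditions of \autoref{def:qrhl.simple}, I use that $\partr{\XXall_2}\proj{\psi_1^{(m)}\tensor\psi_2^{(m)}}=\proj{\psi_1^{(m)}}$, so $\partr{\XXall_2}\rho=\sum_m p_m\proj{\psi_1^{(m)}}$; then by linearity and trace-norm continuity of $\denot\bc$, $\partr{\XXall_2}\rho'=\sum_m p_m\denot\bc(\proj{\psi_1^{(m)}})=\denot\bc(\partr{\XXall_2}\rho)$, and symmetrically for the $\XXall_1$-marginal with $\bd$. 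Linearity of the trace finally gives $\tr\PB\rho'=\sum_m p_m\tr\PB\rho'_m\geq\sum_m p_m\tr\PA\proj{\psi_1^{(m)}\tensor\psi_2^{(m)}}=\tr\PA\rho$.

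The main obstacle is the analytic bookkeeping in the converse, since every sum above may be infinite. I expect to need: that the product-state decomposition of $\rho$ converges in trace norm; that $\denot\bc,\denot\bd$ are linear and trace-norm continuous so they commute with the infinite sums; and that $\partr{}$ and $\tr\PB(\cdot)$ are likewise continuous on the relevant trace-class operators. Convergence of $\rho'=\sum_m p_m\rho'_m$ itself follows because the $\rho'_m$ are positive with $\tr\rho'_m\leq1$, so the partial traces are bounded by $\sum_m p_m=\tr\rho<\infty$ and the monotone (Loewner-increasing) partial sums converge in trace norm. The positivity of the $p_m$, the $\rho'_m$, and $\PB$ is what keeps all rearrangements legitimate; a secondary point to confirm is that this (possibly infinite) nonnegative combination of separable states is again separable, which is immediate from the definition of separability once convergence is in hand.
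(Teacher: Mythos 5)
Your proposal is correct and follows essentially the same route as the paper: the forward direction by instantiating \autoref{def:qrhl.simple} at the product state $\proj{\psi_1}\otimes\proj{\psi_2}$, and the converse by decomposing an arbitrary separable $\rho$ as $\sum_j p_j\proj{\psi_{1j}\otimes\psi_{2j}}$, collecting the witnesses $\rho'_j$, and forming $\rho':=\sum_j p_j\rho'_j$. Your extra care about trace-norm convergence and continuity of $\denot\bc,\denot\bd$ only makes explicit what the paper's proof leaves implicit.
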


\delaytextsv{proof lemma:qrhl.pure}{
\begin{proof} We prove by using \autoref{lemma:qrhl.equiv} (resp. \autoref{lemma:qrhl.equiv.par}). For any mixed separable bimemory $\rho$ as input, let $\rho=\sum_ip_i\psi_i\otimes\phi_i$ where $p_i\in[0,1]$, $\psi_i$ and $\phi_i$ are normialized pure quantum memories. Then from the premises, let $\rho_i$ be the output bimemories corresponding to input $\psi_i\otimes\phi_i$. It is easy to verify that $\rho'=\sum_ip_i\rho_i$ is the ouput bimemory corresponding to $\rho$.\end{proof}
}

\section{Description of the rules}\label{sec:rules}

We describe the rules of our logic one by one here. Recall that we
essentially have four different logics: partial, semipartial, total,
and the general case from which the former three are derived. To keep
things readable, we only describe the rules for the partial,
semipartial, and total case here. (Listed in Figure~\ref{fig:rules}.)  In
\fullshort{\autoref{sec:soundness}}{the full version}, we state
and prove the rules in the general case. The rules in
Figure~\ref{fig:rules} are then simple consequences of the rules in the
general case. The sole exception are rules related to while-loops:
here not all of the partial, semipartial, total case follow directly
from the general while rule. Those cases that do not follow are proved
separately.

\subsection{Structural rules}

First, we mention the ``structural'' rules, i.e., rules that are not
related to a specific language construct. There is \rulerefx{any}{Sym} for
exchanging the two programs.
(In this rule, \symbolindexmarkonly\SWAPlr$\symbolindexmarkhighlight\SWAPlr:\elltwov{\XXall_2}\otimes\elltwov{\XXall_1}\mapsto\elltwov{\XXall_1}\otimes\elltwov{\XXall_2}$
is the unitary operator $\SWAPlrbot:\psi\otimes\phi \mapsto \phi\otimes\psi$.)
\rulerefx{any}{ExFalso} allows us to show anything from an impossible
preexpectation. \rulerefx{any}{Seq} allows us to analyze the sequential
composition of programs. \rulerefx{any}{Conseq} allows us to weaken a
judgment. (The preexpectation can be replaced by a smaller preexpectation,
and the postexpectation can be replaced by a larger preexpectation. $\leq$
is the Loewner order). And finally, \rulerefx{any}{Scale} allows us to
scale pre- and postexpectations by a scalar factor.

\subsection{One-sided rules}

Conceptually simplest are the one-sided rules, i.e., rules that have
$\SKIP$
on the right (or left) hand side. By combining them with \rulerefx{any}{Seq},
we can prove facts about pairs of programs one statement at the time.
Here, we only describe the rules with $\SKIP$ on the right side,
  the other case is analogous.

\medskip\noindent\textbf{Apply:}
First, consider the \rulerefx{any}{Apply1} rule. It is stated (like all our
rules), in a backward reasoning style, i.e., for any postexpectation
$\PA$,
the tells us the corresponding preexpectation, here
${\adj{\paren{\opon U{\XX_1}}}\PA\paren{\opon U{\XX_1}}}$.
(Recall that $\opon U{\XX_1}$
denotes $U$
applied to $\XX_1$.)
This is quite intuitive: the left program applies $\opon U{\XX_1}$,
so the preexpectation corresponding to the postexpectation $\PA$
is what we get if we apply $\opon U{\XX_1}$
and then compute the preexpectation, i.e.,
${\adj{\paren{\opon U{\XX_1}}}\PA\paren{\opon U{\XX_1}}}$.
(And it is ${\adj{\paren{\opon U{\XX_1}}}\PA\paren{\opon U{\XX_1}}}$
and not ${\PA\paren{\opon U{\XX_1}}}$
because the latter is not Hermitian and thus not a valid expectation.)

A toy example how to apply this rule: we want to what $\xx$
has to be so that it will be $\ket0$
after applying a Hadamard $H$.
Thus our postexpectation is $\opon{\proj{\ket0}}{\xx_1}$.
Applying \ruleref{any}{Apply1}, we get that
$\rhlany{\PB}{\apply H\xx}\SKIP{\opon{\proj{\ket0}}{\xx_1}}$
for
$\PB:=\adj{\oponp H{\xx_1}}\oponp{\proj{\ket0}}{\xx_1}{\oponp
  H{\xx_1}}$. (Here $\simtemplate{any}$
can be any of $\simtemplate{tot}$,
$\simtemplate{semi}$, $\simtemplate{par}$.) A simple calculation reveals: 
$\PB=\pb\opon{\paren{\adj H\proj{\ket0}H}}{\xx_1}
= \opon{\proj{\ket+}}{\xx_1}$.
Thus we learned (unsurprisingly) that to get $\ket0$, we need to start out with $\ket+$.

\medskip\noindent\textbf{Init:} The \ruleref{any}{Init1} rule stated in a
similar backwards reasoning way as \rulerefx{any}{Apply1}, but the
preexpectation is somewhat less intuitive. We will illustrate it with
a toy example.  Assume we want to know what the probability is to
measure $\ket0$ after initializing a variable $\xx$ with $\ket+$.
That is, our postexpectation is $\PA:=\opon{\proj{\ket0}}{\xx_1}$ and our left program is $\init\xx{\ket+}$.
We ask for a suitable preexpectation $\PB$ in $\rhlany\PB{\init\xx{\ket+}}\SKIP\PA$.
The \rulerefx{any}{Init1} rule gives us $\PB = \id_{\xx_1}\otimes
\pb\paren{\bra+\otimes\id_{\lnot \xx_1}}\PA\pb\paren{{\ket+}\otimes\id_{\lnot \xx_1}}$.
(Here, \symbolindexmarkonly\negXX$\symbolindexmarkhighlight{\negXX\XX_1}:=\XXall_1\XXall_2\setminus\XX_1$.)
By definition of $\PA$, we have that
$\pb\paren{\bra+\otimes\id_{\lnot \xx_1}}\PA\pb\paren{{\ket+}\otimes\id_{\lnot \xx_1}}
= \bra+\proj{\ket0}\ket+ \otimes \id_{\lnot\xx_1} = \frac12 \id_{\lnot\xx_1}$.
Note that this is not an expectation in our sense because it is an operator on all variables \textit{but} $\xx_1$. But by tensoring with $\id_{\xx_1}$, we get the expectation $\PB=\frac12\id$.
Thus the preexpectation is $\frac12\id$ which intuitively means that, no matter what the initial state,
the probability of measuring $\ket0$ will be $\frac12$, as we would expect.

\medskip\noindent\textbf{If:} The rule \ruleref{any}{If1} allows us to
prove a judgment about an if-statement from judgments about the then-
and the else-branch. If the preexpectations from the then- and
else-branch are $\PA_T$
and $\PA_F$,
then the preexpectations for the if-statement is
${\restrictast{\true}(\PA_T)+\restrictast{\false}(\PA_F)}$.
(Here \symbolindexmarkonly\restrictast$\symbolindexmarkhighlight{\restrictast t(A)}:= \adj{\paren{\opon{\Mtruefalse Mt}{\XX_1}}}A
    \paren{\opon{\Mtruefalse Mt}{\XX_1}}$.)
This is natural since $\restrictast{\true}(\PA_T)$
is $\PA_T$
restricted to the case where the conditional holds, and
$\restrictast{\false}(\PA_F)$
is $\PA_F$ restricted to the case where the conditional does not hold.

A toy example: We want to show
$\pb\rhlany{\id}{\ifte M\xx{\apply X\xx}\SKIP}\SKIP\PB$ with
$\PB:={\opon{\proj{\ket0}}{\xx_1}}$.
Here $M$ is a computational basis measurement ($M_{\true}=\proj{\ket1}$, $M_{\false}=\proj{\ket0}$),
and $X$ is the pauli-X matrix (quantum bit flip).
That is, with probability $1$ (preexpectation is $\id$), if we measure $\xx$
in the computational basis,
and, in case of outcome $1$ flip it, we get $\ket0$ (postexpectation $\PB$).
We derive easily (using rules \rulerefx{any}{Apply1} and \rulerefx{any}{Skip}) that
$\pb\rhlany{\opon{\proj{\ket1}}{\xx_1}}{\apply X\xx}\SKIP\PB$
and 
$\pb\rhlany{\opon{\proj{\ket0}}{\xx_1}}\SKIP\SKIP\PB$.
From the \rulerefx{any}{If1} rule, we then get 
$\pb\rhlany{\PA}{\ifte M\xx{\apply X\xx}\SKIP}\SKIP\PB$ with
$\PA={\restrictast{\true}(\opon{\proj{\ket1}}{\xx_1})+\restrictast{\false}(\opon{\proj{\ket0}}{\xx_1})}$.
Thus $\PA={\opon{\proj{\ket1}}{\xx_1}+\opon{\proj{\ket0}}{\xx_1}}=\id$, as desired.

\medskip\noindent\textbf{While:} The \rulerefx{any}{While1} rule is similar
to the \rulerefx{any}{If1} rule. (The preexpectation in the conclusion has
the same form.) The main difference is that we need to guess the
invariant $\PA$ because the postexpectation in the premise contains $\PA$.
The rule also requires us to prove first that the loop is terminating (except
in the case of partial correctness).
Since this is a statement about a single program
  (non-relational), it can be shown using existing approaches (e.g.,
  \cite{LY18termination}) and is outside
  the scope of this paper.

\subsection{Two-sided rules}

The one-sided rules discussed in the previous section allow us to
analyze two programs one statement at a time. However, they are not
sufficient if want to analyze the relationship of two programs that go
in lockstep. (E.g., two while loops that always take the same decision
whether to terminate.) For handling if- and while-statements that are
in sync, our logic provides the two-sided rules \rulerefx{any}{JointIf9},
\rulerefx{any}{JointIf}, and \rulerefx{any}{JointWhile}. Notice that there are
not two-sided analogues to \rulerefx{any}{Apply1} and
\rulerefx{any}{Init1}. This is because the resulting rule would be no
different from using the one-sided rule twice. However, when random
choices happen, two-sided rules are useful. In our case, this happens
in if- and while-statements because the measurement of the
loop-condition introduces randomness.

\medskip\noindent\textbf{If:} The \rulerefx{any}{JointIf9} rule allows us
to compute the preexpectation of two if-statements. It is analogous to
the \rulerefx{any}{If1} rule, except that the resulting preexpectation is
of the form
${\textstyle\sum_{t,u\in\{\true,\false\}}\restrictast{t,u}(\PA_{t,u})}$.
(Here \symbolindexmarkonly\restrictast$\symbolindexmarkhighlight{\restrictast{t,u}(\PA)}
  := \adj{\pb\paren{{\oponp{\Mtruefalse Mt}{\XX_1}}\otimes{\oponp{\Mtruefalse Nt}{\YY_2}}}}
  \PA
  \pb\paren{{\oponp{\Mtruefalse Mt}{\XX_1}}\otimes{\oponp{\Mtruefalse Nt}{\YY_2}}}$.)
That is, the preexpectations are restricted to all four combinations
of true/false for the two if-conditions. And, consequently, we have a
premise for each of those four cases. (The rule is called
\rulerefx{any}{JointIf9} because, in the general case, there are nine
cases, due to explicit treatment of non-terminating cases.) For
convenience, we additionally state \ruleref{any}{JointIf} which considers
only the cases where the two if-statements are in sync (true/true or
false/false).

\medskip\noindent\textbf{While:} Similar to \rulerefx{any}{JointIf}, the
\rulerefx{any}{JointWhile} rule allows us to reason about while loops that
are in sync. Like with \rulerefx{any}{While1}, in contrast to \rulerefx{any}{JointIf},
we need to guess the invariant $\PA$. For an example, see \autoref{sec:example}.
One difference with the \rulerefx{any}{While1} rule is that \rulerefx{any}{While1} requires us
to prove termination in the semipartial and total case (not in the partical case),
while \rulerefx{any}{JointWhile} requires us to prove termination only in the total case.
(Intuitively, this is because in the semipartial case, termination is not required, it is
only required that both programs terminate with the same probability.)


\section{Soundness of the rules}
\label{sec:soundness}

In the following, we state our rules in their generic form (from which the rules in Figure~\ref{fig:rules} are easily derived) and prove their soundness.

\newcommand\rulebox[1]{
  \begin{mdframed}
    #1
  \end{mdframed}
}

\subsection{Basic rules}
\rulebox{
  \RULE{gen}{Skip}{}{\rhlgen\PA\SKIP\SKIP\PA}
}

\begin{proof}
  We use \autoref{lemma:qrhl.pure.new} to prove this rule. For any unit quantum memories $\psi_1$, $\psi_2$ over $\XXall_1$, $\XXall_2$ as input, the output memories are $\denotbot{\SKIP}(\proj{\psi_i})=\proj{\psi_i}$, $i=1,2$. So $\psi_1\otimes\psi_2$ is a separable coupling of the output states with the expected value $\tr A(\proj{\psi_1\otimes\psi_2})$ of the postexpectation, as the same as that of the preexpectation. Moreover, note that $\denotbot{\SKIP}(\proj{\ket{\bot_i}})=\proj{\ket{\bot_i}}$. Then for input $\bot$-bimemories $\ket{\psi_1}\otimes\ket{\bot_2}$, $\ket{\bot_1}\otimes\ket{\psi_2}$, and $\ket{\bot_1}\otimes\ket{\bot_2}$, the output $\bot$-bimomories keep unchanged. Since the preexpectation and the postexpectation are both $\PA$, the expected values of them are also the same. Therefore, the four conditions in \autoref{lemma:qrhl.pure.new} are satisfied, which implies \ruleref{gen}{Skip}.\end{proof}

\rulebox{
  \RULE{gen}{ExFalso}{}{\rhlgen 0\bc\bd\PB}
}
\begin{proof} For any separable mixed $\bot$-bimemory $\rho$ as input, we simply choose $\rho'=\denotbot{\bc}(\partr{2}\rho)\otimes\denotbot{\bd}(\partr{1}\rho)$ as the separable coupling of the output $\bot$-memories, and then $\tr 0\rho=0\le\tr\PB\rho'$. From \autoref{def:eqrhl.generic}, the rule holds.\end{proof}

\rulebox{
  \RULE{gen}{Sym}{\rhlgen \PA\bc\bd\PB}{\rhlgen{\SWAPlrbot^\ast\cdot \PA\cdot \SWAPlrbot}\bd\bc{\SWAPlrbot^\ast\cdot \PB\cdot \SWAPlrbot}}
}

Here, \symbolindexmarkonly\SWAPlrbot$\symbolindexmarkhighlight\SWAPlrbot:\elltwovb{\XXall_2}\otimes\elltwovb{\XXall_1}\mapsto\elltwovb{\XXall_1}\otimes\elltwovb{\XXall_2}$ is a linear operator defined by $\SWAPlrbot(\psi\otimes\phi)=\phi\otimes\psi$ for $\phi\in\elltwovb{\XXall_1}$, $\psi\in\elltwovb{\XXall_2}$.

\begin{proof}
    For any separable mixed $\bot$-bimemory $\rho$ on $\elltwovb{\XXall_2}\otimes\elltwovb{\XXall_1}$, let $\rho_2^\prime=\denotbot{\bd}(\partr{2}\rho)$ and $\rho_1^\prime=\denotbot{\bc}(\partr{1}\rho)$. From \autoref{def:eqrhl.generic}, it suffices to find a separable coupling $\rho'$ of $\rho_2^\prime$ and $\rho_1^\prime$ such that
  \[
    \tr\pb\paren{\SWAPlrbot^\ast\cdot\PA\cdot\SWAPlrbot}\rho
    \leq
    \tr\pb\paren{\SWAPlrbot^\ast\cdot\PB\cdot\SWAPlrbot}\rho'.
  \]

Let $\rho_{12}:=\SWAPlrbot\cdot\rho\cdot\SWAPlrbot^\ast$, then it is easy to verify that $\partr{1}\rho_{12}=\partr{2}\rho$ and $\partr{2}\rho_{12}=\partr{1}\rho$. Furthermore, $\rhlgen \PA\bc\bd\PB$ implies that for input $\bot$-bimemory $\rho_{12}$, there is a separable coupling $\rho_{12}^\prime$ of $\rho_1^\prime$ and $\rho_2^\prime$ such that $\tr\PA\rho_{12}\le\tr\PB\rho_{12}^\prime$. Now let $\rho':=\SWAPlrbot^\ast\cdot\rho_{12}^\prime\cdot\SWAPlrbot$, then the result immediately follows from \begin{equation*}\tr\pb\paren{\SWAPlrbot^\ast\cdot\PA\cdot\SWAPlrbot}\rho=\tr\PA\pb\rho_{12}\ \ \text{and}\ 
\tr\pb\paren{\SWAPlrbot^\ast\cdot\PB\cdot\SWAPlrbot}\rho'=\tr\PB\rho_{12}^\prime.
  \mathQED
\end{equation*}
\end{proof}

\rulebox{
  \RULE{gen}{Seq}{\rhlgen{\PA}{\bc_1}{\bd_1}{\PB}\\ \rhlgen{\PB}{\bc_2}{\bd_2}{\PC}}{\rhlgen{\PA}{\bc_1;\bc_2}{\bd_1;\bd_2}{\PC}}
}

\begin{proof} For any separable mixed $\bot$-bimemory $\rho$ as input, $\rhlgen{\PA}{\bc_1}{\bd_1}{\PB}$ implies that there is a separable coupling $\theta$ of output $\bot$-memories $\theta_1:=\denotbot{\bc_1}(\partr{2}\rho)$, $\theta_2:=\denotbot{\bd_1}(\partr{1}\rho)$, such that $\tr\PA\rho\le\tr\PB\theta$. Furthermore, $\rhlgen{\PB}{\bc_2}{\bd_2}{\PC}$ implies that for separable mixed $\bot$-bimemory $\theta$ as input, there is a separable coupling $\sigma$ of the output $\bot$-memories $\sigma_1:=\denotbot{\bc_2}(\theta_1)=\denotbot{\bc_1;\bc_2}(\partr{2}\rho)$ and $\sigma_2=\denotbot{\bd_2}(\theta_2)=\denotbot{\bd_1;\bd_2}(\partr{1}\rho)$ such that $\tr\PB\theta\le\tr\PC\sigma$. Then we have $\tr\PA\rho\le\tr\PB\theta\le\tr\PC\sigma$. From \autoref{def:eqrhl.generic}, $\rhlgen{\PA}{\bc_1;\bc_2}{\bd_1;\bd_2}{\PC}$ is obtained.\end{proof}

\rulebox{
  \RULE{gen}{Conseq}
  {\PA'\le \PA
    \\
    \rhlgen\PA\bc\bd\PB
    \\
    \PB\le \PB'}
  {\rhlgen{\PA'}\bc\bd{\PB'}}
}

\begin{proof} For any separable mixed $\bot$-bimemory $\rho$ as input, $\rhlgen\PA\bc\bd\PB$ implies that there is a separable coupling $\rho'$ of the output memories such that $\tr\PA\rho\le\tr\PB\rho'$, then it follows immediately from $\PA'\le\PA$ and $\PB'\le\PB$ that
$$\tr\PA'\rho\le\tr\PA\rho\le\tr\PB\rho'\le\tr\PB'\rho'.$$
From \autoref{def:eqrhl.generic}, the rule holds.\end{proof}

\rulebox{
  \RULE{gen}{Scale}{\rhlgen \PA\bc\bd\PB\\ \lambda\in [0,+\infty)}{\rhlgen{\lambda\PA}\bc\bd{\lambda\PB}}
}
\begin{proof}For any separable mixed $\bot$-bimemory $\rho$ as input, $\rhlgen\PA\bc\bd\PB$ implies that there is a separable coupling $\rho'$ of the output memories such that $\tr\PA\rho\le\tr\PB\rho'$, then we have
$\tr(\lambda\PA)\rho=\lambda\tr\PA\rho\le\lambda\tr\PB\rho'=\tr(\lambda\PB)\rho'$. The rule follows from \autoref{def:eqrhl.generic}.\end{proof}

\subsection{One sided rules}
\rulebox{
  \RULE{gen}{Apply1}{}{
    \pb\rhlgen{
      \adj{\paren{\opon U{\XX_1}}}\PA\paren{\opon U{\XX_1}}
      + \restrictast{\bot_1}(\PA)
    }{\apply U\XX}\SKIP\PA
  }
}

Here, $\restrictast{\bot}(\PA):=\Pbot\PA\Pbot$ is the dual of $\restrict{\bot}$ as $\tr\restrictast{\bot}(\PA)\rho=\tr\PA\restrict{\bot}(\rho)$.

\begin{proof} We use \autoref{lemma:qrhl.pure.new} to prove this rule. For any unit quantum memories $\psi_1$, $\psi_2$ over $\XXall_1$, $\XXall_2$ as input, the output memories are  $$\denotbot{\apply U\XX}(\proj{\psi_1})=\paren{\opon U{\XX_1}}\proj{\psi_1}\adj{\paren{\opon U{\XX_1}}}$$ and $\denotbot{\SKIP}(\proj{\psi_2})=\proj{\psi_2}$. So $\paren{\opon U{\XX_1}}\proj{\psi_1\otimes\psi_2}\adj{\paren{\opon U{\XX_1}}}$ is a separable coupling of the output memories with the expected value $$\tr\PA\paren{\opon U{\XX_1}}\proj{\psi_1\otimes\psi_2}\adj{\paren{\opon U{\XX_1}}}$$ of the postexpectation, as the same as the expected value
$$\tr[\adj{\paren{\opon U{\XX_1}}}\PA\paren{\opon U{\XX_1}}+\restrictast{\bot_1}(\PA)]\proj{\psi_1\otimes\psi_2}=\tr\adj{\paren{\opon U{\XX_1}}}A\paren{\opon U{\XX_1}}\proj{\psi_1\otimes\psi_2}$$ of the preexpectation. Similarly, it is easy to see that for $\bot$-bimemories $\ket{\psi_1}\otimes\ket{\bot_2}$, $\ket{\bot_1}\otimes\ket{\psi_2}$, and $\ket{\bot_1}\otimes\ket{\bot_2}$ as input, the expected values of the preexpectation and of the postexpectation are also the same, noting that $\denotbot{\apply U\XX}(\proj{\bot_1})=\proj{\bot_1}$ and $\denotbot{\SKIP}(\proj{\ket{\bot_2}})=\proj{\ket{\bot_2}}$.\end{proof}

\rulebox{
  \RULE{gen}{Init1}{}{
    \pb\rhlgen{
      \id_{\XX_1}\otimes\paren{\adj\psi\otimes\id_{\negXX\XX_1}}\PA\paren{\psi\otimes\id_{\negXX\XX_1}}
      + \restrictast{\bot_1}(\PA)
    }{\init\XX\psi}{\SKIP}{\PA}
  }
}

Here, \symbolindexmarkonly\negXX$\symbolindexmarkhighlight{\negXX\XX_1}:=\XXall_1\XXall_2\setminus\XX_1$.
Here we use that $\psi\in\elltwov{\XX_1}$
  can be interpreted as an operator $\psi:\setC\to\elltwov{\XX_1}$,
  hence $\psi\otimes\id_{\negXX\XX_1}$ is an operator $\negXX\XX_1\to\XXall_1\XXall_2$.
Thus the preexpectation is a positive operator on $\XXall_1\XXall_2$ as required.

\begin{proof} We use \autoref{lemma:qrhl.pure.new} to prove this rule. For any unit quantum memories $\phi_1$, $\phi_2$ over $\XXall_1$, $\XXall_2$ as input, the output memories are $$\denotbot{\init\XX\psi}(\proj{\phi_1})=\proj\psi\otimes \partr{\XX_1}\proj{\phi_1}$$ and $\denotbot{\SKIP}(\proj{\phi_2})=\proj{\phi_2}$. So 
$$\proj\psi\otimes(\partr{\XX_1}\proj{\phi_1})\otimes\proj{\phi_2}=\paren{\psi\otimes\id_{\neg\XX_1}}[\partr{\XX_1}\proj{\phi_1\otimes\phi_2}]\paren{\psi^\ast\otimes\id_{\neg\XX_1}}$$
is a separable coupling of the output memories with the expected value
\begin{equation*}\begin{split}
&\tr\PA\paren{\psi\otimes\id_{\neg\XX_1}}[\partr{\XX_1}\proj{\phi_1\otimes\phi_2}]\paren{\psi^\ast\otimes\id_{\neg\XX_1}}\\
&\starrel=\tr\paren{\psi^\ast\otimes\id_{\neg\XX_1}}\PA\paren{\psi\otimes\id_{\neg\XX_1}}\partr{\XX_1}\proj{\phi_1\otimes\phi_2}\\&=\tr[\id_{\XX_1}\otimes\paren{\psi^\ast\otimes\id_{\neg\XX_1}}\PA\paren{\psi\otimes\id_{\neg\XX_1}}]\proj{\phi_1\otimes\phi_2}\end{split}\end{equation*} of the postexpectation, as the same as the expected value
$$\tr[\id_{\XX_1}\otimes\paren{\psi^\ast\otimes\id_{\neg\XX_1}}\PA\paren{\psi\otimes\id_{\neg\XX_1}}+\restrictast{\bot}(\PA)]\proj{\phi_1\otimes\phi_2}$$ of the preexpectation. Here $(*)$ is due to the
circularity of the trace (i.e., $\tr AB=\tr BA$ for a trace clasee operator $A$ and a bounded operator $B$).
Similarly, it is easy to see that for $\bot$-bimemories $\ket{\psi_1}\otimes\ket{\bot_2}$, $\ket{\bot_1}\otimes\ket{\psi_2}$, and $\ket{\bot_1}\otimes\ket{\bot_2}$ as input, the expected values of the preexpectation and of the postexpectation are also the same, noting that $\denotbot{\init\XX\psi}(\proj{\bot_1})=\proj{\bot_1}$ and $\denotbot{\SKIP}(\proj{\ket{\bot_2}})=\proj{\ket{\bot_2}}$.\end{proof}

\rulebox{
  \RULE{gen}{If1}{
    \rhlgen{\PA_T}{\bc_T}{\bd}{\PB}\\
    \rhlgen{\PA_F}{\bc_F}{\bd}{\PB}\\
    \rhlgen{\PA_\bot}{\bc_\bot}\bd\PB\\
  }{
    \rhlgen{
      \restrictast{\true}(\PA_T)+\restrictast{\false}(\PA_F)
      +\restrictast{\bot_1}(\PA_\bot)
    }{\ifte{M}{\XX}{\bc_T}{\bc_F}}{\bd}{\PB}
  }
}

Here \symbolindexmarkonly\restrictast$\symbolindexmarkhighlight{\restrictast t(A)}:= \adj{\paren{\opon{\Mtruefalse Mt}{\XX_1}}}A
\paren{\opon{\Mtruefalse Mt}{\XX_1}}$ is the Heisenberg-Schr\"{o}dinger dual of $\restrict{t}$ for $t=\true,\false$, as $\tr A\restrict{t}(\rho)=\tr\restrictast{t}(A)\rho$. Note that in this rule, $\bc_\bot$ occurs only in the premises. It can thus be chosen arbitrarily. Typical choices would include $\bc_\bot:=\SKIP$ or $\bc_\bot:=\abort$.

\begin{proof} We use \autoref{lemma:qrhl.pure.new} to prove this rule. For any unit quantum memories $\psi_1$, $\psi_2$ over $\XXall_1$, $\XXall_2$ as input, the output memories are
$$\rho_1:=\denotbot{\ifte M\XX{\bc_T}{\bc_F}}(\proj{\psi_1})=\denotbot{\bc_T}(\restrict{\true}(\proj{\psi_1}))+\denotbot{\bc_F}(\restrict{\false}(\proj{\psi_1}))$$
and $\rho_2:=\denotbot{\bd}(\proj{\psi_2})$. Let $p\cdot\alpha_T:=\restrict{\true}(\proj{\psi_1})$, $(1-p)\cdot\alpha_F:=\restrict{\false}(\proj{\psi_1})$, where $p\in[0,1]$ and $\tr\alpha_T=\tr\alpha_F=1$. Then $\rhlgen{\PA_T}{\bc_T}{\bd}{\PB}$ implies that there is a separable coupling $\rho_T$ of $\denotbot{\bc_T}(\alpha_T)$ and $\rho_2$ such that $\tr\PA_T(\alpha_T\otimes\proj{\psi_2})\le\tr\PB\rho_T$, and $\rhlgen{\PA_F}{\bc_F}{\bd}{\PB}$ implies that there is a separable coupling $\rho_F$ of $\denotbot{\bc_F}(\alpha_F)$ and $\rho_2$ such that $\tr\PA_F(\alpha_F\otimes\beta)\le\tr\PB\rho_F$. Then $\rho:=p\cdot\rho_T+(1-p)\cdot\rho_F$ is a separable coupling of $\rho_1$ and $\rho_2$ satisfying that
\begin{equation*}\begin{split}&\tr\pb\paren{\restrictast{\true}(\PA_T)+\restrictast{\false}(\PA_F)+\restrictast{\bot_1}(\PA_\bot)}\proj{\psi_1\otimes\psi_2}\\
&=\tr\PA_T[\restrict{\true}(\proj{\psi_1})\otimes\proj{\psi_2}]+\tr\PA_F[\restrict{\true}(\proj{\psi_1})\otimes\proj{\psi_2}]+0\\
&=p\tr\PA_T(\alpha_T\otimes\beta)+(1-p)\tr\PA_F(\alpha_F\otimes\beta)\le p\tr\PB\rho_T+(1-p)\tr\PB\rho_F=\tr\PB\rho.\end{split}\end{equation*}
Similarly, we have.
\begin{equation*}\begin{split}&\tr\pb\paren{\restrictast{\true}(\PA_T)+\restrictast{\false}(\PA_F)+\restrictast{\bot_1}(\PA_\bot)}\proj{\psi_1\otimes\ket{\bot_2}}\\
&=p\tr\PA_T(\alpha_T\otimes\Pbot)+(1-p)\tr\PA_F(\alpha_F\otimes\Pbot)\\
&\starrel\le p\tr\PB[\denotbot{\bc_T}(\alpha_T)\otimes\Pbot]+(1-p)\tr\PB[\denotbot{\bc_F}(\alpha_F)\otimes\Pbot]=\tr\PB(\rho_1\otimes\Pbot).\end{split}\end{equation*}
Here, $(*)$ is because $\tr\PA_T(\alpha_T\otimes\Pbot)\le\tr\PB[\denotbot{\bc_T}(\alpha_T)\otimes\Pbot]$ and $\PA_F(\alpha_F\otimes\Pbot)\le\tr\PB[\denotbot{\bc_F}(\alpha_F)\otimes\Pbot]$ which directly follow from the premises.
Moreover, 
\begin{equation*}\begin{split}&\tr\pb\paren{\restrictast{\true}(\PA_T)+\restrictast{\false}(\PA_F)+\restrictast{\bot_1}(\PA_\bot)}\proj{\Pbot\otimes\beta}=\tr\PA_\bot(\Pbot\otimes\beta)\\
&\starstarrel\le\tr\PB[\Pbot\otimes\denotbot{\bd}(\beta)],\end{split}\end{equation*}
where, $(**)$ is from the premise $\rhlgen{\PA_\bot}{\bc_\bot}\bd\PB$. Then the proof is completed by choosing $\beta=\psi_2$ and $\ket{\bot_2}$\end{proof}

\rulebox{
  \RULE{gen}{While1}
  {\pb\rhlgen{\PA}{\bc}{\SKIP}{\restrictast{\true}(\PA)+\restrictast{\false}(\PB)}
    \\
    \while M\XX\bc\text{ is terminating}
  }
  {\pb\rhlgen{\restrictast{\true}(\PA)+\restrictast{\false}(\PB)}{\while M\XX\bc}{\SKIP}{\PB}}
}

{Here, the rule \rulerefx{gen}{While1} for quantum while loops is established by using an invariant of the form $\restrict{\true}(\PA)+\restrict{\false}(\PB)$. Such a form of quantum invariant was proposed in quantum Hoare logic \cite{ying12floyd} for reasoning about (non-relational) quantuam while loops.

Note that this rule requires us to establish termination of
  $\while M\XX\bc$
  as a precondition.  Since this is a statement about a single program
  (non-relational), it can be shown using existing approaches (e.g.,
  \cite{LY18termination}) and is outside
  the scope of this paper.
}

\begin{proof}
We use \autoref{lemma:qrhl.pure.new} to prove the conclusion from the premises. Note that the preexpectation enjoys 
\begin{equation}[\restrictast{\true}(\PA)+\restrictast{\false}(\PB)](\ket{\bot_1}\otimes\id)=0,\label{eq.ortho.bot1}\end{equation}
then the expected value of the preexpectation is always $0$ for any input $\bot$-bimemory of form $\proj{\ket\bot\otimes\phi}$ where $\phi\in\elltwovb{\XXall_2}$. So, it suffices to prove that
\begin{equation}\tr\pb\paren{\restrictast{\true}(\PA)+\restrictast{\false}(\PB)}(\alpha\otimes\beta)\le\tr\PB(\rho\otimes\beta),\label{eq:while1-result}\end{equation}
where $\alpha:=\proj{\psi_1}$, $\rho:=\denotbot{\while M\XX\bc}(\alpha)$, and $\beta:=\proj{\psi_2}$ or $\Pbot$, for any unit quantum memories $\psi_1$, $\psi_2$ over $\XXall_1$, $\XXall_2$, respectively. Note that $\rho=\denot{\while M\XX\bc}(\alpha)$ since $\while M\XX\bc$ is terminating. To express $\rho$ in a more explicit form, let $\alpha_0:=\alpha$, and for $n=0,1,\dots$, let $\alpha_{n+1}:=\llbracket\bc\rrbracket\circ\restrict{\true}(\alpha_n)$. Then $\rho=\sum_{n=0}^\infty \restrict{\false}(\alpha_n)$ by definition of the semantics of $\whilekw$.

From the premise $\rhlgen{\PA}{\bc}{\SKIP}{\restrictast{\true}(\PA)+\restrictast{\false}(\PB)}$, we have \begin{equation*}\begin{split}&\tr\PA(\restrict{\true}(\alpha_n)\otimes\beta)\le\tr(\restrictast{\true}(\PA)+\restrictast{\false}(\PB))(\denotbot{\bc}\circ\restrict{\true}(\alpha_n)\otimes\beta)\\
&\eqrefrel{eq.ortho.bot1}=\tr(\restrictast{\true}(\PA)+\restrictast{\false}(\PB))(\denot{\bc}\circ\restrict{\true}(\alpha_n)\otimes\beta)=\tr(\restrictast{\true}(\PA)+\restrictast{\false}(\PB))(\alpha_{n+1}\otimes\beta),\end{split}\end{equation*} noting that $\denotbot{\bc}\circ\restrict{\true}(\alpha_n)\otimes\beta$ is the unique separable coupling of the ouput $\bot$-memories as $\beta$ is the density operator of a pure $\bot$-memory. It further implies that
\begin{equation}\begin{split}\label{eq:while1.stepn}
  &\tr\PA\pb\paren{\restrict{\true}(\alpha_n)\otimes\beta}-\tr\PA\pb\paren{\restrict{\true}(\alpha_{n+1})\otimes\beta}=\tr\PA\pb\paren{\restrict{\true}(\alpha_n)\otimes\beta}-\tr\restrictast{\true}(\PA)(\alpha_{n+1}\otimes\beta)\\
&\le\tr\restrictast{\false}(\PB)(\alpha_{n+1}\otimes\beta)=\tr\PB(\restrict{\false}(\alpha_{n+1})\otimes\beta).
\end{split}\end{equation}
Therefore, 
\begin{equation}\begin{split}
&\tr\pb\paren{\restrictast{\true}(\PA)+\restrictast{\false}(\PB)}(\alpha\otimes\beta)=\tr\PB\pb\paren{\restrict{\false}(\alpha_0)\otimes\beta}+\tr\PA\pb\paren{\restrict{\true}(\alpha_0)\otimes\beta}\\
&=\tr\PB\pb\paren{\restrict{\false}(\alpha_0)\otimes\beta}+\tr\PA\pb\paren{\restrict{\true}(\alpha_n)\otimes\beta}\\
&\ \ \ \ \ \ \ \ \ \ \ \ \ \ \ \ \ \ \ \ \ \ +\sum_{i=0}^{n-1}\left[\tr\PA\pb\paren{\restrict{\true}(\alpha_i)\otimes\beta}-\tr\PA\pb\paren{\restrict{\true}(\alpha_{i+1})\otimes\beta}\right]\\
&\eqrefrel{eq:while1.stepn}\le\tr\PB(\restrict{\false}(\alpha_0)\otimes\beta)+\tr\PA\pb\paren{\restrict{\true}(\alpha_n)\otimes\beta}+\sum_{i=0}^{n-1}\tr\PB(\restrict{\false}(\alpha_{i+1})\otimes\beta)\\
&=\tr\PB(\sum_{i=0}^{n}\restrict{\false}(\alpha_i)\otimes\beta)+\tr\PA(\restrict{\true}(\alpha_n)\otimes\beta)\le\tr\PB(\rho\otimes\beta)+\tr\PA(\restrict{\true}(\alpha_n)\otimes\beta).
  \label{eq:while1-n}
\end{split}\end{equation}
On the other hand, note that 
\begin{multline*}
  \tr\restrict{\true}(\alpha_n)=\tr\alpha_n-\tr\restrict{\false}(\alpha_n)=\tr\denot{\bc}\circ\restrict{\true}(\alpha_{n-1})-\tr\restrict{\false}(\alpha_n)\\
\le\tr\restrict{\true}(\alpha_{n-1})-\tr\restrict{\false}(\alpha_n),
\end{multline*}
by indution on $n$, it is easy to prove that
\begin{equation}\tr\restrict{\true}(\alpha_n)=1-\sum_{i=0}^n\tr\restrict{\false}(\alpha_i)\overset{n\rightarrow\infty}{\longrightarrow} 1-\tr\rho\starrel=0.\label{eq.while1.term.lim}\end{equation}
Here $(*)$ is due to the termination of $\while M\XX\bc$. Noting that $\PA$ is a bounded operator, then \autoref{eq.while1.term.lim} implies that $\lim_{n\rightarrow}\tr\PA(\restrict{\true}(\alpha_n)\otimes\beta)=0$, which further implies \autoref{eq:while1-result} together with \autoref{eq:while1-n}.
\end{proof}

We refer to the symmetric rules of \rulerefx{gen}{Apply1},
  \rulerefx{gen}{Init1}, \rulerefx{gen}{If1}, and \rulerefx{gen}{While1} (obtained by
  applying \rulerefx{gen}{Sym}) as \implicitRULE{gen}{Apply2},
  \implicitRULE{gen}{Init2}, \textsc{If2}, and
  \implicitRULE{gen}{While2}.
  For example:
  \rulebox{
  \RULE{gen}{If2}{\rhlgen{\PA_T}{\bc}{\bd_T}{\PB}\\ \rhlgen{\PA_F}{\bc}{\bd_F}{\PB}\\ \rhlgen{\PA_\bot}{\bc}{\bc_\bot}{\PB}}{\rhlgen{\restrictast{\true}(\PA_T)+\restrictast{\false}(\PA_F)+\restrictast{\bot_2}(\PA_\bot)}{\bc}{\ifte{N}{\YY}{\bd_T}{\bc_F}}{\PB}}
}

\subsection{Two sided rules}

\rulebox{
    \RULE{gen}{JointIf9}
    {
      \rhlgen{\PA_{t,u}}
      {\bc_t}{\bd_u}
      \PB
      \text{
        for $t,u\in\{\true,\false,\bot\}$
      }
    }
    {
      \begin{array}{rl}
      \pb\braces{
        \textstyle
        \sum_{t,u\in\{\true,\false,\bot\}}
        \restrictast{t,u}(\PA_{t,u})
      }
        &{\ifte M\XX{\bc_{\true}}{\bc_{\false}}}
        \crcr
      \simtemplate{gen}
      & {\ifte N\YY{\bd_{\true}}{\bd_{\false}}}
      \quad  \pb\braces{\PB}
      \end{array}
    }
  }

\begin{proof}
  For convenience, we denote by $\restrictast{i,t}$ the action of $\restrictast{t}$ from the $i$th program, for $i=1,2$ and $t=\true,\false,\bot$. Then $\restrictast{t,u}=\restrictast{1,t}\circ\restrictast{2,u}=\restrictast{2,u}\circ\restrictast{1,t}$.
By using \ruleref{gen}{If1}, it follows from the premises for every $u\in\{\true,\false,\bot\}$ that
\begin{equation}\rhlgen{\restrictast{1,\true}(\PA_{\true,u})+\restrictast{1,\false}(\PA_{\false,u})+\restrictast{1,\bot}(\PA_{\bot,u})}{\ifte M\XX{\bc_{\true}}{\bc_{\false}}}{\bd_u}{\PB}\label{equ:if1u}.\end{equation}
Let $\PA_u:=\restrictast{1,\true}(\PA_{\true,u})+\restrictast{1,\false}(\PA_{\false,u})$. Then by using \ruleref{gen}{If2}, it follows from \autoref{equ:if1u} for all $u\in\{\true,\false,\bot\}$ that
$$\rhlgen{\PA}{\ifte M\XX{\bc_{\true}}{\bc_{\false}}}{\ifte N\YY{\bd_{\true}}{\bd_{\false}}}{\PB},$$
where $\PA:=\restrictast{2,\true}(\PA_{\true})+\restrictast{2,\false}(\PA_{\false})+\restrictast{2,\bot}(\PA_\bot)= \sum_{t,u\in\{\true,\false,\bot\}}\restrictast{t,u}(\PA_{t,u})$.\end{proof}

\rulebox{
  \RULE{gen}{JointWhile}{\rhlgen{\PA}{\bc}{\bd}{\restrictast{\true,\true}(\PA)+\restrictast{\false,\false}(\PB)}\\
    \while M\XX\bc\text{  or  }\while N\YY\bd\text{ is terminating}}{\rhlgen{\restrictast{\true,\true}(\PA)+\restrictast{\false,\false}(\PB)}{\while M\XX\bc}{\while N\YY\bd}{\PB}}
}

Note that this rule requires us to establish termination of
  $\while M\XX\bc$
  as a precondition.  Since this is a statement about a single program
  (non-relational), it can be shown using existing approaches (e.g.,
  \cite{LY18termination}) and is outside
  the scope of this paper.

\begin{proof}
We use \autoref{lemma:qrhl.pure.new} to prove the conclusion from the premises. Note that the preexpectation enjoys 
\begin{equation}[\restrictast{\true}(\PA)+\restrictast{\false}(\PB)](\ket{\bot_1}\otimes\id)=[\restrictast{\true}(\PA)+\restrictast{\false}(\PB)](\id\otimes\ket{\bot_2})=0,\label{eq:orthor.bot2}\end{equation}
the expected value of the preexpectation is always $0$ for any input $\bot$-bimemory of form $\proj{\ket\bot\otimes\phi}$ or $\proj{\psi\otimes\ket\bot}$. So it suffices to prove for any normalized quantum memories $\psi_1,\psi_2$ over $\XXall_1,\XXall_2$ that there is a separable coupling $\rho$ of $\rho_1+(1-\tr\rho_1)\Pbot$ and $\rho_2+(1-\tr\rho_2)\Pbot$ such that
\begin{equation}\tr\left(\restrictast{\true,\true}(\PA)+\restrictast{\false,\false}(\PB)\right)\proj{\psi_1\otimes\psi_2}\le\mathrm{tr}B\rho,\label{equ:while2-result}\end{equation}
where $\rho_1:=\denot{\while M\XX\bc}(\proj{\psi_1})$ and $\rho_2:=\denot{\while N\YY\bd}(\proj{\psi_2})$ are the output memories. Let
\[
  \restrict{1,t}(\rho):=
  \paren{\opon{\Mtruefalse Mt}{\XX_1}}\rho\adj{\paren{\opon{\Mtruefalse Mt}{\XX_1}}},\quad
  \restrict{2,t}(\rho):=\paren{\opon{\Mtruefalse Nt}{\YY_2}}\rho\adj{\paren{\opon{\Mtruefalse Nt}{\YY_2}}},
\]
for $t=\true,\false$. Then $\restrict{t,t}=\restrict{1,t}\otimes\restrict{2,t}$.  Let $\alpha_0:=\proj{\psi_1}$, $\beta_0:=\proj{\psi_2}$, $\alpha_{n+1}:=\llbracket\bc\rrbracket\circ\restrict{1,\true}(\alpha_n)$ and $\beta_{n+1}:=\llbracket\bd\rrbracket\circ\restrict{2,\true}(\beta_n)$ for $n=0,1,\dots$ Then it is easy to verify that $\rho_1=\sum_{n=0}^\infty \restrict{1,\false}(\alpha_n)$, $\rho_2=\sum_{n=0}^\infty\restrict{2,\false}(\beta_n)$.

Now we construct by induction on $n$ a sequence of separable mixed bimemories $\eta_0,\eta_1,\dots,\eta_n,\dots$ over $\elltwov{\XXall_1\XXall_2}$ as follows: let $\eta_0:=\alpha_0\otimes\beta_0$ as the basis; suppose $\eta_n$ has been constructed, then from
$\rhlgen{\PA}{\bc}{\bd}{\restrictast{\true,\true}(\PA)+\restrictast{\false,\false}(\PB)}$, we choose $\restrict{\true,\true}(\eta_n)$ as the (unnormalized) input $\bot$-bimemory and construct $\eta_{n+1}^\prime$ as the separable coupling of the output $\bot$-memories, i.e.,
\begin{gather}
  \partr{2}\eta_{n+1}^\prime=\denotbot\bc(\partr{2}\restrict{\true}(\eta_n)),\ \ \partr{1}\eta_{n+1}^\prime=\denotbot\bd(\partr{1}\restrict{\true}(\eta_n)),\ \ \text{and}
  \label{equ:while2-body1}
  \\
  \label{equ:while2-body}
  \tr\PA\restrict{\true,\true}(\eta_n)
  \le
  \tr\pb\paren{\restrictast{\true,\true}(\PA)+\restrictast{\false,\false}(\PB)}\eta_{n+1}^\prime.
\end{gather}
Now we construct $\eta_{n+1}:=\restrict{\not\bot,\not\bot}(\eta_{n+1}^\prime)$. It follows from \autoref{eq:orthor.bot2} that
$$\tr\pb\paren{\restrictast{\true,\true}(\PA)+\restrictast{\false,\false}(\PB)}\eta_{n+1}^\prime=\tr\pb\paren{\restrictast{\true,\true}(\PA)+\restrictast{\false,\false}(\PB)}\eta_{n+1}.$$ Then \autoref{equ:while2-body} actually means that

\begin{equation}\PA\restrict{\true,\true}(\eta_n)-\PA\restrict{\true,\true}(\eta_{n+1})\le\PB\restrict{\false,\false}(\eta_{n+1}).\label{eq:while2.stepn}\end{equation}
Furthermore, we prove by induction on $n$ that $\partr{2}\eta_n\le\alpha_n$ and $\partr{1}\eta_n\le\beta_n$ for $n=0,1,\dots$. The result obviously holds for $\eta_0=\alpha_0\otimes\beta_0$. Suppose the result holds for $n$, then we prove for $n+1$. To this end, we note that
\begin{align}
  \partr{2}\restrict{\true,\true}(\eta_n)
  &= \partr{2}\restrict{1,\true}\circ(\restrict{2,\true}+\restrict{2,\false})(\eta_n)
    - \partr2 \paren{\restrict{1,\true}\circ\restrict{2,\false}}(\eta_n)
    \notag
  \\ 
  &\starrel=\restrict{1,\true}(\partr{2}\eta_n)-\partr{2}\restrict{\true,\false}(\eta_n)\le \restrict{1,\true}(\partr{2}\eta_n).
  \label{eq:tr.truetrue.eta}
\end{align}
Here $(*)$ follows since $\adj{\Mtrue N}\Mtrue N+\adj{\Mfalse N}\Mfalse N=\id$
  by definition of binary measurements and hence $\tr\circ(\restrict{2,\true}+\restrict{2,\false})=\id$.  
We have
\begin{multline*}
  \partr{2}\eta_{n+1}=\partr{2}\restrict{\not\bot,\not\bot}(\eta_{n+1}^\prime)
  \eqrefrel{equ:while2-body1}=
  \denotbot\bc\pb\paren{\partr{2}\restrict{\not\bot,\not\bot}\circ\restrict{\true,\true}(\eta_n)}=\denot{\bc}\circ\partr{2}\restrict{\true,\true}(\eta_n)\\
  \eqrefrel{eq:tr.truetrue.eta}\le
  \llbracket\bc\rrbracket\circ\restrict{1,\true}(\partr{2}\eta_n)\starrel\le\llbracket\bc\rrbracket\circ\restrict{1,\true}(\alpha_n)=\alpha_{n+1}.
\end{multline*}
Here$(*)$ is from the induction hypothesis $\partr{2}\eta_n\le\alpha_n$. Hence $\tr_2\eta_n\leq\alpha_n$ for all $n$.
Similarly, $\partr{1}\eta_{n}\le\beta_{n}$ for all $n$.

Let $\eta:=\sum_{n=0}^\infty \restrict{\false,\false}(\eta_n)$ Then it is a separable state such that
\begin{equation*}
  \partr{2}\eta
  =\sum_{n=0}^\infty\partr{2}\restrict{\false,\false}(\eta_n)
  \starrel\le\sum_{n=0}^\infty \restrict{1,\false}(\partr{2}\eta_n)
  \le\sum_{n=0}^\infty \restrict{1,\false}(\alpha_n)
  =\rho_1,
\end{equation*}
and similarly, $\partr{1}\eta\le\rho_2$. Here $(*)$ is proven analogously to \eqref{eq:tr.truetrue.eta}. Now construct
$$\rho:=\eta+[\rho_1-\partr{2}\eta+(1-\tr\rho_1)\Pbot]\otimes[\rho_2-\partr{1}\eta+(1-\tr\rho_2)\Pbot]/(1-\tr\eta).$$ It is easy to verify that $\eta\le\rho$, $\partr{2}\rho=\rho_1+(1-\tr\rho_1)\Pbot$, and $\partr{1}\rho=\rho_2+(1-\tr\rho_2)\Pbot$. Now we have 
\begin{equation}\begin{split}
&\tr\pb\paren{\restrictast{\true}(\PA)+\restrictast{\false}(\PB)}\proj{\psi_1\otimes\psi_2}=\tr\PB\restrict{\false,\false}(\eta_0)+\tr\PA\restrict{\true,\true}(\eta_0)\\
&=\tr\PB\restrict{\false,\false}(\eta_0)+\tr\PA\restrict{\true,\true}(\eta_n)+\sum_{i=0}^{n-1}\left[\tr\PA\restrict{\true,\true}(\eta_i)-\tr\PA\restrict{\true,\true}(\eta_{i+1})\right]\\
&\eqrefrel{eq:while2.stepn}\le\tr\PB\restrict{\false,\false}(\eta_0)+\tr\PA\restrict{\true,\true}(\eta_n)+\sum_{i=0}^{n-1}\tr\PB\restrict{\false,\false}(\eta_{i+1})\\
&=\tr\PB\sum_{i=0}^n\restrict{\false,\false}(\eta_i)+\tr\PA\restrict{\true,\true}(\eta_n)\le\tr\PB\eta+\tr\PA\restrict{\true,\true}(\eta_n)\\
&\le\tr\PB\rho+\tr\PA\restrict{\true,\true}(\eta_n),
  \label{eq:while2-n}
\end{split}\end{equation}
for any $n\ge 1$. From the premise we assume without loss of generality that $\while M\XX\bc$ is terminating. Then similar to \autoref{eq.while1.term.lim},
$$\tr\restrict{\true,\true}(\eta_n)\eqrefrel{eq:tr.truetrue.eta}\le\restrict{1,\true}(\partr{2}\eta_n)\le\restrict{1,\true}(\alpha_n)\overset{n\rightarrow\infty}{\longrightarrow}1-\tr\rho_1=0.$$
\autoref{equ:while2-result} immediately follows from \autoref{eq:while2-n}.\end{proof}

\subsection{While rules for partial correctness}\label{sec:while}

\rulebox{
  \RULE{par}{While1Par}
  {\pb\rhlpar{\PA}{\bc}{\SKIP}{\restrictast{\true}(\PA)+\restrictast{\false}(\PB)}
    \\
    \PA\le\id
  }
  {\pb\rhlpar{\restrictast{\true}(\PA)+\restrictast{\false}(\PB)}{\while M\XX\bc}{\SKIP}{\PB}}
}

\begin{proof} Consider any normalized quantum memories $\psi,\phi$ as input of the programs,
  and let $\alpha:=\proj\psi$, $\beta:=\proj\phi$ be their density operators and $\sigma:=\llbracket\while M\XX\bc\rrbracket(\alpha)$, then by \autoref{lemma:qrhl.pure}, we choose $\sigma\otimes\beta$ as the subcoupling of output memories, and it suffices to show
\begin{equation}\tr\pb\paren{\restrictast{\true}(\PA)+\restrictast{\false}(\PB)}(\alpha\otimes\beta)\le\tr\PB(\sigma\otimes\beta)+1-\tr\sigma.\label{equ:while1par-result}\end{equation}
To express $\sigma$ in a more explicit form, let $\alpha_0:=\alpha$, and for $n=0,1,\dots$, let $\alpha_{n+1}:=\llbracket\bc\rrbracket\circ\restrict{\true}(\alpha_n)$, then $\sigma=\sum_{n=0}^\infty \restrict{\false}(\alpha_n)$ by definition of the semantics of $\whilekw$.

The precondition $\rhlpar{\PA}{\bc}{\SKIP}{\restrictast{\true}(\PA)+\restrictast{\false}(\PB)}$ implies that for input bimemory $\restrict{\true}(\alpha_n)\otimes\beta$, there is a subcoupling $\rho_n$ of the output memories $\denot{\bc}\circ\restrict{\true}(\alpha_n)=\alpha_{n+1}$ and $[\tr\restrict{\true}(\alpha_n)]\beta$ such that $\tr\restrict{\true}(\alpha_n)+\tr\rho_n\ge\tr\alpha_{n+1}+\tr[\restrict{\true}(\alpha_n)\otimes\beta]$, and
\begin{equation}\tr\PA(\restrict{\true}(\alpha_n)\otimes\beta)\le\tr\left[\restrictast{\true}(\PA)+\restrictast{\false}(\PB)\right]\rho_n+\tr\restrict{\true}(\alpha_n)-\tr\rho_n.\label{eq:alpha.n.prime}\end{equation}
Note that $\beta$ is the density operator of a pure quantum memory, it is easy to verify that $\rho_n=\alpha_{n+1}\otimes\beta$. Then (\ref{eq:alpha.n.prime}) can be rewritten as
$$\tr(\id-\PB)(\restrict{\false}(\alpha_{n+1})\otimes\beta)\le\tr(\id-\PA)(\restrict{\true}(\alpha_n-\alpha_{n+1})\otimes\beta),$$ then
\begin{equation*}\begin{split}&\tr(\id-\PB)\left(\restrict{\false}\left(\sum_{i=1}^{n+1}\alpha_n\right)\otimes\beta\right)\le\tr(\id-\PA)\left[\restrict{\true}\left(\sum_{i=0}^n(\alpha_i-\alpha_{i+1})\right)\otimes\beta\right]\\
&=\tr(\id-\PA)(\restrict{\true}(\alpha_0-\alpha_{n+1})\otimes\beta)\starrel\le\tr(\id-\PA)(\restrict{\true}(\alpha_0)\otimes\beta).\end{split}\end{equation*}
Here $(*)$ is from the precondition $\PA\le\id$. Now, let $n\rightarrow\infty$ then we have.
$$\tr(\id-\PB)[\sigma\otimes\beta-\restrict{\false}(\alpha)\otimes\beta]\le\tr(\id-\PA)[\restrict{\true}(\alpha)\otimes\beta]$$
which can be rewritten as (\ref{equ:while1par-result}).\end{proof}

\rulebox{
  \RULE{par}{JointWhilePar}{\rhlpar{\PA}{\bc}{\bd}{\restrictast{\true,\true}(\PA)+\restrictast{\false,\false}(\PB)}\\   \PA\le\id}{\rhlpar{\restrictast{\true,\true}(\PA)+\restrictast{\false,\false}(\PB)}{\while M\XX\bc}{\while N\YY\bd}{\PB}}
}

\begin{proof}
Consider any separable mixed bimemory $\rho$ as coupling of the inputs of the programs, and let $\xi_1:=\llbracket\while M\XX\bc\rrbracket(\partr{2}\rho)$ and $\xi_2:=\llbracket\while N\YY\bd\rrbracket(\partr{1}\rho)$ be the output states. Then by \autoref{lemma:qrhl.equiv.par}, we only need to find a separable subcoupling $\sigma$ of $\xi_1$ and $\xi_2$ such that $\tr\rho+\tr\sigma\ge\tr\xi_1+\tr\xi_2$ and
\begin{equation}\tr\left(\restrictast{\true,\true}(\PA)+\restrictast{\false,\false}(\PB)\right)\rho\le\mathrm{tr}B\sigma+\tr\rho-\tr\sigma.\label{equ:while2par-result}\end{equation}
We decompose $\xi_1$ and $\xi_2$ according to the semantic functions of the while loops. Let
\[
  \restrict{1,t}(\rho):=
  \paren{\opon{\Mtruefalse Mt}{\XX_1}}\rho\adj{\paren{\opon{\Mtruefalse Mt}{\XX_1}}},\quad
  \restrict{2,t}(\rho):=\paren{\opon{\Mtruefalse Nt}{\YY_2}}\rho\adj{\paren{\opon{\Mtruefalse Nt}{\YY_2}}},
\]
for $t=\true,\false$. Then $\restrict{t,t}=\restrict{1,t}\otimes\restrict{2,t}$. It is easy to verify that 
$$\xi_1=\sum_{n=0}^\infty \restrict{1,\false}\circ(\denot{\bc}\circ\restrict{1,\true})^n(\partr{2}\rho),\quad \xi_2=\sum_{n=0}^\infty\restrict{2,\false}\circ(\denot{\bd}\circ\restrict{2,\true})^n(\partr{1}\rho).$$ We write $c_n=\sum_{k=0}^n \restrict{1,\false}\circ(\denot{\bc}\circ\restrict{1,\true})^k$ and $d_n=\sum_{k=0}^n\restrict{2,\false}\circ(\denot{\bd}\circ\restrict{2,\true})^k$ for the superoperators of finite sum.

In order to prove (\ref{equ:while2par-result}), we construct a separable mixed memory $\sigma_n$ for each number $n=0,1,\cdots$, such that
\begin{equation}\begin{split}&\partr{2}\sigma_n\le c_n(\partr{2}\rho), \partr{1}\sigma_n\le d_n(\partr{1}\rho), \tr\rho+\tr\sigma_n\ge\tr c_n(\partr{2}\rho)+\tr d_n(\partr{1}\rho)\\&\tr\left(\restrictast{\true,\true}(\PA)+\restrictast{\false,\false}(\PB)\right)\rho\le\tr B\sigma_n+\tr\rho-\tr\sigma_n.\label{equ:while2par-result-finite}\end{split}\end{equation}
Then $\sigma$ can be constructed as the limit point of the sequence $\sigma_0,\sigma_1,\cdots$.

We construct by induction on $n$. For $n=0$, we can put $\sigma_0=\restrict{\true,\true}(\rho)$, then (\ref{equ:while2par-result-finite}) is easily checked for a product memory $\rho$, and further verified for general separable $\rho$  by additivity. Now assuming that the result is true for $n$, we prove for the case of $n+1$. From $\rhlpar{\PA}{\bc}{\bd}{\restrictast{\true,\true}(\PA)+\restrictast{\false,\false}(\PB)}$, we choose $\restrict{\true,\true}(\rho)$ as the (unnormalized) input coupling and construct $\rho'$ as the corresponding subcoupling of the output states, i.e.,
\begin{gather}
  \partr{2}\rho'\le\llbracket\bc\rrbracket\left(\partr{2}\restrict{\true,\true}(\rho)\right),\ \ \partr{1}\rho'\le\llbracket\bd\rrbracket\left(\partr{1}\restrict{\true,\true}(\rho)\right), \label{equ:while2par-body1}\\
\tr\restrict{\true,\true}(\rho)+\tr\rho'\ge\tr\denot{\bc}\left(\partr{2}\restrict{\true,\true}(\rho)\right)+\tr\denot{\bd}\left(\partr{1}\restrict{\true,\true}(\rho)\right)\label{equ:while2par-body2}
  \\
  \tr\PA\restrict{\true,\true}(\rho)
  \le
  \tr\pb\paren{\restrictast{\true,\true}(\PA)+\restrictast{\false,\false}(\PB)}\rho'+\tr\restrict{\true,\true}(\rho)-\tr\rho'.\label{equ:while2par-body}
\end{gather}
Furthermore, we employ the induction hypothesis on the case of $n$ for input couping $\rho'$. Let the subcoupling of output states be $\sigma'$, then
\begin{equation}\begin{split}&\partr{2}\sigma'\le c_n(\partr{2}\rho'), \partr{1}\sigma'\le d_n(\partr{1}\rho'), \tr\rho'+\tr\sigma'\ge\tr c_n(\partr{2}\rho')+\tr d_n(\partr{1}\rho')\\&\tr\left(\restrictast{\true,\true}(\PA)+\restrictast{\false,\false}(\PB)\right)\rho'\le\tr B\sigma+\tr\rho'-\tr\sigma'.\label{equ:while2par-result-n}\end{split}\end{equation}
The result (\ref{equ:while2par-result}) for the case of $n+1$ can be achieved by putting 
\begin{equation}\sigma_{n+1}=\sigma'+(c_{n+1}\otimes\restrict{2,\true})(\rho)+(\restrict{1,\true}\otimes d_{n+1})(\rho)-\restrict{\true,\true}(\rho),\label{equ:sigmaN2sigmaN+1}\end{equation}
 together with (\ref{equ:while2par-body1})-(\ref{equ:while2par-result-n}). Therefore, the result is true for every $n$.

Now, note that $\tr\sigma_n=\tr(\partr{2}\sigma_n)\le\tr c_n(\rho)\le\tr\rho$ is upper-bounded. Moreover, from (\ref{equ:sigmaN2sigmaN+1}) one can easily prove that $\sigma_n\le \sigma_{n+1}$ for every $n$ by induction.   
Thus, the sequence $\sigma_0,\sigma_1,\cdots$ converges as it is an asending chain of trace class operators with a uniform upper bound on the trace. Let $\sigma=\lim_{n\rightarrow\infty}\sigma_n$, then (\ref{equ:while2par-result}) immediately follows from (\ref{equ:while2par-result-finite})\end{proof}

\rulebox{
  \RULE{semi}{JointWhileSemi}{\rhlsemi{\PA}{\bc}{\bd}{\restrictast{\true,\true}(\PA)+\restrictast{\false,\false}(\PB)}\\   \PA\le\id}{\rhlsemi{\restrictast{\true,\true}(\PA)+\restrictast{\false,\false}(\PB)}{\while M\XX\bc}{\while N\YY\bd}{\PB}}
}

\begin{proof}
From \autoref{lemma:qrhl.pure.new}, the semipartial correctness can be characterized in a similar way of \autoref{lemma:qrhl.equiv.par}: $\rhlsemi{\PA}{\bc}{\bd}{\PB}$ if and only if for any separable mixed bimemory $\rho$ as coupling of the inputs of the programs, there exists a separable subcoupling $\sigma$ of the output states such that $\tr\PA\rho\le\tr\PB\sigma+\min\{\tr\denot{\bc}(\partr{2}\rho),\tr\denot{\bd}(\partr{1}\rho)\}-\tr\sigma$. 

Consider any separable mixed bimemory $\rho$ as coupling of the inputs of the programs, and let $\xi_1:=\llbracket\while M\XX\bc\rrbracket(\partr{2}\rho)$ and $\xi_2:=\llbracket\while N\YY\bd\rrbracket(\partr{1}\rho)$ be the output states. Then we only need to find a separable subcoupling $\sigma$ of $\xi_1$ and $\xi_2$ such that
\begin{equation}\tr\left(\restrictast{\true,\true}(\PA)+\restrictast{\false,\false}(\PB)\right)\rho\le\mathrm{tr}B\sigma+\min\{\tr\xi_1,\tr\xi_2\}-\tr\sigma.\label{equ:while2semi-result}\end{equation}
Note that $\xi_1$ and $\xi_2$ can be decomposed as 
$$\xi_1=\sum_{n=0}^\infty \restrict{1,\false}\circ(\denot{\bc}\circ\restrict{1,\true})^n(\partr{2}\rho),\quad \xi_2=\sum_{n=0}^\infty\restrict{2,\false}\circ(\denot{\bd}\circ\restrict{2,\true})^n(\partr{1}\rho).$$ We write $c_n=\sum_{k=0}^n \restrict{1,\false}\circ(\denot{\bc}\circ\restrict{1,\true})^k$ and $d_n=\sum_{k=0}^n\restrict{2,\false}\circ(\denot{\bd}\circ\restrict{2,\true})^k$ for the superoperators of finite sum.

In order to prove (\ref{equ:while2semi-result}), we construct a separable mixed memory $\sigma_n$ for each number $n=0,1,\cdots$ such that
\begin{equation}\begin{split}&\partr{2}\sigma_n\le c_n(\partr{2}\rho), \partr{1}\sigma_n\le d_n(\partr{1}\rho),\\
&\tr\left(\restrictast{\true,\true}(\PA)+\restrictast{\false,\false}(\PB)\right)\rho\le\tr B\sigma_n+\min\{\tr c_n(\partr{2}\rho),\tr d_n(\partr{1}\rho)\}-\tr\sigma_n.\label{equ:while2semi-result-finite}\end{split}\end{equation}
Then $\sigma$ can be constructed as the limit point of the sequence $\sigma_0,\sigma_1,\cdots$.

We construct by induction on $n$. For $n=0$, we can put $\sigma_0=\restrict{\true,\true}(\rho)$, then (\ref{equ:while2semi-result-finite}) is easily checked for a product memory $\rho$, and further verified for general separable $\rho$  by additivity. Now assuming that the result is true for $n$, we prove for the case of $n+1$. From $\rhlsemi{\PA}{\bc}{\bd}{\restrictast{\true,\true}(\PA)+\restrictast{\false,\false}(\PB)}$, we choose $\restrict{\true,\true}(\rho)$ as the (unnormalized) input coupling and construct $\rho'$ as the corresponding subcoupling of the output states, i.e.,
\begin{align}
  \partr{2}\rho'&\le\llbracket\bc\rrbracket\left(\partr{2}\restrict{\true,\true}(\rho)\right),\ \ \partr{1}\rho'\le\llbracket\bd\rrbracket\left(\partr{1}\restrict{\true,\true}(\rho)\right), \label{equ:while2semi-body1}\\
  \tr\PA\restrict{\true,\true}(\rho)
  &\le
  \tr\pb\paren{\restrictast{\true,\true}(\PA)+\restrictast{\false,\false}(\PB)}\rho'
  \notag\\&\qquad
  +\min\{\tr\llbracket\bc\rrbracket\left(\partr{2}\restrict{\true,\true}(\rho)\right),\tr\llbracket\bd\rrbracket\left(\partr{1}\restrict{\true,\true}(\rho)\right)\}-\tr\rho'.
  \label{equ:while2semi-body}
\end{align}
Furthermore, we employ the induction hypothesis on the case of $n$ for input couping $\rho'$. Let the subcoupling of output states be $\sigma'$, then
\begin{equation}\begin{split}&\partr{2}\sigma'\le c_n(\partr{2}\rho'), \partr{1}\sigma'\le d_n(\partr{1}\rho'),\\
&\tr\left(\restrictast{\true,\true}(\PA)+\restrictast{\false,\false}(\PB)\right)\rho'\le\tr B\sigma+\min\{\tr c_n(\partr{2}\rho'),\tr d_n(\partr{1}\rho')\}-\tr\sigma'.\label{equ:while2semi-result-n}\end{split}\end{equation}
The result (\ref{equ:while2semi-result}) for the case of $n+1$ can be achieved by putting 
\begin{equation}\sigma_{n+1}=(\sigma'+c_{n+1}\otimes\restrict){2,\true}(\rho)+(\restrict{1,\true}\otimes d_{n+1})(\rho)-\restrict{\true,\true}(\rho),\label{equ:sigmaN2sigmaN+1-new}\end{equation}
 together with (\ref{equ:while2semi-body1})-(\ref{equ:while2semi-result-n}). Therefore, the result is true for every $n$.

Now, note that $\tr\sigma_n=\tr(\partr{2}\sigma_n)\le\tr c_n(\rho)\le\tr\rho$ is upper-bounded. Moreover, from (\ref{equ:sigmaN2sigmaN+1-new}) one can easily prove that $\sigma_n\le \sigma_{n+1}$ for every $n$ by induction.   
Thus, the sequence $\sigma_0,\sigma_1,\cdots$ converges as it is an asending chain of trace class operators with a uniform upper bound on the trace. Let $\sigma=\lim_{n\rightarrow\infty}\sigma_n$, then (\ref{equ:while2semi-result}) immediately follows from (\ref{equ:while2semi-result-finite})\end{proof}

\section{Example: Quantum Zeno effect}
\label{sec:example}

\paragraph{Motivation.}
In this section, we study (one specific incarnation of) the quantum Zeno effect as an example of application of our logic. The Zeno effect
implies that the following processes have the same effect:
\begin{compactitem}
\item Start with a qubit in state $\ket0$.
  Apply a continuous rotation (with angular velocity $\omega$) to it.
  (Thus, after time $t$, the state will have rotated by angle $\omega t$.)
\item Start with a qubit in state $\ket 0$.
  Continuously observe the state. Namely, at time $t$,
  measure whether the qubit has rotated by angle $\omega t$.
\end{compactitem}
The quantum Zeno effect implies that in both processes, the state
evolves in the same way (and that the measurement in the second situation always gives
answer ``yes'').
Notice that this means that the measurements can be used to rotate the state.

In our formalization, we will consider the discrete version of this
phenomenon: The rotation is split into $n$
rotations by a small angle, and the continuous measurement consists of
$n$
measurements.  In the limit $n\to\infty$,
both processes yield the same state, but if we consider the situation for a
concrete value of $n$, the result of the processes will be slightly different.
(And the difference can be quantified in terms of $n$.)
This makes this example a prime candidate for our logic:
We want to compare two processes (hence we need relational Hoare logic),
but the processes are not exactly equivalent (hence we cannot use qRHL from \cite{qrhl})
but only close to equivalent
(and the ``amount of equivalence'' can be expressed using expectations).

\paragraph{Formalizing the processes.} We now formalize the two
processes as programs in our language. Let $n\geq1$ be an integer.

In the first process, we have a continuous rotation, broken down into $n$ small rotations.
For simplicity, we will rotate by the angle $\pi/2$ within $n$ steps, thus each small rotation
rotates by angle $\frac\pi{2n}$. This is described by the rotation matrix  $R:=
\begin{pmatrix}
  \cos \frac{\pi}{2n} & -\sin\frac{\pi}{2n} \\
  \sin \frac{\pi}{2n} & \phantom-\cos\frac{\pi}{2n}
\end{pmatrix}.
$
Let $\yy$ be a variable of type $\bit$ (i.e., the qubit that is rotated).
In order to apply the rotation $n$ times, we will need a counter $\xx$ for the while loop.
Let $\xx$ be a variable of type $\setZ$. We will have a loop that continues while (informally speaking)
$\xx<n$.
This is formalized by the projector $P_{<n}$ onto states $\ket i$ with $i<n$.
I.e., $P_{<n}:=\sum_{-\infty<i<n}\proj{\ket i}$.
In slight abuse of notation, we also write $P_{<n}$ for the binary measurement
with Kraus operators $\{P_{<n},\id-P_{<n}\}$.
Furthermore, we need to increase the counter.
For this let \symbolindexmark\incr{$\incr$} be the unitary on $\elltwo{\setZ}$ with $\incr\ket{i}\mapsto\ket{i+1}$ 
Then the program that initializes~$\yy$ with~$\ket0$ and then applies the rotation~$R$
$n$ times can be written as:
\begin{equation}
  \bc := \init\xx{\ket0};\
  \init\yy{\ket0};\
  \while{P_{<n}}\xx{
    \paren{\apply\incr\xx;\
      \apply R\yy}
  }
  \label{eq:zeno.progs.c}
\end{equation}

In the second process, instead of applying $R$,
we measure the state in each iteration of the loop. In the first
iteration, we expect the original state $\phi_0:=\ket0$,
and after the $i$-th iteration, we expect the state $\phi_i:=R\phi_{i-1}$ for $i\geq1$.
This can be done using the program $\ifte{\proj{\phi_i}}\yy\SKIP\SKIP$
where we again write in slight abuse of notation ${\proj{\phi_i}}$
for the corresponding binary measurement.
Since the if-statement first measures $\yy$ and then executes one of the $\SKIP$-branches,
this is effectively just a measurement.
We abbreviate this as $\ifskip{\proj{\phi_i}}\yy$.

However, we cannot simply write $\ifskip{\proj{\phi_i}}\yy$ in our loop body,
because $i$ should be the value of $\xx$. So we need to define the projector that projects onto $\phi_i$
when $\xx=\ket i$. This is done by the following projector on $\elltwov{\xx\yy}$:
$P_\phi:=\sum_i\proj{\ket i\otimes\phi_i}$. Then $\ifskip{P_\phi}\yy$ will measure whether
$\yy$ contains $\phi_i$ whenever $\xx$ contains $\ket i$.

Armed with that notation, we can now formulate the second process as a program:
\begin{equation}
    \bd  := \init\xx{\ket0};\
        \init\yy{\ket0};\
        \while{P_{<n}}\xx{
        \paren{\apply\incr\xx;\
        \ifskip{P_\phi}{\xx\yy}}
        }
        \label{eq:zeno.progs.d}
\end{equation}

\paragraph{Equivalence of the programs.}
We claim that the two processes, i.e., the programs $\bc,\bd$
have approximately the same final state in $\yy$.
Having the same state can be expressed using the ``quantum equality''
described in \autoref{sec:var.mem.pred}.
Specifically, the postexpectation $\opon\equal{\yy_1\yy_2}$ corresponds
to $\yy_1$ and $\yy_2$ having the same state.
For example,  $\rhltot\id\bc\bd{\opon\equal{\yy_1\yy_2}}$ implies
that the final state of $\bc$ and $\bd$ is the same
(if we trace out all variables except  $\yy_1,\yy_2$).\footnote{
  This is seen as follows: The judgment implies that the finals states are marginals of a state that is invariant under the projector $\opon\equal{\yy_1\yy_2}$, i.e., a state with support in the space $\YY_1\QUANTEQ\YY_2$.
    That means that this state is invariant under swapping $\YY_1,\YY_2$, and thus the marginals
    corresponding to $\YY_1$ and $\YY_2$ are equal.    
  
}
The fact that the final states are approximately equal can be expressed by
multiplying the preexpectation with a real number close to $1$.
Specifically, in our case we claim that
\begin{equation}
  \label{eq:zeno}
  \rhltot{\varepsilon^n\cdot\id}\bc\bd{\opon\equal{\yy_1\yy_2}}
\end{equation}
Here  $\varepsilon := \paren{\cos\frac{\pi}{2n}}^2$. This indeed means that the final states of $\bc$ and $\bd$ are the same asymptotically
since $\varepsilon^n=\paren{\cos\frac{\pi}{2n}}^{2n}\xrightarrow{n\to\infty} 1$.
\paragraph{Warm up.}
Before we prove \eqref{eq:zeno}, we investigate a simpler case as a warm up.
We investigate the special where $n=3$,
and instead of a while-loop, we simply repeat the loop body three
times.

\begin{equation}
  \label{eq:zeno3.progs}
  \arraycolsep=0pt \def\arraystretch{1.2}
  \begin{array}{lllll}
    \bc' &{}:= \init\yy{\ket0};\
          &\apply R\yy;\
          &\apply R\yy;\
          &\apply R\yy \\
    \bd' &{}:= \init\yy{\ket0};\
          &\ifskip{\proj{\phi_1}}\yy;\
          &\ifskip{\proj{\phi_2}}\yy;\
          &\ifskip{\proj{\phi_3}}\yy
  \end{array}
\end{equation}

We claim:
\begin{equation}
  \label{eq:zeno3}
  \rhltot{\varepsilon^3\cdot\id}{\bc'}{\bd'}{\opon\equal{\yy_1\yy_2}}
\end{equation}

First, we strengthen the postcondition. Let $\PA_3:=\oponp{\proj{\phi_3\otimes\phi_3}}{\yy_1\yy_2}$.
(This postcondition is intuitively what we expect to (approximately)
hold at the end of the execution.
It means that $\yy_1$ and $\yy_2$ are both in state $\phi_3$,
the result by rotating three times using $R$.
Since $\phi_3\otimes\phi_3$ is in the image of the projector $\equal$,
it follows that $\PA_3\leq\oponp\equal{\yy_1\yy_2}$.
By \ruleref{any}{Conseq} it is thus sufficient to show $  \rhltot{\varepsilon^3\cdot\id}{\bc'}{\bd'}{\PA_3}$.
And by \ruleref{any}{Seq}, we can show that  the following sequence of Hoare judgments
for some $\PA_0,\PA_1,\PA_2$:
\begin{equation}\label{eq:zeno.chain}
  \pB\braces{\varepsilon^3\cdot\id}
  \rhltotstack{\init\yy{\ket0}}{\init\yy{\ket0}}
  \pB\braces{\PA_0}
  \rhltotstack{\apply R\yy}{\ifskip{\proj{\phi_1}}\yy}
  \pB\braces{\PA_1}
  \rhltotstack{\apply R\yy}{\ifskip{\proj{\phi_2}}\yy}
  \pB\braces{\PA_2}
  \rhltotstack{\apply R\yy}{\ifskip{\proj{\phi_3}}\yy}
  \pB\braces{\PA_3}
\end{equation}
(These are four judgments, we just use a more compact notation to put them in one line.)
We will derive suitable values $\PA_0,\PA_1,\PA_2$ by applying our rules backwards from the postcondition.

By applying \ruleref{any}{Apply1}, we get
$\pb\rhltot{\PA_3'}{\apply R\yy}\SKIP{\PA_3}$
where $\PA_3':=\oponp{\adj R}{\yy_1}\sandwich{\PA_3}$ and 
where we use $A\symbolindexmark\sandwich\sandwich B$  as an abbreviation for $AB\adj A$.
And by \ruleref{any}{If2} (using \ruleref{any}{Skip} for its premises), we get
\[
  \pB\rhltot{\oponp{\proj{\phi_3}}{\yy_2}\sandwich{\PA_3'}
    + \oponp{1-\proj{\phi_3}}{\yy_2}\sandwich{\PA_3'}
  }\SKIP{\ifskip{\proj{\phi_3}}\yy}{\PA_3'}.
\]
The precondition is lower bounded by $A_2:=
{\oponp{\proj{\phi_3}}{\yy_2}\sandwich{\PA_3'}
}$.
(The second term corresponds to the measurement failing to measure $\phi_3$,
in this case all is lost anyway, so we remove that term.)
Hence (with rules~\rulerefx{any}{Seq} and~\rulerefx{any}{Conseq}), 
$\pb\rhltot{\PA_2}
{\apply R\yy}{\ifskip{\proj{\phi_3}}\yy}
{\PA_3}$ as desired in \eqref{eq:zeno.chain}.

Analogously, we can instantiate
\begin{equation*}
  \PA_1:=
\oponp{\proj{\phi_2}}{\yy_2}\sandwich
\oponp{\adj R}{\yy_1}\sandwich{\PA_2}
\quad\text{and}\quad
\PA_0:=
\oponp{\proj{\phi_1}}{\yy_2}\sandwich
\oponp{\adj R}{\yy_1}\sandwich{\PA_1}
\end{equation*}
in \eqref{eq:zeno.chain}.
We can simplify the expressions for $\PA_0,\PA_1,\PA_2$ some more.
We have
\begin{align*}
  \PA_2 &=
  \oponp{\pb\proj{\phi_3}}{\yy_2}\sandwich
          \oponp{\adj R}{\yy_1}\sandwich{\pb\oponp{\proj{\phi_3\otimes\phi_3}}{\yy_1\yy_2}}
          \\
  &=
  \opon{\pb\proj{\adj R\phi_3\otimes\proj{\phi_3}\phi_3}}{\yy_1\yy_2}
  =
  \opon{\proj{\phi_2\otimes\phi_3}}{\yy_1\yy_2}
\end{align*}
And
\begin{align*}
  \PA_1 &=
  \oponp{\pb\proj{\phi_2}}{\yy_2}\sandwich
          \oponp{\adj R}{\yy_1}\sandwich{\pb\oponp{\proj{\phi_2\otimes\phi_3}}{\yy_1\yy_2}}
          \\
  &=
  \opon{\pb\proj{\adj R\phi_2\otimes\proj{\phi_2}\phi_3}}{\yy_1\yy_2}
  =
    \varepsilon\,\opon{\proj{\phi_1\otimes\phi_2}}{\yy_1\yy_2}.
\end{align*}
(Note the slight difference:
instead of $\proj{\phi_3}\phi_3$ 
have $\proj{\phi_2}\phi_3$ here, which simplifies to 
$\phi_2\cdot\adj{\phi_2}{\phi_3}
=\phi_2\cdot\sqrt{\varepsilon}$.)
Analogously
\begin{equation*}
  \PA_0 = \varepsilon^2\, \opon{\proj{\phi_0\otimes\phi_1}}{\yy_1\yy_2}.
\end{equation*}

It is left to show the first judgment in \eqref{eq:zeno.chain}, namely
$\rhltot{\varepsilon^3\cdot\id}{\init\yy{\ket0}}{\init\yy{\ket0}}
{\PA_0}$.
By rules \rulerefx{any}{Init1} and~\rulerefx{any}{Init2} (starting from the right), we have
\begin{multline}
  \label{eq:ex.init.twice}
  \pB\braces{\varepsilon^3\cdot\id}
  \starstarrel=
  \pB\braces{\id_{\yy_2}\otimes
    \pb\paren{\bra0_{\yy_2}\otimes\id_{\negXX\yy_2}}
    \sandwich \varepsilon^2\oponp{\proj{\phi_1}}{\yy_2}}
  \,{\SKIP}\simtemplate{tot}{\init\yy{\ket0}}
  \\
  \pB\braces{\varepsilon^2\oponp{\proj{\phi_1}}{\yy_2}}
  \starrel=
  \pB\braces{\id_{\yy_1}\otimes
    \pb\paren{\bra0_{\yy_1}\otimes\id_{\negXX\yy_1}}
    \sandwich \PA_0}
  \,{\init\yy{\ket0}}\simtemplate{tot}{\SKIP}\,
  \pB\braces{\PA_0}.
\end{multline}
Here $(*)$ uses that $\phi_0=\ket0$ and thus $\bra 0\proj{\phi_0}\ket0=1$, and $(**)$ uses that $\adj{\phi_1}\phi_0=\sqrt{\varepsilon}$ and thus
$\bra0\proj{\phi_1}\ket0=\varepsilon$.

The first judgment in  \eqref{eq:zeno.chain} then follows by \ruleref{any}{Seq}.

This completes the analysis, we have shown \eqref{eq:zeno3}.

\paragraph{Analysis of the while-programs.}
Given the experiences from the analysis of the special case (the programs from \eqref{eq:zeno3.progs}),
we now can solve the original problem, namely analyzing the programs $\bc,\bd$ from
\eqref{eq:zeno.progs.c},\eqref{eq:zeno.progs.d}.

As before, we can replace the postcondition in \eqref{eq:zeno} by the stronger postcondition
$\PB:=\oponp{\proj{\ket n\otimes\ket n\otimes\phi_n\otimes\phi_n}}{\xx_1\xx_2\yy_1\yy_2}$.
By \ruleref{any}{Conseq}, it is sufficient to show
$  \rhltot{\varepsilon^n\cdot\id}\bc\bd{\PB}$. By \ruleref{any}{Seq}, this follows if we can show
\begin{equation}
  \label{eq:zeno.seq}
  \pB\braces{\varepsilon^n\cdot\id}
  \rhltotstack{\init\xx{\ket0}}{\init\xx{\ket0}}
  \pB\braces{\PD}
  \rhltotstack{\init\yy{\ket0}}{\init\yy{\ket0}}
  \pB\braces{\PC}
  \rhltotstack{\mathrm{while}_{\bc}}{
  \mathrm{while}_{\bd}
  }
  \pB\braces{\PB}
\end{equation}
with
\begin{align*}
  \mathrm{while}_{\bc}
  &:=
  \while{P_{<n}}\xx{
      \paren{\apply\incr\xx;\
        \apply R\yy}
    }
   \\ 
  \mathrm{while}_{\bd}
  &:=
    \while{P_{<n}}\xx{
      \paren{\apply\incr\xx;\
        \ifskip{P_\phi}{\xx\yy}}
    }
\end{align*}
for suitably chosen expectations $\PC$, $\PD$.

To prove the last judgment $\rhltot\PC{\mathrm{while}_{\bc}}{\mathrm{while}_{\bd}}\PB$ in \eqref{eq:zeno.seq},
we use \ruleref{any}{JointWhile}. This rule requires us to come up with a loop invariant $\PA$.
To understand what the right loop invariant is, we draw from our experiences in the special case.
There, we had defined the expectations $\PA_0,\dots,\PA_3$, where $\PA_i$ described the
state of the programs right after the $i$-th application
of $\apply R\yy$ and $\ifskip{\proj{\phi_i}}\yy$.
We had
\begin{equation*}
  \PA_i = \varepsilon^{2-i}\,\opon{\proj{\phi_i\otimes\phi_{i+1}}}{\yy_1\yy_2}
  \text{ for }i=0,1,2
  \qquad\text{and}\qquad
  \PA_3 = \opon{\proj{\phi_3\otimes\phi_3}}{\yy_1\yy_2}
\end{equation*}
One sees easily that this would generalize as
\begin{align*}
  &\PA_i = \varepsilon^{n-i-1}\,\opon{\proj{\phi_i\otimes\phi_{i+1}}}{\yy_1\yy_2}
    \qquad \text{ for }i<n
    \\
  \text{and}\qquad
  &\PA_n = \opon{\proj{\phi_n\otimes\phi_n}}{\yy_1\yy_2}
\end{align*}
for values $n\neq 3$.
Thus we expect that these expectations $\PA_i$ also hold in the programs $\mathrm{while}_{\bc}$,
$\mathrm{while}_{\bd}$ after the $i$-th iteration (or before the $(i+1)$-st iteration).
Additionally, we keep track of the counter $\xx$, which should be
$\ket i$ after the $i$-th iteration (or before the $(i+1)$-st iteration).
This would be expressed by the expectation $\opon{\proj{\ket i\otimes\ket i}}{\xx_1\xx_2}$.
Thus, for the $i$-th iteration, we use the ``conjunction''
\begin{multline*}
  \PA^{\xx}_i := \PA_i \cdot \oponp{\proj{\ket i\otimes\ket i}}{\xx_1\xx_2}
  \fullonly\\
  =
  \begin{cases}
    \varepsilon^{n-i-1}\,
    \opon{\proj{\ket i\otimes\ket i\otimes\phi_i\otimes\phi_{i+1}}}{\xx_1\xx_2\yy_1\yy_2}
    & (i<n) \\
    \phantom{\varepsilon^{n-i-1}\,}
    \opon{\proj{\ket n\otimes\ket n\otimes\phi_n\otimes\phi_{n}}}{\xx_1\xx_2\yy_1\yy_2}
    & (i=n) \\
  \end{cases}
\end{multline*}
(Note that $\cdot$ is not generally a sensible operation on expectations.
But in this case, $\fv(\PA_i)=\yy_1\yy_2$ and $\fv(\opon{\proj{\ket i\otimes\ket i}}{\xx_1\xx_2})=\xx_1\xx_2$,
so the expectations commute and their product is again an expectation.)

The final loop invariant $\PA$ is then the ``disjunction'' of the $\PA^{\xx}_i$ for $i=0,\dots,n-1$,
meaning that in every iteration, one of the $\PA_i$ should hold.
(We do not include $\PA_i^{\xx}$ with $i=n$ here because
when applying the \rulerefx{any}{JointWhile} rule,
we only need the invariant to hold when the loop guard was passed.)
We define
$  \PA := \sum_{i=0}^n \PA^\xx_i$.
(In general, summation is not a sensible operation representation of ``disjunction'',
but in the present case, all summands are orthogonal.)

We have now derived a suitable candidate for the invariant $\PA$ to use in \ruleref{any}{JointWhile}.
We stress that the above argumentation (involving words like ``disjunction'' and ``conjunction'' of
expectations, and claims that an expectation ``holds'' at a certain point) was not a formally
well-defined argument, merely an explanation how we arrived at our specific choice for $\PA$.
From the formal point of view, all we will need in the following are the definitions of $\PA,\PA_i^\xx$.
The rest of the argument above was semi-formal motivation.

We will now show the rightmost judgment in
\eqref{eq:zeno.seq}, namely $\rhltot\PC{\mathrm{while}_{\bc}}{\mathrm{while}_{\bd}}\PB$ (for some suitable $\PC$).
If we apply \ruleref{any}{JointWhile} (with $\PA$ as defined above) to this, we get the premise\footnote{We
    also additionally get the premise that ${\mathrm{while}_{\bc}}$ is terminating.
    This can be shown with techniques from prior work (e.g., \cite{LY18termination})
    and is quite obvious in the present case. Alternatively,
    we could have stated this example with respect to partial correctness instead
    of total correctness. In that case, we do not need to prove termination.    
    }
\begin{equation}
  \pB\braces{\PA}
  \rhltotstack{ \overbrace{\apply\incr\xx;\
    \apply R\yy}^{=:\mathrm{body}_{\bc}}}
  { \underbrace{\apply\incr\xx;\
    \ifskip{P_\phi}{\xx\yy}}_{=:\mathrm{body}_{\bd}}}
  \pB\braces{
    \underbrace{P^\mathrm{both}_{<n}\sandwich\PA
      + \paren{P^\mathrm{none}_{<n}}\sandwich\PB}_{=:\PC'}
  }
  \label{eq:loop.premise}
\end{equation}
with $P^\mathrm{both}_{<n}:=\opon{P_{<n}\otimes P_{<n}}{\xx_1\xx_2}$
and
$P^\mathrm{none}_{<n}:=\opon{(\id-P_{<n})\otimes(\id-P_{<n})}{\xx_1\xx_2}$.
(Here we write $A\sandwich B$ as an abbreviation for $AB\adj A$.)
By applying rules \rulerefx{any}{If2}, \rulerefx{any}{Apply2}, and twice \rulerefx{any}{Apply1}
(with \rulerefx{any}{Seq} in between), we get
\[
  \pb\rhltot{
    \oponp{\incr}{\xx_1}
    \sandwich
    \oponp{R}{\yy_1}
    \sandwich
    \oponp{\incr}{\xx_2}
    \sandwich
    \PB_2
  }
  {\mathrm{body}_{\bc}}  
  {\mathrm{body}_{\bd}}
  {\PC'}
\]
where $\PB_2:=\oponp{P_\phi}{\xx_2\yy_2}\sandwich\PC'+\oponp{\id-P_\phi}{\xx_2\yy_2}\sandwich\PC'$.
Since $\PB_2\geq \oponp{P_\phi}{\xx_1\yy_1}\sandwich\PC'$, by \ruleref{any}{Conseq} we can weaken this to
\begin{multline*}
  \pb\rhltot{\PA'}
  {\mathrm{body}_{\bc}}  
  {\mathrm{body}_{\bd}}
  {\PC'}
  \\
  \text{with}\qquad
  \PA' := {
    \underbrace{\oponp{\incr}{\xx_1}
    \sandwich
    \oponp{R}{\yy_1}
    \sandwich
    \oponp{\incr}{\xx_2}
    \sandwich
    \oponp{P_\phi}{\xx_2\yy_2}}_{=:L}{}
    \sandwich
    \PC'
  }
\end{multline*}
If we can show that $\PA\leq\PA'$ then we have proven
\eqref{eq:loop.premise}. By definition of $\PA_i^{\xx}$, $L$, $R$, $P_\phi$, $\incr$, $P^\mathrm{both}_{<n}$,
we have
\begin{align*}
  L \sandwich P^\mathrm{both}_{<n} \sandwich \PA^{\xx}_i
  &=
    \varepsilon^{n-i-1}\,
    \opon{\proj{\adj\incr\ket{i}\otimes\adj\incr\ket{i}\otimes \adj R\phi_i\otimes \proj{\phi_i}\phi_{i+1}}}{\xx_1\xx_2\yy_1\yy_2}
  \\
  &=
    \varepsilon^{n-i}\,
    \opon{\proj{\ket{i-1}\otimes\ket{i-1}\otimes\phi_{i-1}\otimes \phi_i}}{\xx_1\xx_2\yy_1\yy_2}
    = \PA^{\xx}_{i-1}.
\end{align*}
And $L\sandwich P^\mathrm{both}_{<n}\sandwich\PA^{\xx}_n=0$ since $P^\mathrm{both}_{<n}\sandwich\PA^{\xx}_n=0$.
Thus $L\sandwich P^\mathrm{both}_{<n}\sandwich\PA
=\sum_{i=0}^{n-1} A^{\xx}_{i-1}
\geq
\sum_{i=0}^{n-2} A^{\xx}_{i}
$.
And by definition of $\PB$, $L$, $R$, $P_\phi$, $\incr$, $P^\mathrm{none}_{<n}$, we have
\begin{align*}
  L \sandwich P^\mathrm{none}_{<n} \sandwich \PB
  &=
    \opon{\pb\proj{\adj\incr\ket{n}\otimes\adj\incr\ket{n}\otimes \adj R\phi_n\otimes \proj{\phi_n}\phi_{n+1}}}{\xx_1\xx_2\yy_1\yy_2}
  \\
  &=
    \opon{\pb\proj{\ket{n-1}\otimes\ket{n-1}\otimes\phi_{n-1}\otimes \phi_n}}{\xx_1\xx_2\yy_1\yy_2}
    = \PA^{\xx}_{n-1}.
\end{align*}
Thus $\PA' = L\sandwich  \PC'
\geq
\sum_{i=0}^{n-2} A^{\xx}_{i}
+
A^{\xx}_{n-1}
=
\PA.$
Thus we have proven \eqref{eq:loop.premise}.
By \ruleref{any}{JointWhile},
this implies
$\rhltot{\PC'}{\mathrm{while}_{\bc}}{\mathrm{while}_{\bd}}\PB$
with
$\PC'$ as defined in \eqref{eq:loop.premise}.
With $\PC := A^{\xx}_0 \leq \PC'$,
$\rhltot{\PC}{\mathrm{while}_{\bc}}{\mathrm{while}_{\bd}}\PB$
follows by \ruleref{any}{Conseq}. This is the rightmost judgment in \eqref{eq:zeno.seq}.

Using rules \rulerefx{any}{Init1}, \rulerefx{any}{Init2}, and~\rulerefx{any}{Seq}, we get
$\rhltot\PD{\init\yy{\ket0}}{\init\yy{\ket0}}\PC$ with
$\PD := \varepsilon^n\cdot\oponp{\proj{\ket0\otimes\ket0}}{\xx_1\xx_2}$.
(This is done very similarly to \eqref{eq:ex.init.twice}.)
This shows the middle  judgment in \eqref{eq:zeno.seq}.

Also using rules \rulerefx{any}{Init1}, \rulerefx{any}{Init2}, and~\rulerefx{any}{Seq}, we get
$\rhltot{\varepsilon^n\cdot\id}{\init\xx{\ket0}}{\init\xx{\ket0}}\PD$.
This shows the leftmost judgment in \eqref{eq:zeno.seq}.

Thus we have shown the three judgments in \eqref{eq:zeno.seq}.
By \ruleref{any}{Seq}, it follows that
$\rhltot{\varepsilon^n\cdot\id}\bc\bd\PB$.
Since $\PB\leq \oponp\equal{\yy_1\yy_2}$, by \ruleref{any}{Conseq},
we get 
\eqref{eq:zeno}.


\ifanonymous\else
\paragraph{Acknowledgments.}
 We thank Gilles Barthe, Tore Vincent Carstens, and Justin Hsu for valuable discussions. This work was supported by the Air Force Office of Scientific Research through the project ``Verification of quantum cryptography'' (AOARD Grant FA2386-17-1-4022), by institutional research funding IUT2-1 of the Estonian Ministry of Education and Research, by the Estonian Centre of Exellence in IT (EXCITE) funded by ERDF, by the project ``Research and preparation of an ERC grant application on Certified Quantum Security'' (MOBERC12), by the National Natural Science Foundation of China (Grant No: 61872342), and by the ERC consolidator grant CerQuS (819317).
\fi

\shortonly{  \bibliography{expectation-qrhl} }

\shortonly{    \delaytextnotevenwarn  \end{document}}

\shortonly\clearpage

\appendix

\onecolumn

\section{Comparison with Barthe et al.}
\label{sec:compare.theirs}

In this section, we compare our logic with the one presented in
\cite{theirs}. To make the comparison easier, when we refer to results
from \cite{theirs}, we restate them in our notation (e.g., we write
\symbolindexmark\rhltheirs{$\rhltheirs \PA\bc\bd \PB$}
instead of $\bc\sim\bd\vdash \PA\Rightarrow \PB$,
and we use our syntax for programs).

\begin{itemize}
\item The most important difference is that we define judgments with
  respect to separable couplings (see
  Definitions~\ref{def:eqrhl.first}--\ref{def:eqrhl.generic}). We
  believe that this design choice 
  makes rules in our logic easier to prove and thus makes the logic potentially more powerful.
  See \autoref{footnote:why.sep} for the reasons why using separable couplings simplifies things.

\item We allow infinite dimensional Hilbert spaces (i.e., the types of
  variables can be infinite sets such as integers or reals). In
  contrast, \cite{theirs} only allows finite dimensional Hilbert space
  (e.g., bits or bounded integers are possible but not integers).
\item
   \cite{theirs} does not behave well with
  non-terminating programs. The definition requires the
  existence of a coupling of $\denot\bc(\rho_1),\denot\bd(\rho_2)$
  for any two initial states $\rho_1,\rho_2$.
  This implies that for all (possibly unrelated) input states,
  $\bc,\bd$ need to have exactly the same termination probability.
  In contrast, we can also express Hoare logic with total/partial correctness.

\item In \cite{theirs}, not only judgments $\rhltheirs \PA\bc\bd \PB$ are
  considered, but also an extended form
  $\Sigma\vDash \rhltheirs \PA\bc\bd \PB$.  This means that
  $\rhltheirs \PA\bc\bd \PB$ holds under certain conditions on the
  coupling of the input states (e.g., separability of some of the
  variables or the fact that some variables, when measured, have the
  same outcome probability distribution).  Such conditional judgments
  occur as the result from various reasoning rules in \cite{theirs},
  e.g., \textsc{IF1} (analogue to our \rulerefx{any}{JointIf}), \textsc{LP1}
  (analogue to our \rulerefx{any}{JointWhile}), \textsc{Frame}.  These
  extended judgments make reasoning much harder as we will elaborate.
  They are not needed in our logic.
  
  In \cite{theirs}, the \textsc{SC} rule (analogue to
  \ruleref{any}{Seq} here) allows to combine judgments $\rhltheirs \PA\bc\bd \PB$
  and $\rhltheirs{\PB}{\bc'}{\bd'}{\PC}$
  into $\rhltheirs{\PA'}{\bc;\bc'}{\bd;\bd'}{\PC}$.
  However, it does not apply to judgments of the form
  $\Sigma\vdash\rhltheirs \PA\bc\bd \PB$. For this purpose, \cite{theirs}
  introduces an extended rule \textsc{SC+}:
  \[
    \RULE{their}{SC+}{
      \Sigma\vdash \rhltheirs \PA\bc\bd \PB
      \\
      \Sigma'\vdash\rhltheirs{\PB}{\bc'}{\bd'}{\PC}
      \\
      \Sigma\vDash^{(\bc,\bd)}\Sigma'
    }{
      \rhltheirs{\PA'}{\bc;\bc'}{\bd;\bd'}{\PC}
    }.
  \]
  Here $\Sigma\vDash^{(\bc,\bd)}\Sigma'$  (``coupling-entails'') means that for input states
  $\rho_1,\rho_2$ with a coupling satisfying $\Sigma$, \emph{every}
  coupling of $\denot\bc(\rho_1),\denot\bd(\rho_2)$ satisfies $\Sigma'$.

  The necessity of using this rule leads to two difficulties:
  \begin{itemize}
  \item The logic does not provide support for deriving the premise
    $\Sigma\vDash^{(\bc,\bd)}\Sigma'$. That is, the preservation of
    the side-conditions needs to be shown by direct recourse to the
    semantics of $\bc,\bd$ (i.e., to the functions
    $\denot\bc,\denot\bd$), and to the definition of
    $\vDash^{(\bc,\bd)}$. In the examples from \cite{theirs}, only
    very simple programs $\bc,\bd$ occur here and the premises
    $\Sigma\vDash^{(\bc,\bd)}\Sigma'$ are stated as obvious. However,
    if $\bc,\bd$ themselves are nontrivial programs, then proving
    $\Sigma\vDash^{(\bc,\bd)}\Sigma'$ might be infeasible.\footnote{%
      The fullversion of \cite{theirs} (Appendix A13) gives an
      algorithm for reasoning about measurability conditions (not
      separability conditions, though) by collecting constraints.
      However, that algorithm runs in exponential time in the number
      of qubits in the programs (because it directly computes the
      operators that the programs apply to the quantum state).  In
      particular, if the size of the programs we analyze is parametric
      in some parameter $n$ (e.g., if we are verifying Grover's
      algorithm on $n$ qubits where $n$ is not explicitly given), then
      we cannot apply the algorithm.

      Furthermore, in the presence of nested while-loops, the
      algorithm does not terminate. For any while-loop with
      loop-body $\bc$, the algorithm computes $\denot\bc$ explicitly
      ($P^*$ in line 15 of the algorithm) for which there is no terminating
      algorithm if $\bc$ itself contains a loop.

      Finally, no formal connection between the algorithm and
      judgments $\Sigma\vDash^{(\bc,\bd)}\Sigma'$ is given. It is
      shown that the algorithm outputs side conditions
      $\Sigma$ that makes the programs ``comparable'' (defined in
      Appendix A13), but it is not clear how comparability of
      $\bc,\bd$ implies $\Sigma\vDash^{(\bc,\bd)}\Sigma'$ for suitable
      $\Sigma,\Sigma'$. (In particular, what would be $\Sigma'$?)
      
    }
    In fact,
    whether $\Sigma'$ is satisfied might depend on how the
    probabilities in the executions of $\bc,\bd$ are related which is
    basically the kind of judgments we are trying to derive using the
    relational Hoare logic in the first place.  Thus an additional
    Hoare logic for deriving $\vDash^{(\bc,\bd)}$ would be needed.
  \item $\Sigma\vDash^{(\bc,\bd)}\Sigma'$ is actually a surprisingly
    strong requirement. This is because $\Sigma'$ needs to hold for
    \emph{all} couplings of the final states of $\bc,\bd$, not only
    the ones we actually care about in the further reasoning. For
    example, let $\bc$ be a program that assigns uniformly randomly
    $\ket0$ or $\ket1$ to $\xx$, and $\bd:=\bc$. (Such a program can
    be easily constructed from an initialization and a measurement.)
    Let $\Sigma$ be the side-condition that means that $\xx_1,\xx_2$
    are separable.  (This is a typical side-condition as considered in
    \cite{theirs}.)  Then $\Sigma\vDash^{(\bc,\bd)}\Sigma$ does not
    hold! Even though $\xx_1,\xx_2$ are assigned independent values
    (and thus intuitively should be separable), there \emph{exists} a
    coupling of $\xx_1,\xx_2$ (namely an EPR pair) that is not
    separable. Essentially, this means that \textsc{SC+} probably can
    be very rarely used when a separability condition is involved and
    the first program in the composition initializes the involved
    variables in a way that involves probabilism (e.g., a program with
    measurements).
  \end{itemize}

  Furthermore, \cite{theirs} does not provide rules \textsc{IF+} or
  \textsc{LP+} analogous to \textsc{SC+}.  The rules \textsc{IF1}
  (analogue to our \rulerefx{any}{JointIf}) and \textsc{LP1} (analogous to our
  \rulerefx{any}{JointWhile}) only apply when the premises all hold without
  side-conditions. So if we have an if-statement or a while-loop where
  the judgments concerning the then/else-branches or the loop-body
  hold only under a side-condition, we have no way of proving
  judgments about the if-statement or the while-loop, even if we want
  the latter judgments to hold only under side-conditions.

\item One-to-one comparison of rules:
  \begin{itemize}
  \item \textbf{\rulerefx{any}{Skip} vs.~\textsc{Skip}:} Same rule in both logics.
  \item \textbf{\rulerefx{any}{Init1} vs.~\textsc{Init}, \textsc{Init-L}:} Our \ruleref{any}{Init1} and the
    rule \textsc{Init-L} from \cite{theirs} are the same, except for
    the minor difference that we allow to specify the assigned state
    $\psi$.
    (In our rule the precondition for the program $\init\xx{\ket0}$
    is
    ${\id_{\xx_1}\otimes\paren{\adj\psi\otimes\id_{\negXX\xx_1}}\PA\paren{\psi\otimes\id_{\negXX\xx_1}}}$
    while in \cite{theirs} it is written as
    $\sum_{ij}\pb\paren{\ket{i}\bra0\otimes\ket
      j\bra0}\PA\pb\paren{\ket0\bra i\otimes\ket0\bra j}$ but it is
    easy to verify that both formulas evaluate to the same operator in
    the finite dimensional case.)  We did not explicitly write down a
    direct analogue to the two-sided \textsc{Init} rule from
    \cite{theirs} but we note that the two sided rule follows
    immediately by sequentially applying \rulerefx{any}{Init1} and
    \rulerefx{any}{Init2} (combined with \rulerefx{any}{Seq}).
  \item \textbf{\rulerefx{any}{Apply1} vs. \textsc{UT}, \textsc{UT-L}:} Our
    \ruleref{any}{Apply1} and the rule \textsc{UT-L} from \cite{theirs} are
    the same.\footnote{\textsc{UT-L} is stated only for the
      application of unitaries, \rulerefx{any}{Apply1} is stated more
      generally for the application of isometries. However, since
      \cite{theirs} only considers the finite dimensional case, and in
      finite dimensions unitaries and isometries coincide (for square
      matrices), this is only seemingly a difference.}
  \item \textbf{\rulerefx{any}{Seq} vs.~\textsc{SC}.}  Our \ruleref{any}{Seq}
    and the rule \textsc{SC} from \cite{theirs} are the same.
  \item \textbf{\rulerefx{any}{Conseq} vs.~\textsc{Conseq}.}  Our \ruleref{any}{Conseq}
    and the rule \textsc{Conseq} from \cite{theirs} are the same.
  \item \textbf{\rulerefx{any}{If1} vs.~\textsc{IF-L}:} Our \rulerefx{any}{If1} and the rule \textsc{IF-L} from \cite{theirs} are essentially the same, only with a slight difference on the number of branches.
  \item \textbf{\rulerefx{any}{JointIf}, \rulerefx{any}{JointIf9}
      vs.~\textsc{IF}:} Our two-sided \ruleref{any}{JointIf9} and the rule
    \textsc{IF} from \cite{theirs} are essentially the same, only with a slight difference on the number of branches. Additionally we have the \ruleref{any}{JointIf} as a consequence of \ruleref{any}{JointIf9}, which is also true in the logic of \cite{theirs}.
  \item \textbf{\rulerefx{any}{While1} vs. \textsc{LP-L}:}
    Our \ruleref{any}{While1} and the rule \textsc{LP-L} from \cite{theirs} are the same for terminating quantum loops, but the rule for non-terminating quantun loops in partial and semipartial correctness is unique in our logic.
  \item \textbf{\rulerefx{any}{JointWhile} vs. \textsc{LP}:} The comments
    made for \rulerefx{any}{While1} and \textsc{LP-L} apply for the
    two-sided loop rules as well.
    
    In addition, we stress that our \refrule{any}{JointWhile}
      does not require the termination of the while rules on both sides in the partial and semipartial case. In contrast, \textsc{LP} requires that all while loops are terminating.

      \item \textbf{\textsc{Case}:} The rule \textsc{Case} is only present
    in \cite{theirs}. It states
    \[
      \RULE{their}{Case}{
        \{p_i\}\text{ is a probability distribution}
        \\
        \rhltheirs{A_i}\bc\bd{B}
      }{
        \pB\rhltheirs{\sum p_iA_i}\bc\bd{B}
      }
    \]
    This rule might allow to do a reasoning via case distinction but
    it is not obvious how exactly it is applied. For example, if we
    try to prove $\pb\rhltheirs{\proj{\ket 0}+\proj{\ket 1}}\bc\bd{\PB}$
    by showing $\pb\rhltheirs{\proj{\ket i}}\bc\bd{\PB}$
    for $i=0,1$,
    then an application of \rulerefx{their}{Case} only gives us
    $\pb\rhltheirs{\tfrac12\pb\paren{\proj{\ket 0}+\proj{\ket
          1}}}\bc\bd{\PB}$, not
    $\pb\rhltheirs{{\proj{\ket 0}+\proj{\ket 1}}}\bc\bd{\PB}$.
    (\rulerefx{their}{Case} is not used in the examples in \cite{theirs}.)
  \item \textbf{\textsc{Frame}:}
    \cite{theirs} contains a frame rule
    \[
      \RULE{their}{Frame}{
        \rhltheirs{\PA}{\bc}{\bd}{\PB}
      }{
        \Sigma\vdash \rhltheirs{\PA\otimes\PC}{\bc}{\bd}{\PB\otimes\PC}
      }
    \]
    where $\PC$
    is an expectation operating on variables distinct from those
    $\bc,\bd,\PA,\PB$
    operate on.  Such a rule without the additional separability condition $\Sigma$ is very useful for modular analysis of
    programs (we can analyze $\bc,\bd$
    with respect to the local expectations $\PA,\PB$
    first, and then derive global statements involving $\PC$).
    Such a rule is unfortunately missing in our logic. (It is an open
    problem whether it holds in our logic.)

    However, as we mentioned previously, the \ruleref{their}{Frame} in the current form becomes useless in
    derivations due to the extra separability condition $\Sigma$. (\rulerefx{their}{Frame} is not used in the examples in \cite{theirs}.)
  \item \textbf{\rulerefx{any}{ExFalso}:} This rule is only stated in our
    paper but is trivial to prove for \cite{theirs} when additionally assuming termination of the programs.
  \end{itemize}
\end{itemize}


\fullonly{
\renewcommand\symbolindexentry[4]{
  \noindent\hbox{\hbox to 1.5in{$#2$\hfill}\parbox[t]{4.2in}{#3}\hbox to 1cm{\hfill #4}}\\[2pt]}

  {\sectionToToc
   \printsymbolindex}

  \renewenvironment{theindex}{%
    \section*{Keyword Index}%
    \parindent\z@
    \parskip\z@ \@plus .3\p@\relax
    \let\item\@idxitem
    \begin{multicols}{2}}{\end{multicols}}
  {\sectionToToc \printindex}

    {\sectionToToc  \printbibliography}

}

\end{document}
